\newcommand{\E}{\mathbb{E}}
\def\Ez#1{\mathbb{E} \left[ #1 \right]}
\newcommand {\cov}{\textrm{Cov}}
\newcommand {\var}{\textrm{Var}}
\newcommand {\QBartlett}{L}
\newcommand{\Ltwo}{L^2([0,1])}
\DeclareMathOperator{\I}{i}
\DeclareMathOperator{\cum}{cum}
\DeclareMathOperator{\sign}{sign}
\DeclareMathOperator*{\argmin}{arg\,min}
\DeclareMathOperator*{\argmax}{arg\,max}
\DeclareMathOperator{\tr}{tr}
\DeclareMathOperator{\Prob}{\mathbb{P}}
\newcommand{\D}[1]{\ensuremath{\operatorname{d}\!{#1}}}
\newcommand{\FARno}{\textbf{(FAR(1))}}
\newcommand{\FMAno}{\textbf{(FMA(4))}}
\newcommand{\Ba}{\textbf{(reg1)}}
\newcommand{\Bb}{\textbf{(reg2)}}
\newcommand{\Bc}{\textbf{(reg3)}}
\newcommand{\mathbbSigma}{\mathbb{S}}
\newcommand{\Komega}[1]{K_\omega^{(#1)}}
\newcommand\Item[1][]{%
  \ifx\relax#1\relax  \item \else \item[#1] \fi
  \abovedisplayskip=0pt\abovedisplayshortskip=0pt~\vspace*{-\baselineskip}}
\newtheorem{theorem}{Theorem}
\newtheorem{proposition}{Proposition}
\newtheorem{lemma}{Lemma}
\newcolumntype{P}[1]{>{\centering\arraybackslash}p{#1}}
\numberwithin{equation}{section}
\theoremstyle{plain}
\begin{document}

\begin{frontmatter}

\title{{\large Functional Lagged Regression with Sparse Noisy Observations}}

%\runtitle{Sparsely Observed Functional Time Series: Estimation and Prediction}

\begin{aug}
%\author{\fnms{Marie-H\'el\`ene} \snm{Descary}\ead[label=e1]{marie-helene.descary@epfl.ch}} \and
\author{\fnms{Tom{\'a}{\v s}} \snm{Rub{\'i}n}\ead[label=e1]{tomas.rubin@epfl.ch}} \and
\author{\fnms{Victor M.} \snm{Panaretos}\ead[label=e2]{victor.panaretos@epfl.ch}}

%\thankstext{t1}{Research supported by a Swiss National Science Foundation grant.}

\runauthor{T. Rub{\'i}n \& V.M. Panaretos}

\affiliation{Ecole Polytechnique F\'ed\'erale de Lausanne}

\address{Institut de Math\'ematiques\\
Ecole Polytechnique F\'ed\'erale de Lausanne\\
\printead{e1}, \printead*{e2}}

\end{aug}

\begin{abstract}
A functional (lagged) time series regression model involves the regression of scalar response time series on a time series of regressors that consists of a sequence of random functions. In practice, the underlying regressor curve time series are not always directly accessible, but are latent processes observed (sampled) only at discrete measurement locations. In this paper, we consider the so-called sparse observation scenario where only a relatively small number of measurement locations have been observed, possibly different for each curve. The measurements can be further contaminated by additive measurement error. A spectral approach to the estimation of the model dynamics is considered. The spectral density of the regressor time series and the cross-spectral density between the regressors and response time series are estimated by kernel smoothing methods from the sparse observations. The impulse response regression coefficients of the lagged regression model are then estimated by means of ridge regression (Tikhonov regularisation) or PCA regression (spectral truncation). The latent functional time series are then recovered by means of prediction, conditioning on all the observed observed data. The performance and implementation of our methods are illustrated by means of a simulation study and the analysis of meteorological data.
\end{abstract}
\begin{keyword}[class=AMS]
\kwd[Primary ]{62M10}
\kwd[; secondary ]{62M15, 60G10}
\end{keyword}

\begin{keyword}
%\kwd{Functional time-series}%words already in the title should not be repeated in keywords
%\kwd{sparse observation scheme}
\kwd{autocovariance operator}
\kwd{lagged regression}
\kwd{functional data analysis}
\kwd{nonparametric regression}
\kwd{spectral density operator}
%\kwd{Analyticity}
%\kwd{banding}
%\kwd{covariance operator}
%\kwd{functional PCA}
%\kwd{low rank}
%\kwd{resolution}
%\kwd{scale}
%\kwd{smoothing}
\end{keyword}

\end{frontmatter}

\tableofcontents

\newpage
\section{Introduction} \label{intro}

A (lagged) time series regression model is perhaps the most basic --and certainly one of the most and longest studied \citep{kolmogoroff1941interpolation,wiener1950extrapolation}-- forms of coupled analysis of two time series. In a general context, given two discrete-time stationary time series $\{X_t\}_{t\in\mathbb{Z}}$ and $\{Z_t\}_{t\in\mathbb{Z}}$, the input (or regressor) and output (or response), valued in some vector spaces $\mathcal{H}_1$ and $\mathcal{H}_2$, such a model postulates that
$$Z_t = a + \sum_{k\in\mathbb{Z}} \mathcal{B}_k X_{t-k} + e_t,\quad t\in\mathbb{Z}, $$
for some constant $a\in\mathcal{H}_2$, a sequence of random disturbances $\{e_t\}_{t\in\mathbb{Z}}$ valued in $\mathcal{H}_2$ and a sequence of linear mappings $ \mathcal{B}_k:\mathcal{H}_1\rightarrow \mathcal{H}_2,\,k\in\mathbb{Z}$. This linear coupling would be the typical dependence model, for instance, if $\{(X_t,Z_t)\}_{t\in\mathbb{Z}}$ were a jointly Gaussian stationary process in $\mathcal{H}_1\times \mathcal{H}_2$, and is also known as a time-invariant linearly filtered time series model. The estimation problem is then to estimate the unknown transformations $\{\mathcal{B}_k\}_{k\in\mathbb{Z}}$ given the realisation of a finite stretch of the joint series $\{(Z_t,X_t)\}_{t\in\mathbb{Z}}$ (a problem also known as system identification, particularly in signal processing).

This problem is very well understood and has been extensively studied in the classical context where the spaces $\mathcal{H}_j, j=1,2,$ coincides with the Euclidean spaces of potentially different dimensions \citep{brillinger1981time,priestley1981spectral,shumway2000time}. Nevertheless, generalising these results to the case where either space, and particularly the regressor space $\mathcal{H}_1$, may be an infinite dimensional vector space is far from straightforward, and has been comparatively less studied. The difficulty in this case is that one needs to manipulate operations involving the inverses of compact or even trace-class operators, which fail to exist boundedly on the entire codomain. Consequently, analysis of such models requires drawing on tools from functional analysis, and developing novel methodology that incorporates suitable regularising schemes. Such a setting has only recently been considered for functional time series regression models, for example by \citet{hormann2015estimation}, who treat the problem of estimation of the filter coefficients by means of spectral truncation regularisation (PCA regression), and \citet{pham2018methodology}, who deduce convergence rates for the estimated coefficients when using Tikhonov regularisation (ridge regression).

While this may seem to be an artificial abstraction at first sight, such infinite-dimensional time series are becoming increasingly prominent for applications. Indeed the abstraction of an infinite dimensional (Hilbert) space, captures the scenario where the value of a series at each time is a square-integrable function, e.g. a curve. Examples include time series of DNA minicircles evolving in solution (seen as a time series of closed curves in 3D indexed by discrete time \citep{tavakoli2016detecting}) or the data constructed by dividing a continuously observed scalar time series into segments of an obvious periodicity, usually days. Examples of the latter form are particularly prominent in environmental applications, for example the analysis of particulate matter atmospheric pollution
\citep{hormann2010weakly,hormann2015dynamic,hormann2018testing,aue2015prediction}, traffic data modelling \citep{klepsch2017prediction}, or financial applications of intra-day trading \citep{muller2011functional,kokoszka2017dynamic}. Another promising financial application of functional time series emerges in the yield curve modelling \citep{hays2012functional,kowal2017functional,sen2019timeseries}. 

We focus here primarily on the case $\mathcal{H}_2=\mathbb{R}$, i.e. when the response process is scalar. This is indeed the case that is most often studied in the literature, due in large part due to the many examples it covers, but also because it is the simplest version of the problem that captures the essence of higher generality: the difficulty in estimating the filter coefficients  lies in the ill-possessedness of the spectral density operator inversion that requires regularisation. For a scenario involving $\mathcal{H}_2$ being infinite dimensional, i.e. with a functional response, see \citet{hormann2015estimation} or Appendix~\ref{sec:appendix_functional_response}.

In practice, one can never observe the regressor time series $\{X_t\}_{t\in\mathbb{Z}}$ in its ``Platonic" continuum form. For example, if $\mathcal{H}_1$ is a space of functions on $[0,1]$ and each $X_t:[0,1]\rightarrow\mathbb{R}$ is a curve, then one might be able to observe evaluations of $X_t(\cdot)$ at various locations of its domain. In some cases, the sampled locations are sufficiently dense, and the measurement instrumentation sufficiently precise to be free of any additional noise contamination, so that one can disregard the effects of sampling. This is essentially the approach taken in \citet{hormann2015estimation} and \citet{pham2018methodology} where the regressors are treated as being fully observed as elements of $\mathcal{H}_1=L^2[0,1]$. However, it may well happen that $\{X_t\}_{t\in\mathbb{Z}}$ is only measured at few and randomly positioned locations in its domain, indeed varying with time, and that the measurements are themselves contaminated by noise. That is, instead of observing $X_t(u)$ for all $u$ in $[0,1]$, we instead observe
$$Y_{tj} = X_t(x_{tj}) + \epsilon_{tj}, \qquad j=1,\dots,N_t,\quad t=1,...,T.$$
for a sequence of point processes $(x_{t1},...,x_{tN_t}), t=1,\dots,T$, independent in time, and a white noise measurement error sequence. This observation scheme is illustrated on Figure \ref{fig:sampling} and is exhibited, for example, when dealing with fair weather atmospheric electricity data \citep{rubin2020sparsely,tammet2009joint}.

\begin{figure}[h]
\centering
\includegraphics[width=\textwidth]{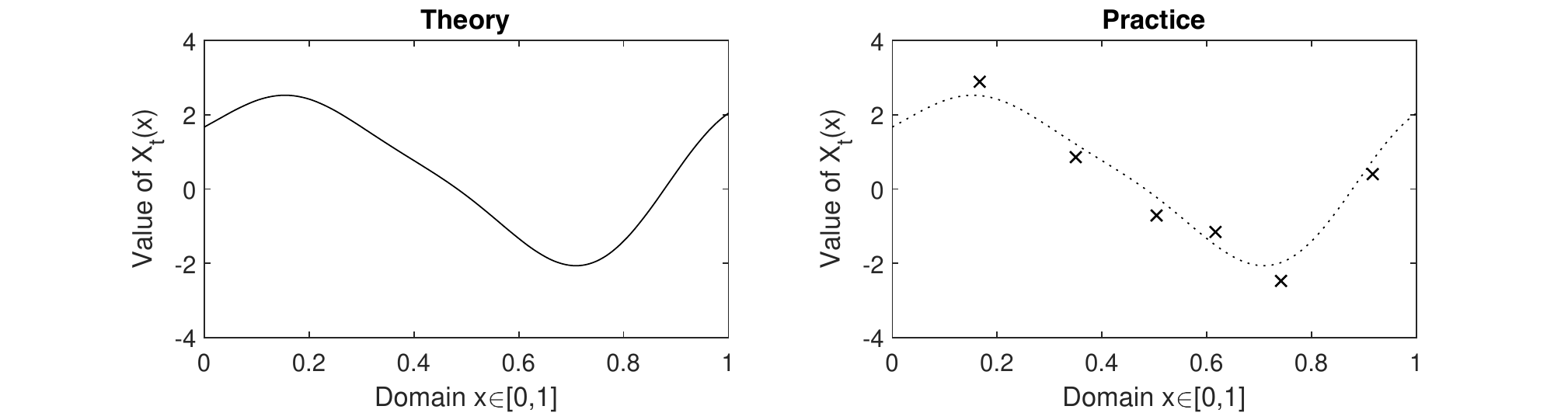}
\caption{
\textbf{Left:} The ``Theory" picture, where $X_t(x), x\in[0,1]$, is a fully observed functional datum (solid line).
\textbf{Right:} The ``Practice" picture,  where the functional datum $X_t(x), x\in[0,1],$ is latent, and one can either observe dense noiseless observations (dotted line) or sparse noisy observations (crosses) with additive noise $Y_{tj} = X_t(x_{tj}) + \epsilon_{tj}$ observed at locations $x_{tj},j=1,\dots,6$. In the dense case, one can typically behave as if the true latent function were observed. The sparse case, however, needs new tools.
}
\label{fig:sampling}
\end{figure}

Such a setting escapes both the methods and the theory developed at the level of continuum, and instead requires methodology that accounts for the observation scheme, as well as theory that incorporates the latent/emission processes explicitly. The purpose of this paper is precisely to construct such a methodology and develop the associated asymptotic theory.

Our approach consists in estimating the complete space-time covariance structure of the data by estimating the spectral density operator of the regressor time series, and the cross-spectral density operator between the regressor and response time series. 
Our estimators are based on the kernel smoothing methods as discussed for example in \citep{yao2005functional,yao2005functional_linear_regression,hall2006properties,li2010uniform} and extended to the time series case in \cite{rubin2020sparsely}. Once these are estimated, one obtains estimating equations whose solution yields the estimators of the filter coefficients. The solution of the estimating equations, however, comprises an ill-posed inverse problem, hence regularisation is required. We offer two regularisation techniques, spectral truncation regularisation \citep{hormann2015estimation} and Tikhonov regularisation \citep{pham2018methodology}. The forecasting of the response process is then implemented by first predicting the latent functional regressor data (using their estimated spectral characteristics) and then plugging-in these predictions into the estimated lagged regression model.

Sparsely observed functional time series have only recently received attention \citep{kowal2017functional,kowal2017bayesian,rubin2020sparsely,sen2019timeseries}, and our results appear to be the first in the context of the lagged regression model where the regressor process is functional. A related problem of dynamic function-on-scalar regression was studied by \citet{kowal2018dynamic} by means of Bayesian factor models.

Our presentation is organised as follows. Section \ref{sec:model} defines the framework of sparsely observed functional time series as well as the functional lagged regression model and its analysis in the spectral domain. The section explains the estimation methodology for the model components, the spectral and the cross-spectral densities, and the regularised estimator of the filter coefficients of the lagged regression. Furthermore, the forecasting algorithm is introduced. Section~\ref{sec:asymptotics} presents the asymptotic results of the proposed method: consistency and convergence rates are established. Section~\ref{sec:numerical experiments} probes the finite sample properties of the proposed methodology on a simulation study.
As a proof-of-concept illustration, we included an example featuring meteorological data in Section~\ref{sec:data analysis} of the paper.
The properties of the two considered regularisation methods are summarised in Section~\ref{sec:discussion}.
We present the extension of the proposed methodology for lagged regression models with a functional response in Appendix~\ref{sec:appendix_functional_response} and for lagged regression models with multiple regressors time series in Appendix~\ref{sec:appendix_muptiple_inputs}.
The proofs of the formal statements presented in Appendix~\ref{sec:appendix_proofs}.

%%%%%%%%%%%%%%%%%%%%%%%%%%%%%%%%%%%%%%%%%%%%%%%%%%%%%%%%%%%%%%%%%%%%%%%%%%%%%%%%%%%%%%%%%%%%%%%%%%%%%
%%%%%%%%%%%%%%%%%%%%%%%%%%%%%%%%%%%%%%%%%%%%%%%%%%%%%%%%%%%%%%%%%%%%%%%%%%%%%%%%%%%%%%%%%%%%%%%%%%%%%
%%%%%%%%%%%%%%%%%%%%%%%%%%%%%%%%%%%%%%%%%%%%%%%%%%%%%%%%%%%%%%%%%%%%%%%%%%%%%%%%%%%%%%%%%%%%%%%%%%%%%
\section{Model and estimation methodology}
\label{sec:model}

\subsection{Functional time series regression model}

The classical theory of functional data analysis views its data, random curves, as random elements in a suitable Hilbert space. Throughout our presentation we assume this space to be the Hilbert space of real square integrable function, $\mathcal{H}_1 \equiv \mathcal{H} = \Ltwo$, equipped with the inner product $\langle\cdot,\cdot\rangle$ and the norm $\|\cdot\|$. The functional time series is then a sequence of random elements in such space indexed by the integer variable $t\in\mathbb{Z}$, interpreted as times (for example days), and denoted as $\{ X_t\} \equiv \{ X_t(\cdot) \}_{t\in\mathbb{Z}}$. The argument of the individual functions, for the most part of this paper denoted as $x\in[0,1]$, can be interpreted as either the spatial location or intra-day time (for the sequence of intra-day data order day-by-day into a sequence). The basic assumption we impose is:

\begin{enumerate}[label=(A{\arabic*})]
\item\label{assumption:A.1} The functional time series $\{X_t\}$ is assumed to be square-integrable, i.e. $\E \|X_t\|^2 < \infty,\, t\in\mathbb{Z}$, and second-order stationary in the variable $t$. Furthermore, the sample paths (trajectories) of $\{X_t\}$ are smooth. 
\end{enumerate}

The functional time series $\{X_t\}$ is assumed to be only latent and therefore not accessible directly.
Denote $N_t$ the number of measurements available for the random curve at time $t=1,\dots,T$ where $T\in\mathbb{N}$ is the time of the latest curve with at least one measurement.
We denote $Y_{tj}$ the $j$-th measurement on the $t$-th curve at spatial position $x_{tj}\in[0,1]$ where $j=1,\dots,N_t$ and $t=1,\dots,T$. The measurements are assumed to be contaminated by an additive measurement errors $\epsilon_{tj},\, j=1,\dots,N_t, \, t=1,\dots,T$ which constitute an independent identically distributed zero-mean array of scalar random variables with variance $\sigma^2>0$. The number of measurement locations $\{N_t\}$, the measurement errors $\{\epsilon_{tj}\}$, and the underlying functional time series $\{X_t\}$ are assumed to be independent.
The above outlined sparse observation regime leads to the model equation
\begin{equation}\label{eq:observation_scheme}
Y_{tj} = X_t(x_{tj}) + \epsilon_{tj}, \qquad j=1,\dots,N_t,\quad t=1,...,T.
\end{equation}

The assumption \ref{assumption:A.1} allows us to define the mean function as $\mu(x) = \E X_0(x),\, x\in[0,1],$ and the lag-$h$ autocovariance kernel
$R_h^X(x,y) = \Ez{ \left(X_h(x)-\mu(x)\right)\left(X_0(y)-\mu(y)\right) }, \,x,y\in[0,1],$ for $h\in\mathbb{Z}$. The integral operator $\mathscr{R}_h^X$ associated with this kernel, called the lag-$h$ autocovariance operator, is defined by the right integration
$\left(\mathscr{R}_h^X f\right)(x) = \int_0^1 R_h^X(x,y)f(y)\D y, \, f\in \Ltwo $.
The autocovariance kernels and operators are assumed to be summable in the supremum norm (denoted $\|\cdot\|_\infty$) and the nuclear norm (denoted $\|\cdot\|_1$) respectively,
\begin{enumerate}[label=(A{\arabic*}),resume]
\Item\label{assumption:A.2}
$$
\sum_{h\in\mathbb{Z}} \|R^X_h\|_\infty =  \sum_{h\in\mathbb{Z}} \sup_{x,y\in[0,1]} |R^X_h(x,y)| < \infty ,
\qquad
\sum_{h\in\mathbb{Z}} \|\mathscr{R}^X_h\|_1 = \sum_{h\in\mathbb{Z}}  \mbox{trace}\left\{\sqrt{(\mathscr{R}^X_h)^*\mathscr{R}^X_h}\right\}< \infty.
$$
\end{enumerate}

Throughout the paper we assume that the response time series $\{Z_t\}$ is scalar. We make this assumption to facilitate the presentation while keeping in mind the fact that the scalar case already demonstrates the difficulty of the ill-posed estimation problem and the requirement of regularisation. The extension to a functional response time series is presented in Appendix~\ref{sec:appendix_muptiple_inputs}.

The functional lagged regression model with a scalar response time series $\{Z_t\}$ is given by the equation
$$
Z_t = a + \sum_{k\in\mathbb{Z}} \mathcal{B}_k X_{t-k} + e_t,\qquad t\in\mathbb{Z},
$$
where $a\in\mathbb{R}$ is a constant, called the intercept; $\mathcal{B}_k : \mathcal{H}\to\mathbb{R},\,k\in\mathbb{Z},$
are fixed continuous linear functionals called the filter coefficients, with Riesz-representers $\{b_k\}_{k\in\mathbb{Z}}$ (i.e. $\mathcal{B}_kf=\langle f,b_k\rangle$ for $f\in\mathcal{H}$); and $\{e_t\}_{t\in\mathbb{Z}}$ is a sequence of independent identically distributed zero-mean real random variables with variance $\tau^2>0$.
The time series $\{X_t\}$, the sampling counts $\{N_t\}$, the observation locations $\{x_{tj}\}$, and the errors $\{\epsilon_{tj}\}$ and $\{e_t\}$ are assumed to be independent.
The filter coefficients are assumed to be summable in the norm
\begin{enumerate}[label=(A{\arabic*}),resume]
\Item\label{assumption:A.3}
$$
\sum_{k\in\mathbb{Z}}\| \mathcal{B}_k \| < \infty.
$$
\end{enumerate}
where given any continuous linear functional $\mathcal{G}: \mathcal{H}\to\mathbb{R}$ the notation  $\| \mathcal{G} \|$ will be used interchangeably with $\| g \|$ for $g$ the Riesz-representer of $\mathcal{G}$, i.e. $\mathcal{G}f=\langle g,f\rangle$.

We make here a few simplifying assumptions concerning the first order structure of the data, to keep focus on the more challenging second order structure. Firstly, we assume that the regressor time series $\{X_t\}$ is centred, i.e. $\mu = 0$. If that was not the case, the mean function $\mu(\cdot)$ could be estimated by the local-linear kernel smoother \citep{yao2005functional,rubin2020sparsely} and all the sparse observations centred. Secondly, we assume that the intercept is null, i.e. $a = 0$. Otherwise it could be estimated using the relation $a = \Ez{Z_0} - \sum_{k\in\mathbb{Z}}\mathcal{B}_k \mu$ where $\Ez{Z_0}$ is estimated by the sample mean and the coefficients $\mathcal{B}_k$ using the methods of this paper. As a consequence of these assumptions, the response time series is also centred, i.e. $\Ez{Z_0}=0$, leading to the model
\begin{equation}
\label{eq:regression_model}
Z_t = \sum_{k\in\mathbb{Z}} \mathcal{B}_k X_{t-k} + e_t, \qquad t\in\mathbb{Z},
\quad\qquad \left( \E X_t =0,\, \E Z_t = 0 \right).
\end{equation}

Finally, we define the lag-$h$ cross-covariance kernel and the lag-$h$ cross-covariance operator between the scalar time series $\{Z_t\}$ and the functional time series $\{X_t\}$, for $h\in\mathbb{Z}$, as
$$ R^{ZX}_h(x) = \Ez{ Z_h X_0(x) }, \quad x\in[0,1], \qquad
\mathscr{R}^{ZX}_h = \Ez{ Z_h \left\langle X_0, \cdot \right\rangle } :\mathcal{H}\to\mathbb{R}.$$

\subsection{Spectral analysis of functional lagged regression}

Under the assumptions \ref{assumption:A.1} and \ref{assumption:A.2}, the sums
\begin{equation}
\label{eq:spectral_density_kernel_operator}
f^X_\omega(\cdot,\cdot) = \frac{1}{2\pi} \sum_{h\in\mathbb{Z}} R^X_h(\cdot,\cdot)e^{-\I \omega h},
\qquad
\mathscr{F}^X_\omega = \frac{1}{2\pi} \sum_{h\in\mathbb{Z}} \mathscr{R}^X_h e^{-\I \omega h},
\end{equation}
converge in the supremum and the trace-class norm respectively,
define the spectral density kernel and the spectral density operator at frequency $\omega\in[-\pi,\pi]$ respectively \citep{panaretos2013fourier}.
The spectral density operator $\mathscr{F}^X_\omega$ is a non-negative self-adjoint trace-class operator for all $\omega\in[-\pi,\pi]$. Hence it admits the spectral representation
\begin{equation}
\label{eq:spectral_density_operator_eigendecomposition}
\mathscr{F}^X_\omega = \sum_{m=1}^\infty \lambda_m^\omega \varphi_m^\omega \otimes \varphi_m^\omega
= \sum_{m=1}^\infty \lambda_m^\omega \left\langle \varphi_m^\omega, \cdot\right\rangle \varphi_m^\omega
\end{equation}
where $\otimes$ is the tensor product in $\mathcal{H}$, defined by the inner product formula on the right-hand side of \eqref{eq:spectral_density_operator_eigendecomposition}. The elements of the sequence $\lambda_1^\omega \geq \lambda_2^\omega \geq \dots \geq 0$ are called the harmonic eigenvalues and the corresponding functions $\{\varphi_m^\omega\}_{m=1}^\infty$ are called the harmonic eigenfunctions.

Furthermore, the lagged autocovariance kernels and operators can be recovered by the inversion formula \citep{panaretos2013fourier} that holds in the supremum and the nuclear norm, respectively:
\begin{equation}
\label{eq:inversion_formula_operators_kernels}
R^X_h(\cdot,\cdot) = \int_{-\pi}^\pi f^X_\omega(\cdot,\cdot) e^{\I \omega h} \D\omega,
\qquad
\mathscr{R}^X_h = \int_{-\pi}^\pi \mathscr{F}^X_\omega e^{\I \omega h} \D\omega,
\qquad h\in\mathbb{Z}.
\end{equation}

\citet{hormann2015estimation} defined the cross-spectral density operator between $\{Z_t\}_{t\in\mathbb{Z}}$ and $\{X_t\}_{t\in\mathbb{Z}}$ by the formula
\begin{equation}
\label{eq:crossspectral_density_kernel_operator}
\mathscr{F}^{ZX}_\omega = \frac{1}{2\pi} \sum_{h\in\mathbb{Z}} \mathscr{R}_h^{ZX} e^{-\I \omega h},
\qquad\omega\in[-\pi,\pi],
\end{equation}
and showed that this sum converges in the Hilbert-Schmidt norm, provided the autocovariance function $\mathscr{R}_h^X$ is summable in the Hilbert-Schmidt norm and the sequence of filter coefficient operators are also summable in the operator norm.
Since our assumptions \ref{assumption:A.1} --- \ref{assumption:A.3} are stronger and we consider only scalar response, \eqref{eq:crossspectral_density_kernel_operator} is well defined in the setting of our paper.

Likewise, one defines the cross-spectral density kernel between $\{Z_t\}_{t\in\mathbb{Z}}$ and $\{X_t\}_{t\in\mathbb{Z}}$ by
\begin{equation}
\label{eq:crossspectral_density_kernel}
f^{ZX}_\omega(\cdot) = \frac{1}{2\pi} \sum_{k\in\mathbb{Z}} R^{ZX}_h(\cdot) e^{-\I \omega h},
\qquad\omega\in[-\pi,\pi].
\end{equation}
The sum on the right-hand side of \eqref{eq:crossspectral_density_kernel} converges in the supremum norm.

Further, \cite{hormann2015estimation} introduced the frequency response operator
$$
\mathscr{B}_\omega = \sum_{h\in\mathbb{Z}} \mathcal{B}_k e^{-\I h\omega},
\qquad \omega\in [-\pi,\pi].
$$
and obtained the relation between the spectral density operators, cross-spectral density operators, and the frequency response operators
\begin{equation}
\label{eq:spectral_normal_equation}
\mathscr{F}^{ZX}_\omega = \mathscr{B}_\omega \mathscr{F}^X_\omega, \qquad\omega\in[-\pi,\pi],
\end{equation}
which provides the basis for the estimation of the filter coefficients introduced in the next section. In our case of scalar response, $\mathscr{B}_\omega$ is in fact a functional, i.e. $\mathscr{B}_\omega:\mathcal{H}\to\mathbb{R}$.
The filter coefficients $\{\mathcal{B}_k\}_{k\in\mathbb{Z}}$ can be recovered by the formula
\begin{equation}
\label{eq:inversion_formula_cross_operators_kernels}
\mathcal{B}_k = \frac{1}{2\pi} \int_{-\pi}^\pi \mathscr{B}_\omega e^{\I h\omega} \D\omega,
\qquad k\in\mathbb{Z}.
\end{equation}

\subsection{Nonparametric estimation in the lagged regression}
\label{subsec:nonparam_estimation}

The nonparametric estimation in functional models given sparse noisy measurements has been frequently based on local-polynomial kernel smoothers \citep{yao2005functional,hall2006properties,li2010uniform,rubin2020sparsely}. 
Throughout the paper we use the Epanechnikov kernel $K(v) = \frac{3}{4}(1-v^2)$ for $v\in[-1,1]$, and $0$ otherwise, because it is known to minimise the mean square error in the classical local-polynomial smoothing problem \citep[Section 3.2.6]{FanJianqing1996Lpma}.
Moreover, the Epanechnikov kernel's bounded support speeds up the numerical calculations.
Nevertheless, the choice of the kernel function is not crucial for the statistical performance of the smoothing estimators \citep[Section 3.2.6]{FanJianqing1996Lpma} and any other commonly used kernel function could be adopted.

We start by defining the ``raw'' covariances of
$\{X_t\}$ as
$ G_{h,t}^X\left( x_{t+h,j}, x_{tk} \right) = Y_{t+h,j}Y_{tk}$ for
$h=-T,\dots,T,\, j=1,\dots,N_{t+h},\,k=1,\dots,N_t$.
The lag-$0$ covariance kernel $R_0^X(\cdot,\cdot)$ is estimated by the local-linear surface smoother, setting
$\hat{R}^X_0(x,y) = \hat{c}_0^{(1)}$ where
\begin{multline*}
\left(\hat c_0^{(1)},\, \hat c_1^{(1)},\, \hat c_2^{(1)} \right) = \argmin_{c_0^{(1)},\,c_1^{(1)},\,c_2^{(1)}}
\sum_{t=1}^T
\stackrel[j\neq k]{}{\sum_{j=1}^{N_t} \sum_{k=1}^{N_t}}
K\left( \frac{x_{tj}-x }{B_R} \right)
K\left( \frac{x_{tk}-y }{B_R} \right)
\times\\\times
\left\{ G_{0,t}^X( x_{tj}, x_{tk} ) - c_0^{(1)} - c_1^{(1)}(x-x_{tj}) - c_2^{(1)}(y-x_{tk}) \right\}^2
\end{multline*}
for $x,y\in[0,1]$ and where $B_R>0$ is the bandwidth parameter for the surface smoother.

In the next step we aim to estimate the measurement error variance $\sigma^2$ by the approach suggested by \citep{yao2003shrinkage,yao2005functional} for which we need the two following ingredients: the estimator of the diagonal of the lag-$0$ covariance kernel of $\{X_t\}$ with and without the measurement noise contamination.
Firstly, we estimate the diagonal of $R^X_0$ without the measurement noise contamination by the local-quadratic smoother along the direction perpendicular to the diagonal. For $x\in[0,1]$ we set $\bar{R}_0^X(x) = \bar{c}_0$ where
\begin{multline*}
\left(c_0^{(2)},\,c_1^{(2)},\,c_2^{(2)} \right) =
\argmin_{c^{(2)}_0,\,c^{(2)}_1,\,c^{(2)}_2}
\stackrel[j\neq k]{}{\sum_{j=1}^{N_t} \sum_{k=1}^{N_t}}
K\left( \frac{x_{tj}-x }{B_R} \right)
K\left( \frac{x_{tk}-x }{B_R} \right)
\times\\\times
\left\{ G_{0,t}^X( x_{tj}, x_{tk} )
	- c_0^{(2)}
	- c_1^{(2)} \Delta(x_{tj},x_{tk})
	- c_2^{(2)} \Delta(x_{tj},x_{tk})^2
\right\}^2
\end{multline*}
where $\Delta(x_{tj},x_{tk})$ is the distance of the point $(x_{tj},x_{tk})$ from the diagonal equipped with the positive sign if the point $(x_{tj},x_{tk})$ is above the diagonal, and with the negative sign if below. Formally
$$ \Delta(x_{tj},x_{tk}) = \sign\left( x_{tk}-x_{tj}\right) \sqrt{ \left( P(x_{tj},x_{tk}) - x_{tj} \right)^2 + \left(P(x_{tj},x_{tk}) - x_{tk} \right)^2 }$$
where $\sign(\cdot)\in\{-1,0,1\}$ is the sign function and
$P(x_{tj},x_{tk})$ is the first coordinate of the point $(x_{tj},x_{tk})$ projected onto the diagonal of $[0,1]^2$.

Secondly, we estimate the function $x \mapsto R^X_0(x,x) + \sigma^2,\, x\in[0,1]$, i.e. the noise contaminated diagonal of the lag-$0$ covariance operator.
For $x\in[0,1]$ and a bandwidth parameter $B_V>0$, we use the local-linear line smoother
and set $\hat{V}^X(x) = c^{(3)}_0$ where 
$$
\left( c^{(3)}_0,\, c^{(3)}_1 \right)
= \argmin_{c^{(3)}_0,\, c^{(3)}_1} \sum_{t=1}^T \sum_{j=1}^{N_t} K\left( \frac{x_{tj}-x}{B_V} \right)
\left\{ G_{0,t}^X(x_{tj},x_{tj}) - c^{(3)}_0 - c^{(3)}_1(x-x_{tj}) \right\}^2
.
$$

Having the estimates $\bar{R}^X_0(\cdot)$ and $\hat{V}^X(\cdot)$, the measurement error variance $\sigma^2$ is estimated by integrating the difference
\begin{equation}
\label{eq:estimator_sigma2}
\hat{\sigma}^2 = \int_0^1 \left( \hat{V}^X(x) - \bar{R}^X_0(x) \right) \D x. 
\end{equation}
In case the right-hand side of \eqref{eq:estimator_sigma2} is negative, it is recommended to replace it with a small positive number \citep{yao2005functional}.

The spectral density kernels $\{f^X_\omega\}_{\omega\in[-\pi,\pi]}$ are estimated by Bartlett's approach \citep{hormann2015dynamic,rubin2020sparsely}, weighing down higher order lags using Barlett's (triangular) weights defined as
$W_h = (1-|h|/L)$ for $|h|<L$ and $0$ otherwise. The parameter $L\in\mathbb{N}$ is called Bartlett's span parameter and controls the amount of regularisation involved in the spectral density estimation.
For a fixed $\omega\in[-\pi,\pi]$, the spectral density kernel at frequency $\omega$ and $(x,y)\in[0,1]^2$ is estimated as
\begin{equation}\label{eq:estimator_spectral_density_first_formula}
\hat{f}^X_\omega(x,y) = \frac{L}{2\pi} c^{(4)}_0 \qquad\left( \in\mathbb{C} \right)
\end{equation}
where $c^{(4)}_0 \in \mathbb{C}$ is obtained by minimising the following weighted sum of squares
\begin{multline}
\label{eq:minimization-spectral-density}
\left(
	c^{(4)}_0,\, c^{(4)}_1,\, c^{(4)}_2
\right) =
\argmin_{\left(c^{(4)}_0,\,c^{(4)}_1,\,c^{(4)}_2\right) \in \mathbb{C}^3}
\sum_{h=-\QBartlett}^{\QBartlett}
\frac{1}{\mathcal{N}_h}
\sum_{t=\max(1,1-h)}^{\min(T,T-h)}
\stackrel[j\neq k \text{ if } h=0]{}{\sum_{j=1}^{N_{t+h}} \sum_{k=1}^{N_t}}
\big|
G_{h,t}^X(x_{t+h,j}, x_{tk})
e^{-\I h \omega}
-\\- c^{(4)}_0-c^{(4)}_1(x_{t+h,j} - x) - c^{(4)}_2(x_{tk} -y)\big|^2
W_h
\frac{1}{B_R^2} K\left(\frac{x_{t+h,j} - x}{B_R}\right)K\left(\frac{x_{tk} -y}{B_R} \right)
\end{multline}
where $\mathcal{N}_h = (T-|h|) (\bar{N})^2$ for $h\neq 0$, $\mathcal{N}_0 = T (\overline{N^2} - \bar{N})$, and where $\bar{N} = (1/T) \sum_{t=1}^T N_t$ and $\overline{N^2} = (1/T) \sum_{t=1}^T N_t^2$.

The spectral density estimate $\hat{f}^X_\omega(\cdot,\cdot)$ and its operator counterpart $\hat{\mathscr{F}}^X_\omega$ enable us to estimate all autocovariance kernels and operators, thus to estimate the complete second-order structure of $\{X_t\}$, by the inversion formula
\begin{equation}\label{eq:estimated_f_inversion_formula}
\hat{R}^X_h(\cdot,\cdot) = \int_{-\pi}^\pi \hat{f}^X_\omega(\cdot,\cdot) e^{\I h\omega}\D\omega,
\qquad
\hat{\mathscr{R}}^X_h = \int_{-\pi}^\pi \hat{\mathscr{F}}^X_\omega e^{\I h\omega} \D\omega.
\end{equation}

In the following paragraphs we extend these kernel smoothing techniques to estimate the cross-spectral density between the response time series $\{Z\}_{t\in\mathbb{Z}}$ and the regressor time series $\{X_t\}_{t\in\mathbb{Z}}$.

Define the ``raw'' lag-$h$ cross covariances $ G^{ZX}_{h,t}(x_{tj}) = Z_{t+h}Y_{tk} $ where $h=-T+1,\dots,T-1$, $t=\max(1,T-h),\dots,\min(T,T-h)$, and $j=1,\dots,N_t$. Using the local linear kernel smoothing techniques, we estimate the cross-spectral density at frequency $\omega\in[-\pi,\pi]$ and at $x\in[0,1]$ as
\begin{equation}
\label{eq:estimator_cross_density}
\hat{f}^{ZX}_\omega(x) = \frac{L}{2\pi} c^{(5)}_0 \qquad\left( \in\mathbb{C} \right)
\end{equation}
where $c^{(5)}_0$ is realised as the minimiser of the following weighted sum of squares
\begin{multline}
\label{eq:estimator_cross_density_minimization}
\left(
	c^{(5)}_0,\, c^{(5)}_1
\right) =
\argmin_{\left( c^{(5)}_0,\, c^{(5)}_1\right)\in\mathbb{C}^2}
\sum_{h=-L}^L
\sum_{t=\max(1,1-h)}^{\min(T,T-h)}
\sum_{j=1}^{N_t}
\frac{W_h}{B_C} K\left( \frac{x_{tj}-x}{B_C} \right)
\times\\\times
\left| G^{ZX}_{h,t}(x_{tj}) e^{-\I h\omega} - c^{(5)}_0 - c^{(5)}_1(x_{tj}-x)\right|^2
\end{multline}
where $B_C>0$ is a bandwidth parameter.

The solutions to all above least square optimisation problems can be found explicitly by using a standard argument in local-polynomial regression \citep[Section 3.1]{FanJianqing1996Lpma} or \citep[Section B.2]{rubin2020sparsely}. Moreover, the solutions to the spectral density estimator
\eqref{eq:minimization-spectral-density} and \eqref{eq:estimator_cross_density_minimization} depend on a handful of terms independent of the frequency $\omega\in[-\pi,\pi]$, that can be precalculated, and multiplication by complex exponentials. This allows a computationally feasible evaluation even on a fine grid of frequencies.

Once the estimates of the spectral density kernels $\{ \mathscr{F}^X_{\omega} \}_{\omega\in[-\pi,\pi]}$ and the cross-spectral density $\{ \mathscr{F}^{ZX}_{\omega} \}_{\omega\in[-\pi,\pi]}$ have been constructed, we focus attention on the estimation of the frequency response operators $\{\mathscr{B}_\omega\}_{\omega\in[-\pi,\pi]}$. Heuristically, from relation \eqref{eq:spectral_normal_equation}, we would like to write
$\mathscr{B}_\omega = \mathscr{F}^{ZX}_\omega \left( \mathscr{F}^X_\omega\right)^{-1},\:\omega\in[-\pi,\pi]$. This formula is indeed only heuristic because the operator $\mathscr{F}^X_\omega$, being trace class, is not boundedly invertible. The same issue is present also for its empirical counterpart 
$\hat{\mathscr{F}}^{ZX}_\omega \left( \hat{\mathscr{F}}^X_\omega\right)^{-1}$.
Therefore, to achieve consistent estimation, a regularisation of the inverse $\left( \hat{\mathscr{F}}^X_\omega\right)^{-1}$ is required.

Being a self-adjoint trace class operator, $\hat{\mathscr{F}}^X_\omega$ admits the spectral representation
$$ \hat{\mathscr{F}}^X_\omega = \sum_{j=1}^\infty \hat{\lambda}_j^\omega \hat{\varphi}_j^\omega \otimes \hat{\varphi}_j^\omega,\qquad\omega\in[-\pi,\pi],$$
which can be viewed as the empirical version of \eqref{eq:spectral_density_operator_eigendecomposition}.
The difficulty in inverting $\hat{\mathscr{F}}^X_\omega$ can be seen from the fact that $\sum_j\hat{\lambda}_j^\omega=\mathrm{trace}\{ \hat{\mathscr{F}}^X_\omega\} <\infty$, implying that $\hat{\lambda}_j^\omega$ decays at least as fast as $j^{-(1+\delta)}$, $\delta>0$. It is the small values of $\lambda_j^\omega$ that cause problems and there are two classical strategies to overcome the issue: spectral truncation and the Tikhonov regularisation.
\begin{enumerate}
%%%%%%%%%%%%%%%%%%%%%%%%%%%%%%%%%%%%%%%%%%%%%%%%%%%%%%%%%%%%%%%%%%%%%%%%%%%%%%%%%%%%%
%%   truncation regularisation
\item \textit{Spectral truncation.} The inverse  $\left( \hat{\mathscr{F}}^X_\omega\right)^{-1}$ is replaced by
$$\sum_{j=1}^{K_\omega} \frac{1}{\lambda_j^\omega} \hat{\varphi}_j^\omega \otimes \hat{\varphi}_j^\omega ,\qquad\omega\in[-\pi,\pi],$$
where $K_\omega \in\mathbb{N},\,\omega\in[-\pi,\pi],$ is the spectral truncation parameter that needs to grow to infinity sufficiently slowly to allow for the consistency. It may or may not depend on the frequency $\omega\in[-\pi,\pi]$.

The estimator of the spectral transfer function becomes
\begin{equation}
\label{eq:regularisation_spectral_transfer_truncation}
\hat{\mathscr{B}}_\omega^{trunc} =
\sum_{j=1}^{K_\omega}
\frac{1}{\lambda_j^\omega}
\left\langle
\hat{\varphi}_j^\omega,\cdot
\right\rangle
\hat{\mathscr{F}}^{ZX}_\omega  \hat{\varphi}_j^\omega,
\qquad\omega\in [-\pi,\pi].
\end{equation}
We opt to implement the spectral truncation by relying on \textit{eigenvalue thresholding} approach \citep{hormann2015estimation} where we implement the eigenvalue threshold selection by cross-validation (more in Section~\ref{sec:numerical experiments}).

%%%%%%%%%%%%%%%%%%%%%%%%%%%%%%%%%%%%%%%%%%%%%%%%%%%%%%%%%%%%%%%%%%%%%%%%%%%%%%%%%%%%%
%%   Tikhonov regularisation
\item \textit{Tikhonov regularisation.} Here, the inverse of $\hat{\mathscr{F}}^X_\omega$ is replaced by
$$
\left( \hat{\mathscr{F}}_\omega^X + \rho \mathcal{I} \right)^{-1} =
\sum_{j=1}^\infty \frac{1}{\lambda_j^\omega + \rho^\omega} \hat{\varphi}_j^\omega \otimes \hat{\varphi}_j^\omega,\qquad\omega\in[-\pi,\pi],
$$
where $\mathcal{I}$ is the identity operator on $\mathcal{H}$ and the Tikhonov regularisation parameter $\rho >0$ tends to zero as $T\to\infty$ slowly enough to allow for consistency. Even though the parameter $\rho$ may, in general, depend on $\omega$ we carry out further analysis with the frequency independent parameter. We do so because in the implementation (more in Section~\ref{sec:numerical experiments}) we select the tuning parameter $\rho$ using the cross-validation where it is feasible to optimize over a single (frequency independent) tuning parameter.

The estimator of the spectral transfer function becomes
\begin{equation}
\label{eq:regularisation_spectral_transfer_Tikhonov}
\hat{\mathscr{B}}_\omega^{Tikh} = \hat{\mathscr{F}}^{ZX}_\omega \left( \hat{\mathscr{F}}^X_\omega + \rho \mathcal{I} \right)^{-1} =
\sum_{j=1}^\infty \frac{1}{\lambda_j^\omega + \rho}
\left\langle
\hat{\varphi}_j^\omega,\cdot
\right\rangle
\hat{\mathscr{F}}^{ZX}_\omega  \hat{\varphi}_j^\omega,
\qquad\omega\in [-\pi,\pi].
\end{equation}
This form of regularisation adopted and studied by \cite{pham2018methodology}.
\end{enumerate}

Once the estimators of the spectral transfer operator $\hat{\mathscr{B}}_\omega$ have been constructed by either of the above regularisation techniques, the filter coefficients are estimated by
\begin{align}
\label{eq:fourier_truncation_filter}
\hat{\mathcal{B}}_k^{trunc} &= \frac{1}{2\pi}\int_{-\pi}^\pi \hat{\mathscr{B}}_\omega^{trunc} e^{\I \omega k}\D\omega,
\qquad k\in\mathbb{Z}, \\
\label{eq:fourier_tikhonov_filter}
\hat{\mathcal{B}}_k^{Tikh} &= \frac{1}{2\pi}\int_{-\pi}^\pi \hat{\mathscr{B}}_\omega^{Tikh} e^{\I \omega k}\D\omega,
\qquad k\in\mathbb{Z}.
\end{align}

\subsection{Forecasting the response process}
\label{subsec:forecasting_the_response_process}

In Section \ref{sec:model} we assumed the data to be available up to time $T$ for both the regressor time series $\{X_t\}$ as well as the response time series $\{Z_t\}$.
It may very well happen, though, that the measurement of the response variable is terminated at an earlier time $S$ where $1<S<T$ but the measurements of the regressor time series $\{X_t\}$ are available until time $T$. In this case it is of interest to forecast the unobserved values of $Z_{S+1},\dots,Z_{T}$.
It turns out that the forecast of $Z_{S+1},\dots,Z_{T}$ is surprisingly straightforward, once the model dynamics have been estimated.
We present our forecasting method firstly by assuming the model dynamics are known, and then we plug-in its estimates.

Since the filter coefficients satisfy the assumption \ref{assumption:A.3}, their norms converge to zero as $|k|\to\infty$. Choose a constant $M\in\mathbb{N}$ such that $\|\mathcal{B}_k\|$ is negligible for $|k|>M$.
Denote $\mathbb{X} = [X_{-M+1},\dots,X_{T+M}] \in\mathcal{H}^{T+2M}$ the random element representing ``stacked'' curves of the latent regressors time series.

The vector $\mathbb{Y} = (Y_{11},\dots,T_{1N_1}, \dots, Y_{T,1}, \dots,Y_{T,N_{T}} ) \in \mathbb{R}^{\mathscr{N}_1^{T}}$ consists of all observed data of the regressor time series where $\mathscr{N}_1^{T} = \sum_{t=1}^{T} N_t$ is the total number of observations up to time $T$.
The measurement errors $\{\epsilon_{tj}\}$ are stacked into a vector denoted $\mathcal{E}\in\mathbb{R}^{\mathscr{N}_1^{T}}$. 
Further define the evaluation operators
$H_t : \mathcal{H} \to \mathbb{R}^{N_t}, \xi\mapsto (\xi(x_{t1}),\dots,\xi(x_{tN_t})) $ for each $t=1,\dots,T$ and
the ``joint'' evaluation operator
$\mathbb{H} : \mathcal{H}^{T+2M} \to \mathbb{R}^{\mathscr{N}_1^{T}}, [\xi_{-M+1},\dots,\xi_{T+M}]\mapsto (H_1 \xi_1, \dots, H_{T} \xi_{T})$.
Hence the observation scheme \eqref{eq:observation_scheme} becomes $\mathbb{Y} = \mathbb{H} \mathbb{X} + \mathcal{E}$.

By the symbol $\Pi(\cdot| \mathbb{Y})$ we denote the best linear unbiased predictor of the term represented by the dot, given the data $\mathbb{Y}$. A key observation that simplifies the forecasting is the fact that we may predict the latent functional data first, and then plug them into the filter coefficients. The following proposition summarises this assertion formally.

\begin{proposition}
\label{prop:equivalence_of_BLUP}
The best linear unbiased predictor of $Z_s$ given $\mathbb{Y}$, denoted as $\Pi(Z_s\vert \mathbb{Y})$, is equivalent to constructing the best linear unbiased predictors of $X_t$ given $\mathbb{Y}$, denoted as $\Pi(X_t\vert \mathbb{Y})$,  for all $t\in\mathbb{Z}$ and then applying the filter coefficients $\{\mathcal{B}_k\}_{k\in\mathbb{Z}}$ to these predictions:
$$
\Pi\left(Z_s \vert \mathbb{Y} \right) =
\sum_{k\in\mathbb{Z}} \mathcal{B}_k 
\Pi\left( X_{s-k}  \vert \mathbb{Y} \right), \qquad s\in\mathbb{Z}.
$$
\end{proposition}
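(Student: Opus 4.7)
The plan is to apply the best-linear-unbiased-predictor operator $\Pi(\cdot\mid\mathbb{Y})$ to the lagged regression equation \eqref{eq:regression_model} and show that it commutes both with the infinite sum and with each bounded linear functional $\mathcal{B}_k$. Concretely, starting from
\begin{equation*}
Z_s \;=\; \sum_{k\in\mathbb{Z}} \mathcal{B}_k X_{s-k} + e_s,
\end{equation*}
I would like to conclude
\begin{equation*}
\Pi(Z_s\mid\mathbb{Y})
\;=\; \sum_{k\in\mathbb{Z}} \Pi(\mathcal{B}_k X_{s-k}\mid\mathbb{Y})
\;=\; \sum_{k\in\mathbb{Z}} \mathcal{B}_k \Pi(X_{s-k}\mid\mathbb{Y}) + \Pi(e_s\mid\mathbb{Y}),
\end{equation*}
and finish by checking that $\Pi(e_s\mid\mathbb{Y})=0$. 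Since $\{e_t\}$ is iid, mean zero, and independent of $\{X_t\}$, $\{N_t\}$, $\{x_{tj}\}$ and $\{\epsilon_{tj}\}$, the covariance of $e_s$ with every entry of $\mathbb{Y}$ is zero, so its $L^2$-projection onto the closed linear span of the coordinates of $\mathbb{Y}$ vanishes; this step is essentially immediate.

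First I would recall that $\Pi(\cdot\mid\mathbb{Y})$ is the orthogonal $L^2$-projection onto the closed linear span $\mathcal{S}\subset L^2(\Omega)$ generated by the finitely many scalar coordinates of $\mathbb{Y}$ (for scalar targets), and onto the subspace of $\mathcal{H}$-valued random elements of the form $A\mathbb{Y}$ with $A$ a linear map $\mathbb{R}^{\mathscr{N}_1^T}\to\mathcal{H}$ (for $\mathcal{H}$-valued targets). Then I would verify the swap with $\mathcal{B}_k$: writing $\Pi(X_{s-k}\mid\mathbb{Y}) = A_{s-k}\mathbb{Y}$ for some linear $A_{s-k}$, the functional $\mathcal{B}_k\Pi(X_{s-k}\mid\mathbb{Y}) = (\mathcal{B}_k\circ A_{s-k})\mathbb{Y}$ is a linear combination of the entries of $\mathbb{Y}$ and therefore lies in $\mathcal{S}$; the orthogonality characterisation
\begin{equation*}
\mathbb{E}\bigl[\langle X_{s-k} - A_{s-k}\mathbb{Y},\, f\rangle\, W\bigr] = 0
\qquad\forall\, f\in\mathcal{H},\; W\in\mathcal{S},
\end{equation*}
applied with $f = b_k$ (the Riesz representer of $\mathcal{B}_k$), yields $\mathbb{E}[(\mathcal{B}_k X_{s-k} - \mathcal{B}_k A_{s-k}\mathbb{Y})\, W] = 0$ for every $W\in\mathcal{S}$, which is precisely the defining property of $\Pi(\mathcal{B}_k X_{s-k}\mid\mathbb{Y})$. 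Hence $\mathcal{B}_k\Pi(X_{s-k}\mid\mathbb{Y}) = \Pi(\mathcal{B}_k X_{s-k}\mid\mathbb{Y})$.

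The main obstacle is justifying exchanging $\Pi(\cdot\mid\mathbb{Y})$ with the infinite sum. For this I would use assumption \ref{assumption:A.3} together with \ref{assumption:A.1}: by Cauchy--Schwarz and stationarity,
\begin{equation*}
\sum_{k\in\mathbb{Z}} \bigl(\mathbb{E}|\mathcal{B}_k X_{s-k}|^2\bigr)^{1/2}
\;\le\; \bigl(\mathbb{E}\|X_0\|^2\bigr)^{1/2} \sum_{k\in\mathbb{Z}} \|\mathcal{B}_k\| \;<\; \infty,
\end{equation*}
so the partial sums $S_M = \sum_{|k|\le M} \mathcal{B}_k X_{s-k}$ converge to $\sum_{k\in\mathbb{Z}}\mathcal{B}_k X_{s-k}$ in $L^2(\Omega)$. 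Because the scalar projection $\Pi(\cdot\mid\mathbb{Y})$ is an $L^2$-contraction, $\Pi(S_M\mid\mathbb{Y}) \to \Pi(\sum_k \mathcal{B}_k X_{s-k}\mid\mathbb{Y})$ in $L^2$; on the other hand, by linearity of the projection on finite sums and by the commutation property established above,
\begin{equation*}
\Pi(S_M\mid\mathbb{Y}) \;=\; \sum_{|k|\le M} \Pi(\mathcal{B}_k X_{s-k}\mid\mathbb{Y}) \;=\; \sum_{|k|\le M} \mathcal{B}_k \Pi(X_{s-k}\mid\mathbb{Y}),
\end{equation*}
and the right-hand side converges in $L^2$ to $\sum_{k\in\mathbb{Z}} \mathcal{B}_k \Pi(X_{s-k}\mid\mathbb{Y})$ by the same bound $\|\mathcal{B}_k\Pi(X_{s-k}\mid\mathbb{Y})\|_{L^2} \le \|\mathcal{B}_k\|\,(\mathbb{E}\|X_0\|^2)^{1/2}$ and the contractivity of $\Pi(\cdot\mid\mathbb{Y})$ on $\mathcal{H}$-valued $L^2$ elements. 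Identifying the two limits, adding $\Pi(e_s\mid\mathbb{Y})=0$, and recalling \eqref{eq:regression_model} completes the proof.
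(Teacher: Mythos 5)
Your proof is correct, and it reaches the same conclusion by the same underlying idea as the paper --- linearity of the best linear unbiased predictor in its argument, plus the fact that $e_s$ is uncorrelated with every coordinate of $\mathbb{Y}$ --- but your presentation is genuinely different and more careful. The paper's proof is a four-line computation using the explicit Gauss--Markov formula $\Pi(Z_s\vert\mathbb{Y}) = \cov(Z_s,\mathbb{Y})\left(\var(\mathbb{Y})\right)^{-1}\mathbb{Y}$, into which it substitutes the model equation and formally pushes the infinite sum and the functionals $\mathcal{B}_k$ through the covariance; it does not record why $\cov(e_s,\mathbb{Y})=0$ nor justify interchanging $\sum_{k\in\mathbb{Z}}$ with the covariance. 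You instead work with the abstract characterisation of $\Pi(\cdot\vert\mathbb{Y})$ as an orthogonal $L^2$-projection, derive the commutation $\mathcal{B}_k\Pi(X_{s-k}\vert\mathbb{Y})=\Pi(\mathcal{B}_k X_{s-k}\vert\mathbb{Y})$ from the orthogonality relations tested against the Riesz representer $b_k$, and --- this is the real added value --- justify the exchange of the projection with the infinite sum via the absolute $L^2$-summability $\sum_k\|\mathcal{B}_k\|\,(\mathbb{E}\|X_0\|^2)^{1/2}<\infty$ granted by \ref{assumption:A.1} and \ref{assumption:A.3}, together with contractivity of the projection. What the paper's route buys is brevity and a form that matches the computational formula \eqref{eq:BLUP} actually used in the forecasting algorithm; what yours buys is a rigorous treatment of the two steps the paper leaves implicit, and an argument that does not depend on $\var(\mathbb{Y})$ being invertible.
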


In the following paragraph we explain how to construct the predictors $\Pi\left( X_{t} \vert \mathbb{Y} \right),\,t=-M+1,\dots,T+M$, following \citet{rubin2020sparsely}. First, we comment on the distributional properties of the above defined terms. The random element $\mathbb{X}$ inherits the second order structure of $\{X_t\}$, thus $\E\mathbb{X} = 0$ and
\begin{equation}
\label{eq:mathbb_X_cov}
\var(\mathbb{X}) \equiv \mathbbSigma =
\begin{bmatrix}
\mathscr{R}^X_0 & \left(\mathscr{R}^X_1\right)^* & \dots & \left(\mathscr{R}^X_{T+2M-2}\right)^* & \left(\mathscr{R}^X_{T+2M-1}\right)^* \\
\mathscr{R}^X_1 & \mathscr{R}^X_0 & \dots & \left(\mathscr{R}^X_{T+2M-1}\right)^* & \left(\mathscr{R}^X_{T+2M-2}\right)^* \\
\vdots & \vdots & \ddots & \vdots & \vdots \\
\mathscr{R}^X_{T+2M-2} & \mathscr{R}^X_{T+2M-3} & \dots & \mathscr{R}^X_0 & \left(\mathscr{R}^X_1\right)^* \\
\mathscr{R}^X_{T+2M-1} & \mathscr{R}^X_{T+2M-2} & \dots & \mathscr{R}^X_{-1} & \mathscr{R}^X_0 \\
\end{bmatrix}
\end{equation}
understood as a linear operator acting on $\mathcal{H}^{T+2M}$ and where $*$ denotes the adjoint operator.
Moreover, due to independence of the measurement errors, $\var(\mathcal{E}) = \sigma^2 I_{\mathscr{N}_1^{T}}$ where $I_{\mathscr{N}_1^{T}}$ is the identity matrix of size $\mathscr{N}_1^{T} \times \mathscr{N}_1^{T}$.

The best linear unbiased predictor of $\mathbb{X}$ given $\mathbb{Y}$ is then given by
\begin{equation}\label{eq:BLUP}
\Pi(\mathbb{X} \vert \mathbb{Y}) =
\mathbbSigma \mathbb{H}^*
\left( \mathbb{H} \mathbbSigma \mathbb{H}^* + \sigma^2 I_{\mathcal{N}_1^T} \right)^{-1}
\mathbb{Y}
\quad\in\mathcal{H}^{T+2M}
\end{equation}
where the term $ \mathbb{H} \mathbbSigma \mathbb{H}^* + \sigma^2 I_{\mathcal{N}_1^T}$ is a matrix of size $\mathscr{N}_1^T\times\mathscr{N}_1^T$ and its inverse is well conditioned since $\sigma^2>0$.
The best linear predictor of each functional datum $X_t, t=-M+1,\dots,T+M$ given the observed data $\mathbb{Y}$, denoted as $\Pi(X_t\vert\mathbb{Y})$, is then given by the projection
$\Pi(X_t\vert\mathbb{Y}) = P_t \Pi( \mathbb{X}_T \vert\mathbb{Y})$ where
$P_t : \mathcal{H}^{T+2M}\to\mathcal{H}, [\xi_{-M+1},\dots,\xi_{T+M}]\mapsto \xi_t$ is the projection operator for $t=-M+1,\dots,T+M$.

The predictor \eqref{eq:BLUP} requires the knowledge of the unknown dynamics of the regressor time series through the autocovariance operators \eqref{eq:mathbb_X_cov} as well as the measurement error variance $\sigma^2$.
Instead of these parameters we plug-in the estimated counterparts \eqref{eq:estimated_f_inversion_formula} and \eqref{eq:estimator_sigma2} and denote these estimated predictors as $\hat{\Pi}(\cdot \vert \mathbb{Y})$.

In summary, the forecasting algorithm consists of the following steps:
\begin{enumerate}
\item From the measurements $\mathbb{Y}$ realised on the regressor time series $\{ X_t \}$ estimate the spectral density $\{ \hat{\mathscr{F}}^X_\omega \}_{\omega\in[-\pi,\pi]}$ and the measurement error variance $\hat{\sigma}^2$. 
Using the formula \eqref{eq:estimated_f_inversion_formula}, integrate the estimated spectral density to obtain the complete space time covariance $\{\hat{\mathscr{R}}^X_h\}_{h\in\mathbb{Z}}$ of the regressor time series $\{ X_t\}$.

\item From the measurements $\mathbb{Y}$ and the observed response times series $Z_1,\dots,Z_S$, estimate the cross-spectral density $\{\hat{\mathscr{F}}^{ZX}_\omega\}_{\omega\in[-\pi,\pi]}$. Using either truncation regularisation \eqref{eq:regularisation_spectral_transfer_truncation} or Tikhonov regularisation \eqref{eq:regularisation_spectral_transfer_Tikhonov}, estimate the spectral transfer function
$\{ \mathscr{B}_\omega \}_{\omega\in[-\pi,\pi]}$.
By means of formula \eqref{eq:fourier_truncation_filter} or \eqref{eq:fourier_tikhonov_filter}, depending on the regularisation scheme, integrate the spectral transfer function to obtain the filter coefficients $\{\hat{\mathcal{B}}_k^{trunc}\}_{k\in\mathbb{Z}}$ or $\{\hat{\mathcal{B}}_k^{Tikh}\}_{k\in\mathbb{Z}}$.
%Denote the function representing these functionals as $\{\hat{b}_k^{trunc}\}_{k\in\mathbb{Z}}$ or $\{\hat{b}_k^{Tikh}\}_{k\in\mathbb{Z}}$ respectively.

\item \label{item:algorithm_step_3}
Choose $M$ such that the estimated filter coefficients $\hat{\mathcal{B}}_{k}^{trunc}$ (or $\hat{\mathcal{B}}_{k}^{Tikh}$) are negligible for $|k|>M$.
Using the methodology explained at the beginning of this section, construct the prediction of the latent functional data $\hat{\Pi}(X_{-M+1}\vert \mathbb{Y}),\dots, \hat{\Pi}(X_{T+M}\vert \mathbb{Y})$.

\item \label{item:algorithm_step_4}
For each $s=S+1,\dots,T$, construct the forecast 
$ \hat{\Pi}(Z_s \vert \mathbb{Y}) = \sum_{k = -M}^{M} \hat{\mathcal{B}}_k^{trunc} \hat{\Pi}({X}_{s-k}\vert \mathbb{Y}) $, in the case of spectral truncation or 
$ \hat{\Pi}(Z_s \vert \mathbb{Y}) = \sum_{k = -M}^{M} \hat{\mathcal{B}}_k^{Tikh} \hat{\Pi}({X}_{s-k}\vert\mathbb{Y} ) $), in the case of Tikhonov regularisation.
\end{enumerate}

%%%%%%%%%%%%%%%%%%%%%%%%%%%%%%%%%%%%%%%%%%%%%%%%%%%%%%%%%%%%%%%%%%%%%%%%%%%%%%%%%%%%%%%%%%%%%%%%%%%%%
%%%%%%%%%%%%%%%%%%%%%%%%%%%%%%%%%%%%%%%%%%%%%%%%%%%%%%%%%%%%%%%%%%%%%%%%%%%%%%%%%%%%%%%%%%%%%%%%%%%%%
%%%%%%%%%%%%%%%%%%%%%%%%%%%%%%%%%%%%%%%%%%%%%%%%%%%%%%%%%%%%%%%%%%%%%%%%%%%%%%%%%%%%%%%%%%%%%%%%%%%%%
\section{Asymptotic results}
\label{sec:asymptotics}

In this section we turn to establishing consistency of the proposed estimators of the spectral density \eqref{eq:estimator_spectral_density_first_formula} and the cross-spectral density \eqref{eq:estimator_cross_density}, and the regularised estimators of the filter coefficients \eqref{eq:fourier_truncation_filter} and \eqref{eq:fourier_tikhonov_filter}.
A key requirement for consistency is to control the temporal dependence of $\{X_t\}$. Here we employ cumulant mixing conditions which have been successfully used in functional time series before \citep{panaretos2013fourier,rubin2020sparsely}. Alternatives to the cumulant assumptions are $L^p$-$m$-approximability \citep{hormann2010weakly,hormann2015dynamic,hormann2015estimation} and strong mixing conditions \citep{rubin2020sparsely}.

Bellow we list the assumptions that we will made use of to study the asymptotics of the spectral density estimator $\{\hat{f}_\omega^X\}_{\omega\in[-\pi,\pi]}$ and the cross-spectral density estimator $\{\hat{f}_\omega^{ZX}\}_{\omega\in[-\pi,\pi]}$.

\begin{enumerate}[label=(B{\arabic*})]
\item \label{assumption:newB.1}
The number of measurements $N_t$ at time $t$  are independent  random variables, identically distributed with a random variable $N$ satisfying $N\geq 0$, $\Ez{ N } < \infty$ and $\Prob(N>1)>0$.

\item \label{assumption:newB.2}
The measurement locations $x_{tj}, j=1,\dots,N_t, t=1,\dots,T$ are independent random variables generated from the density $g(\cdot)$ and are independent of the number of measurements $\{N_t\}_{t=1,\dots,T}$. The density $g(\cdot)$ is assumed to be twice continuously differentiable and strictly positive on $[0,1]$.

\item \label{assumption:newB.3}
The autocovariance kernels, $R_h(\cdot,\cdot)$, are twice continuously differentiable on $[0,1]^2$ for each $h\in\mathbb{Z}$.
Moreover, the terms
$$\sup_{x,y\in[0,1]} \left| \frac{\partial^2}{\partial y^{\alpha_1} \partial x^{\alpha\_2}} R^X_h(y,x) \right|$$
are uniformly bounded in $h$ for all combinations of $\alpha_1,\alpha_2 \in\mathbb{N}_0$ satisfying $\alpha_1+\alpha_2=2$.

\item \label{assumption:newB.4}
\newcommand{\xta}{X_{t_1}(x_{t_1})}
\newcommand{\xtb}{X_{t_2}(x_{t_2})}
\newcommand{\xtc}{X_{t_3}(x_{t_3})}
\newcommand{\xtd}{X_{t_4}(x_{t_4})}

The 4-th order cumulant kernel of $\{X_t\}$ is summable in the supremum norm
$$
\sum_{h_1,h_2,h_3 = -\infty}^\infty \sup_{x_1,x_2,x_3,x_4 \in [0,1]} \left|
\cum(X_{h_1}, X_{h_2}, X_{h_3}, X_0)(x_1,x_2,x_3,x_4)
\right| < \infty.
$$
where
\begin{multline*}
\cum\left(X_{t_1}, X_{t_2}, X_{t_3}, X_{t_4}\right)(x_{t_1},x_{t_2},x_{t_3},x_{t_4}) =
\Ez{ \xta\xtb\xtc\xtd }
-\\- \Ez{\xta\xtb}\Ez{\xtc\xtd}
- \Ez{\xta\xtc}\Ez{\xtb\xtd}
-\\- \Ez{\xta\xtd}\Ez{\xtb\xtc}
\end{multline*}
for $t_1,t_2,t_3,t_4 \in\mathbb{N}$ and $x_{t_1},x_{t_2},x_{t_3},x_{t_4} \in[0,1]$.

\item \label{assumption:newB.5}
The sequence $\{R^X_h\}_{h\in\mathbb{Z}}$ satisfies the weak dependence condition
$$ \sum_{h=-\infty}^\infty |h| \sup_{x,y\in[0,1]} \left| R^X_h(x,y) \right| <\infty.$$
\end{enumerate}

\noindent Finally, we state assumptions on the bandwidth parameters $B_R,B_V,B_C$ and the Bartlett span $L$:
\begin{enumerate}[label=(B{\arabic*}),resume]
\Item \label{assumption:newB.6}
$$ B_R \to 0,\qquad T B_R^6\to \infty, $$
\Item \label{assumption:newB.7}
$$ B_V \to 0,\qquad T B_V^4\to \infty, $$
\Item \label{assumption:newB.8}
$$ B_C \to 0,\qquad T B_C^4 \to \infty, $$
\Item \label{assumption:newB.9}
$$ \QBartlett \to \infty, \qquad\QBartlett = o(\sqrt{T} B_R^2),\qquad \QBartlett = o(\sqrt{T} B_C). $$

\end{enumerate}

Under the assumptions \ref{assumption:A.1} --- \ref{assumption:A.3}, \ref{assumption:newB.1} --- \ref{assumption:newB.6}, \ref{assumption:newB.9} the spectral density estimator $\{\hat{f}_\omega^X\}_{\omega\in[-\pi,\pi]}$ is consistent uniformly in the supremum norm \citep{rubin2020sparsely}. Moreover an upper bound for the convergence rate was established.
Assuming further \ref{assumption:newB.7} ensures the consistency of the measurement error variance estimator $\hat{\sigma}^2$ given by formula \eqref{eq:estimator_sigma2}.

Likewise, the estimator of the cross-spectral density $\{\hat{f}_\omega^{ZX}\}_{\omega\in[-\pi,\pi]}$ is consistent by the following proposition.

\begin{proposition}
\label{prop:asymptotics_of_f^ZX}
Under the conditions \ref{assumption:A.1} --- \ref{assumption:A.3}, \ref{assumption:newB.1} --- \ref{assumption:newB.3}, \ref{assumption:newB.8}, \ref{assumption:newB.9}, the cross-spectral density is estimated consistently: 
$$ \sup_{\omega\in[-\pi,\pi]} \sup_{x\in[0,1]}
\left|
\hat{f}^{ZX}_\omega(x) - f^{ZX}_\omega(x)
\right|
= o_p(1)
\qquad\text{as}\quad T\to\infty.
$$
Assuming further the condition \ref{assumption:newB.5}, we obtain the convergence rate:
\begin{equation}
\label{eq:asymptotics_rate_of_f^ZX}
\sup_{\omega\in[-\pi,\pi]} \sup_{x\in[0,1]}
\left|
\hat{f}^{ZX}_\omega(x) - f^{ZX}_\omega(x)
\right| =
O_p\left(L \frac{1}{\sqrt{T}}\frac{1}{B_C}\right)
\qquad\text{as}\quad T\to\infty.
\end{equation}
\end{proposition}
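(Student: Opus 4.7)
The plan is to follow the blueprint of the convergence analysis carried out for $\hat{f}^X_\omega$ in \citet{rubin2020sparsely}, adapted to the one-dimensional spatial smoothing and scalar-response setting of \eqref{eq:estimator_cross_density_minimization}. I would start from the explicit closed-form solution of the weighted least squares,
$$ c^{(5)}_0 = \mathbf{e}_1^{\top} \mathbf{M}(x)^{-1} \mathbf{v}_\omega(x), $$
where $\mathbf{M}(x)\in\mathbb{R}^{2\times 2}$ is the design matrix built from weighted moments of $(x_{tj}-x)/B_C$ and $\mathbf{v}_\omega(x)\in\mathbb{C}^2$ is the weighted response vector with entries involving $G_{h,t}^{ZX}(x_{tj})\,W_h\,e^{-\I h\omega}$. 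This exhibits $\hat{f}^{ZX}_\omega(x)$ as an explicit linear functional of the raw cross-products and makes possible the standard bias-plus-variance split handled conditionally on the sampling design.

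For the bias, taking expectations uses $\Ez{G_{h,t}^{ZX}(x_{tj})}=\Ez{Z_{t+h}X_t(x_{tj})} = R_h^{ZX}(x_{tj})$, which follows from independence of $\{\epsilon_{tj}\}$ from $(\{X_t\},\{Z_t\})$ and from $\Ez{e_t}=0$. A second-order Taylor expansion of $R^{ZX}_h$ about $x$, whose required $C^2$-smoothness is inherited from \ref{assumption:newB.3} via $R^{ZX}_h(x)=\sum_k\mathcal{B}_k \int_0^1 R^X_{h-k}(x,y)\,\mathrm{d}\mu_y$-type identities made legitimate by \ref{assumption:A.3}, absorbs the linear term into the local-linear fit and yields a per-lag bias of order $B_C^2$. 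A standard moment calculation replaces $\mathbf{M}(x)$ by its deterministic surrogate involving $g(x)$ (via \ref{assumption:newB.2}) with a negligible remainder. Summing against the Bartlett weights and comparing to $f^{ZX}_\omega(x)$ then leaves (i) a tapering bias of order $L^{-1}$ via $1-W_h=|h|/L$, and (ii) a truncation bias $\sum_{|h|\geq L}\|R^{ZX}_h\|_\infty$, both vanishing by the summability of the cross-covariances.

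For the variance, I would expand
$$ \cov\bigl(Z_{t_1+h_1}Y_{t_1 j_1},\,\overline{Z_{t_2+h_2}Y_{t_2 j_2}}\bigr) $$
using the cumulant decomposition of a fourth-order moment into three pairwise covariance products and one joint fourth cumulant. Substituting $Z_{t+h}=\sum_k\mathcal{B}_k X_{t+h-k}+e_{t+h}$ and invoking \ref{assumption:A.3}, each piece becomes an infinite linear combination of products of autocovariance kernels and of fourth cumulants of $\{X_t\}$, which are summable under \ref{assumption:newB.4}. A direct but lengthy book-keeping, localised by the kernel weights $\tfrac{1}{B_C}K(\cdot/B_C)$ and using the sampling law \ref{assumption:newB.1}--\ref{assumption:newB.2}, produces a variance bound of order $L^2/(T B_C^2)$; Markov's inequality then delivers the pointwise rate $O_p(L/(\sqrt{T}B_C))$ under \ref{assumption:newB.5}, while absent \ref{assumption:newB.5} a coarser bound still shows that the variance vanishes, yielding $o_p(1)$.

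Finally, the estimator is Lipschitz in $(\omega,x)$ with a constant polynomial in $L$ and $B_C^{-1}$: the $\omega$-derivative of $e^{-\I h\omega}$ is bounded by $|h|\leq L$, while the $x$-derivative of $K((\cdot-x)/B_C)/B_C$ contributes $B_C^{-2}$. Combined with a union bound over an $\epsilon$-net on $[-\pi,\pi]\times[0,1]$ of polynomially many points, and using \ref{assumption:newB.9} to reconcile $L$ with $\sqrt{T}B_C$, this upgrades the pointwise control to the claimed supremum rate. The main obstacle I anticipate is the variance computation: one must carefully track how the kernel localisation and Bartlett tapering interact with the cumulant sums, and it is precisely \ref{assumption:newB.5} (allowing one to trade factors of $|h|$ for summability) that keeps the order at $L^2/(TB_C^2)$ rather than producing an extra factor of $L$.
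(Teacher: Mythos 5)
Your overall architecture coincides with the paper's: both start from the explicit weighted-least-squares solution (the paper writes it as $\hat f^{ZX}_\omega = \tfrac{1}{2\pi}(Q_0^\omega S_2 - Q_1^\omega S_1)/(S_0S_2-S_1^2)$, equivalent to your $\mathbf{e}_1^{\top}\mathbf{M}(x)^{-1}\mathbf{v}_\omega(x)$), perform a per-lag bias--variance split, derive the smoothness and summability of $R_h^{ZX}$ from \ref{assumption:A.3} and \ref{assumption:newB.3} exactly as you propose (Lemma \ref{lemma_aux1}), and then sum against the Bartlett weights, with tapering and truncation biases controlled by summability of the cross-covariances. Where you genuinely diverge is in the two technical engines. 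For the stochastic term the paper does not expand fourth moments into cumulants: it represents each $S_r^{(h)}$ and $Q_r^{(h)}$ through the inverse Fourier transform of $u\mapsto K(u)u^r$, i.e.\ as $\tfrac{1}{2\pi}\int \varphi_r(v)e^{-\I vx}\zeta_r(B_C v)\,\D{v}$, which (i) separates the randomness from the dependence on $x$ and hence bounds $\E\sup_{x}|\cdot|$ directly --- no Lipschitz/$\epsilon$-net chaining over $(\omega,x)$ is needed, and uniformity in $\omega$ is free because $|e^{-\I h\omega}|=1$ and the triangle inequality over $|h|\leq L$ gives an $\omega$-free bound --- and (ii) reduces the variance to that of the stationary scalar series $W_t=\sum_j G^{ZX}_{h,t}(x_{tj})e^{\I vx_{tj}}$, bounded via the summability of its autocovariances using Lemma \ref{lemma_aux1}. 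Your chaining argument would also deliver the supremum, but it is heavier machinery for the same $L/(\sqrt{T}B_C)$ rate.

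Two concrete problems remain. First, your variance computation invokes \ref{assumption:newB.4}, which is \emph{not} among the hypotheses of the proposition; as written, your argument establishes the claim only under an additional fourth-cumulant summability assumption, whereas the paper's route is designed to avoid it. Second, you misplace the role of \ref{assumption:newB.5}: in the paper it plays no part in the variance (the stochastic term is $O_p(L/(\sqrt{T}B_C))$ with or without it); it is used solely in Lemma \ref{lemma3} to upgrade the deterministic Bartlett tapering and truncation terms, $\tfrac1L\sum_{|h|\leq L}|h|\,|M_{[Q_r^{(h)}]}|$ and $\sum_{|h|>L}|M_{[Q_r^{(h)}]}|$, from $o(1)$ (via Kronecker's lemma) to $O(1/L)$, which is precisely what converts the consistency statement into the quantitative rate. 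Your closing remark that \ref{assumption:newB.5} is what prevents ``an extra factor of $L$'' in the variance is therefore not how the hypothesis enters; if you retain the cumulant-based variance computation, you should re-examine which of \ref{assumption:newB.4} and \ref{assumption:newB.5} each step actually consumes, and reconcile that with the assumption list of the proposition.
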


The estimators of the spectral density and the cross-spectral density are essential building blocks for the estimation of the filter coefficients. 
In the following paragraphs we list the conditions for the consistency of the filter coefficients obtained via Tikhonov regularisation \eqref{eq:fourier_tikhonov_filter} and via truncation regularisation \eqref{eq:fourier_truncation_filter}.

First, the following condition is required for the regression model \eqref{eq:regression_model} to be identifiable, regardless of the regularisation method used.
\begin{enumerate}[label=(C)]
\item \label{assumption:C}
For all $\omega\in[-\pi,\pi]$ the operators $\mathscr{F}^X_\omega:\mathcal{H}\to\mathcal{H}$ satisfy $\ker\left( \mathscr{F}^X_\omega \right) = 0$.
\end{enumerate}

To ensure the consistency of the filter coefficients estimator by the Tikhonov method we only need to guarantee that the regularisation parameter vanishes slowly.
\begin{enumerate}[label=(D)]
\item \label{assumption:D}
The Tikhonov regularisation parameter satisfies
\begin{align*}
\frac{1}{\rho}   &L \frac{1}{\sqrt{T}}\frac{1}{B_C} \to 0, \qquad\text{as}\quad T\to\infty,\\
\frac{1}{\rho^2}  &L \frac{1}{\sqrt{T}}\frac{1}{B_R^2} \to 0, \qquad\text{as}\quad T\to\infty.
\end{align*}
\end{enumerate}

The following theorem establishes the consistency of the Tikhonov filter coefficient estimators.
\begin{theorem}
\label{thm:asymptotics_of_B_Tikh}
Under the conditions \ref{assumption:A.1} --- \ref{assumption:A.3}, \ref{assumption:newB.1} --- \ref{assumption:newB.6}, \ref{assumption:newB.8}, \ref{assumption:newB.9}, \ref{assumption:C}, \ref{assumption:D}, the filter coefficient estimators \eqref{eq:fourier_tikhonov_filter} constructed by means of Tikhonov regularisation are consistent in the sense that:
$$ \sup_{k\in\mathbb{Z}} \left\| \hat{\mathcal{B}}^{Tikh}_k - \mathcal{B}_k \right\| = o_p(1)
\qquad\text{as}\quad T\to\infty.
$$
\end{theorem}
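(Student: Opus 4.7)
My plan is to pass to the frequency domain, introduce an intermediate ``oracle'' Tikhonov transfer functional, and then separately control a stochastic term and a deterministic regularization bias. First I would exploit the Fourier inversion \eqref{eq:fourier_tikhonov_filter}--\eqref{eq:inversion_formula_cross_operators_kernels}: the Riesz representer of $\hat{\mathcal{B}}_k^{Tikh}-\mathcal{B}_k$ is the $k$-th Fourier coefficient of $\omega\mapsto\hat{\mathscr{B}}_\omega^{Tikh}-\mathscr{B}_\omega$, hence
$$
\sup_{k\in\mathbb{Z}}\bigl\|\hat{\mathcal{B}}_k^{Tikh}-\mathcal{B}_k\bigr\|
\;\le\;\frac{1}{2\pi}\int_{-\pi}^{\pi}\bigl\|\hat{\mathscr{B}}_\omega^{Tikh}-\mathscr{B}_\omega\bigr\|\,\D\omega,
$$
so the dependence on $k$ disappears. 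Writing $\mathscr{B}_\omega^{\rho}:=\mathscr{F}_\omega^{ZX}(\mathscr{F}_\omega^{X}+\rho\mathcal{I})^{-1}$ for the population-level Tikhonov transfer, the triangle inequality will split the integrand into a stochastic part $\hat{\mathscr{B}}_\omega^{Tikh}-\mathscr{B}_\omega^{\rho}$ and a bias $\mathscr{B}_\omega^{\rho}-\mathscr{B}_\omega$.

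For the stochastic part, the plan is to use the algebraic splitting
$$
\hat{\mathscr{B}}_\omega^{Tikh}-\mathscr{B}_\omega^{\rho}
=\bigl(\hat{\mathscr{F}}_\omega^{ZX}-\mathscr{F}_\omega^{ZX}\bigr)\bigl(\hat{\mathscr{F}}_\omega^{X}+\rho\mathcal{I}\bigr)^{-1}
+\mathscr{F}_\omega^{ZX}\Bigl[\bigl(\hat{\mathscr{F}}_\omega^{X}+\rho\mathcal{I}\bigr)^{-1}-\bigl(\mathscr{F}_\omega^{X}+\rho\mathcal{I}\bigr)^{-1}\Bigr]
$$
and apply the resolvent identity to the second summand to obtain, uniformly in $\omega$,
$$
\bigl\|\hat{\mathscr{B}}_\omega^{Tikh}-\mathscr{B}_\omega^{\rho}\bigr\|
\;\le\;\frac{1}{\rho}\bigl\|\hat{\mathscr{F}}_\omega^{ZX}-\mathscr{F}_\omega^{ZX}\bigr\|
+\frac{\|\mathscr{F}_\omega^{ZX}\|}{\rho^{2}}\bigl\|\hat{\mathscr{F}}_\omega^{X}-\mathscr{F}_\omega^{X}\bigr\|_{\infty}.
$$
Proposition~\ref{prop:asymptotics_of_f^ZX} supplies the rate $L/(\sqrt{T}B_C)$ for the first summand; the uniform rate $L/(\sqrt{T}B_R^{2})$ of \citet{rubin2020sparsely} controls the second; and $\sup_\omega\|\mathscr{F}_\omega^{ZX}\|$ is finite because $\sum_h\|\mathscr{R}_h^{ZX}\|\le(\sum_k\|\mathcal{B}_k\|)(\sum_h\|\mathscr{R}_h^{X}\|_1)<\infty$ by \ref{assumption:A.2}--\ref{assumption:A.3}. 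Condition \ref{assumption:D} is exactly calibrated to make both contributions $o_p(1)$.

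For the bias, I would use the population identity $\mathscr{F}_\omega^{ZX}=\mathscr{B}_\omega\mathscr{F}_\omega^{X}$ to rewrite
$$
\mathscr{B}_\omega^{\rho}-\mathscr{B}_\omega
=-\rho\,\mathscr{B}_\omega\bigl(\mathscr{F}_\omega^{X}+\rho\mathcal{I}\bigr)^{-1},
$$
whose Riesz representer, by self-adjointness of $\mathscr{F}_\omega^{X}$, is $-\rho(\mathscr{F}_\omega^{X}+\rho\mathcal{I})^{-1}b_\omega$, where $b_\omega$ denotes the representer of $\mathscr{B}_\omega$. At each fixed $\omega$, classical Tikhonov theory together with the injectivity \ref{assumption:C} (which renders the range of $\mathscr{F}_\omega^{X}$ dense in $\mathcal{H}$) yields $\rho(\mathscr{F}_\omega^{X}+\rho\mathcal{I})^{-1}b_\omega\to 0$ as $\rho\downarrow 0$. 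Since the norm is dominated pointwise by $\|b_\omega\|$, and $\sup_\omega\|b_\omega\|\le\sum_k\|\mathcal{B}_k\|<\infty$ by \ref{assumption:A.3}, dominated convergence on $[-\pi,\pi]$ delivers $\int\|\mathscr{B}_\omega^{\rho}-\mathscr{B}_\omega\|\D\omega\to 0$ deterministically, which combined with the stochastic bound closes the argument.

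The hard part will be the delicate trade-off dictated by \ref{assumption:D}: $\rho$ must shrink slowly enough to tame the $1/\rho$ and $1/\rho^{2}$ amplification of the smoothing errors, yet rapidly enough that the regularization bias vanishes; the two stochastic error rates, $L/(\sqrt{T}B_C)$ and $L/(\sqrt{T}B_R^{2})$, enter asymmetrically because of the squared factor from the resolvent identity, which is the true source of the two distinct conditions in \ref{assumption:D}. A subsidiary technical point is that the kernel-smoothed $\hat{\mathscr{F}}_\omega^{X}$ is not guaranteed to be positive semi-definite, so the bound $\|(\hat{\mathscr{F}}_\omega^{X}+\rho\mathcal{I})^{-1}\|\le 1/\rho$ should be justified on the high-probability event where $\hat{\mathscr{F}}_\omega^{X}$ is operator-norm-close to the self-adjoint and PSD limit $\mathscr{F}_\omega^{X}$ uniformly in $\omega$, a consequence of the uniform rates already used for the stochastic step.
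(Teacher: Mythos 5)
Your proposal is correct and follows essentially the same route as the paper's proof: reduction to $\frac{1}{2\pi}\int_{-\pi}^{\pi}\|\hat{\mathscr{B}}_\omega-\mathscr{B}_\omega\|\D\omega$, the same intermediate oracle $\mathscr{F}^{ZX}_\omega(\mathscr{F}^X_\omega+\rho\mathcal{I})^{-1}$, the same resolvent-identity splitting yielding the $1/\rho$ and $1/\rho^2$ factors matched to \ref{assumption:D}, and the same spectral-calculus treatment of the bias via \ref{assumption:C}. Your version is marginally more careful on two points the paper glosses over --- justifying $\|(\hat{\mathscr{F}}^X_\omega+\rho\mathcal{I})^{-1}\|\le 1/\rho$ when $\hat{\mathscr{F}}^X_\omega$ need not be positive, and using dominated convergence with the uniform bound $\sup_\omega\|b_\omega\|\le\sum_k\|\mathcal{B}_k\|$ to pass the pointwise-in-$\omega$ bias convergence through the integral.
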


We now turn to the truncation estimator \eqref{eq:fourier_truncation_filter} of the filter coefficients,  whose consistency requires more technical assumptions.
We use the result by \citet[Theorem 1]{hormann2015estimation} relies on having consistent estimators of the spectral density and cross-spectral density operators with a known rate of convergence, on a condition on the eigenvalue spacing, and on an assumption that the spectral truncation parameter $K_\omega$ grows sufficiently slowly. In what follows, we review their conditions and adapt them to the setting when the spectral density kernels and the cross-spectral density are estimated by the kernel smoothing methods from sparse noisy observations.

Recall the eigendecomposition of the spectral frequency operator \eqref{eq:spectral_density_operator_eigendecomposition} and that its harmonic eigenvalues and harmonic eigenfunction are denoted $\{ \lambda^\omega_k \}_{k=1}^\infty$ and $\{ \varphi^\omega_k \}_{k=1}^\infty$ respectively.
Define
\begin{align*}
\Lambda_1^\omega &= \lambda_1^\omega - \lambda_2^\omega, \\
\Lambda_k^\omega &= \min\left\{ \lambda_k^\omega-\lambda_{k+1}^\omega, \lambda_{k-1}^\omega - \lambda_k^\omega \right\}, \qquad k\geq 2.
\end{align*}
The following condition guarantees that the eigenspaces belonging to each of the eigenvalues $\{\lambda_m^\omega\}_{m=1}^\infty$ are one-dimensional, hence the eigenfunctions $\{\varphi_m^\omega\}_{m=1}^\infty$ can be identified (up to multiplication by a complex number with modulus 1).
\begin{enumerate}[label=(E{\arabic*})]
\item \label{assumption:E.1}
For all $k\geq 1$ we assume $ \inf_{\omega\in[-\pi,\pi]} \Lambda_k^\omega > 0 $.
\end{enumerate}

Furthermore we need to assume that the truncation parameter $K_\omega$ needs to grow sufficiently slowly.
\begin{enumerate}[resume,label=(E{\arabic*})]
\item \label{assumption:E.2}

$$K_\omega = \min\{ K^{(i)}, 1\leq i \leq 4 \}$$
where
\begin{align*}
K^{(1)} &= \max \left\{ k\geq 1 : \inf_{\omega\in[-\pi,\pi]} \hat{\lambda}_k \geq 2 L T^{-1/2} B_R^{-2} \right\}, \\
K^{(2)} &= \max \left\{ k\geq 1 : L T^{-1/2} B_C^{-1} \int_{-\pi}^\pi W^K_\lambda(\omega)\D\omega \leq 1 \right\}, \\
K^{(3)} &= \max \left\{ k\geq 1 : \int_{-\pi}^\pi \left( W_\lambda^k(\omega)\right)^2 \D\omega
\leq L^{-1/2} T^{1/4} B_R
\right\}, \\
K^{(4)} &= \max \left\{ k\geq 1 :\int_{-\pi}^\pi \left( W_\Lambda^k (\omega) \right)^2 \D\omega
\leq L^{-1/2} T^{1/4} B_R
\right\}
\end{align*}
and where we further define
$$ W_\lambda^k(\omega) =
\left( \sum_{m=1}^k \frac{1}{\left[ \hat{\lambda}_m^\omega \right]^2} \right)^{1/2},\qquad
W_\Lambda^k(\omega) = \left( \sum_{m=1}^k \frac{1}{\left[ \hat{\Lambda}_m^\omega \right]^2} \right)^{1/2}
$$
\end{enumerate}
and $\{\hat{\Lambda}_m^\omega\}$ are the empirical counterparts of $\{\Lambda_m^\omega\}$ where the estimates $\{\hat{\lambda}_j^\omega\}$ are plugged-in.

Under the above stated assumptions, the filter coefficient estimator \eqref{eq:fourier_truncation_filter} obtained by means of truncation regularisation is consistent.

\begin{theorem}
\label{thm:asymptotics_of_B_trunc}
Under the conditions \ref{assumption:A.1} --- \ref{assumption:A.3}, \ref{assumption:newB.1} --- \ref{assumption:newB.6}, \ref{assumption:newB.8}, \ref{assumption:newB.9}, \ref{assumption:C}, \ref{assumption:E.1}, \ref{assumption:E.2}, the filter coefficients estimates \eqref{eq:fourier_truncation_filter} constructed by the spectral truncation regularisation are consistent:
$$ \sup_{k\in\mathbb{Z}} \left\| \hat{\mathcal{B}}^{trunc}_k - \mathcal{B}_k \right\| = o_p(1)
\qquad\text{as}\quad T\to\infty.$$
\end{theorem}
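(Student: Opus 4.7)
The plan is to reduce the uniform-in-$k$ consistency of the filter coefficients to uniform-in-$\omega$ behaviour of the spectral transfer function estimator $\hat{\mathscr{B}}_\omega^{trunc}$, and then invoke the frequency-wise consistency result of \citet{hormann2015estimation}, with their abstract rates replaced by the concrete rates arising from our sparse-observations setup.

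First, let $b_k$, $\hat{b}_k^{trunc}$ denote the Riesz-representers of $\mathcal{B}_k$, $\hat{\mathcal{B}}_k^{trunc}$ and $b_\omega$, $\hat{b}_\omega^{trunc}$ those of $\mathscr{B}_\omega$, $\hat{\mathscr{B}}_\omega^{trunc}$. The Fourier inversion formulas \eqref{eq:inversion_formula_cross_operators_kernels} and \eqref{eq:fourier_truncation_filter}, together with the triangle inequality in $\mathcal{H}$ and $|e^{\I\omega k}|=1$, yield
$$\sup_{k\in\mathbb{Z}} \bigl\| \hat{\mathcal{B}}_k^{trunc} - \mathcal{B}_k \bigr\| \leq \frac{1}{2\pi}\int_{-\pi}^\pi \bigl\| \hat{\mathscr{B}}_\omega^{trunc} - \mathscr{B}_\omega \bigr\|\, \D\omega,$$
a bound independent of $k$. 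It therefore suffices to show that this integral tends to zero in probability as $T\to\infty$.

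Second, at each fixed $\omega$ I would use the decomposition
$$\hat{\mathscr{B}}_\omega^{trunc} - \mathscr{B}_\omega = \bigl(\hat{\mathscr{F}}_\omega^{ZX} - \mathscr{F}_\omega^{ZX}\bigr)(\hat{\mathscr{F}}_\omega^X)^\dagger + \mathscr{F}_\omega^{ZX}\bigl((\hat{\mathscr{F}}_\omega^X)^\dagger - (\mathscr{F}_\omega^X)^{-1}\bigr),$$
where $(\hat{\mathscr{F}}_\omega^X)^\dagger = \sum_{j=1}^{K_\omega} (\hat{\lambda}_j^\omega)^{-1} \hat{\varphi}_j^\omega \otimes \hat{\varphi}_j^\omega$ is the regularised inverse. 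The first summand is controlled via the cross-spectral density rate $L T^{-1/2} B_C^{-1}$ from Proposition~\ref{prop:asymptotics_of_f^ZX} and the operator norm of the truncated inverse. The second summand splits, by adding and subtracting a truncated version of $(\mathscr{F}_\omega^X)^{-1}$, into an eigenvalue-perturbation contribution and an eigenfunction-perturbation contribution, the latter handled by a Davis--Kahan-type inequality that introduces the eigengaps $\hat{\Lambda}_k^\omega$ appearing in \ref{assumption:E.2}. Each resulting piece is a product of an estimation error for $\hat{\mathscr{F}}_\omega^X$ (rate $L T^{-1/2} B_R^{-2}$ established in \citet{rubin2020sparsely}) or $\hat{\mathscr{F}}_\omega^{ZX}$ with spectral quantities which are exactly what the four constraints $K^{(1)},\ldots,K^{(4)}$ in \ref{assumption:E.2} are engineered to tame. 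The identifiability condition \ref{assumption:C} ensures $\mathscr{B}_\omega$ is well defined on the range of $\mathscr{F}_\omega^X$, while the eigengap condition \ref{assumption:E.1} guarantees the eigenfunctions $\varphi_k^\omega$ are identifiable. This part of the argument is essentially \citet[Theorem~1]{hormann2015estimation}, ported to our setting.

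Finally, to conclude that the integral vanishes, I would exploit the fact that Proposition~\ref{prop:asymptotics_of_f^ZX} and the supporting spectral density result of \citet{rubin2020sparsely} deliver their rates uniformly in $\omega\in[-\pi,\pi]$, and that the infimum in \ref{assumption:E.1} together with the uniform truncation constraints in \ref{assumption:E.2} (which take infima or integrals over $\omega$) provide pointwise estimation error bounds that are themselves uniform in $\omega$; integration on the bounded interval $[-\pi,\pi]$ then gives the claim. The main obstacle, and the place where the argument requires the most care, is the interplay between the frequency-dependent truncation $K_\omega$ and the need for a uniform-in-$\omega$ bound: the four-fold minimum defining $K_\omega$ in \ref{assumption:E.2} is precisely what synchronously balances the cross-spectral error (via $B_C$), the spectral density error (via $B_R^2$), and the accumulated eigenfunction-perturbation error (quantified through $W_\lambda^k$ and $W_\Lambda^k$), in a manner that is uniform across frequencies.
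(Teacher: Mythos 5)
Your proposal is correct and follows essentially the same route as the paper: reduce $\sup_k\|\hat{\mathcal{B}}_k^{trunc}-\mathcal{B}_k\|$ to the integral of $\|\hat{\mathscr{B}}_\omega^{trunc}-\mathscr{B}_\omega\|$ over $[-\pi,\pi]$, then invoke (replicate) the proof of Theorem~1 of \citet{hormann2015estimation} with the abstract rates replaced by the concrete sparse-observation rates $\psi_T^X = LT^{-1/2}B_R^{-2}$ and $\psi_T^{ZX}=LT^{-1/2}B_C^{-1}$ from \citet{rubin2020sparsely} and Proposition~\ref{prop:asymptotics_of_f^ZX}, under \ref{assumption:C}, \ref{assumption:E.1}, \ref{assumption:E.2}. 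The only point the paper makes explicit that you leave implicit is that the $L^p$-$m$-approximability assumption in \citet{hormann2015estimation} enters their argument solely through the convergence rates of the (cross-)spectral density estimators, so substituting rates derived under cumulant mixing conditions is legitimate.
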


%%%%%%%%%%%%%%%%%%%%%%%%%%%%%%%%%%%%%%%%%%%%%%%%%%%%%%%%%%%%%%%%%%%%%%%%%%%%%%%%%%%%%%%%%%%%%%%%%%%%%
%%%%%%%%%%%%%%%%%%%%%%%%%%%%%%%%%%%%%%%%%%%%%%%%%%%%%%%%%%%%%%%%%%%%%%%%%%%%%%%%%%%%%%%%%%%%%%%%%%%%%
%%%%%%%%%%%%%%%%%%%%%%%%%%%%%%%%%%%%%%%%%%%%%%%%%%%%%%%%%%%%%%%%%%%%%%%%%%%%%%%%%%%%%%%%%%%%%%%%%%%%%
\section{Numerical experiments}
\label{sec:numerical experiments}

\subsection{Simulation setting}
\label{subsec:simulation_setting}

In this simulation study we asses the performance of the proposed methodology on the basis of two criteria:
the estimation error of the filter coefficients estimator \eqref{eq:fourier_tikhonov_filter}, and 
the prediction error of the forecasts of the response process (Section \ref{subsec:forecasting_the_response_process}).
We also compare the performance of the two regularisation techniques, and corroborate that neither dominates the other. To illustrate this, we introduce two different filter coefficient function for the lagged regression, see \eqref{eq:shape_A} and \eqref{eq:shape_B}, in which one technique is expected to perform better than the other, and vice versa.

The MATLAB code and the results of the simulation are openly available on GitHub \citep{rubin2019github_lagged_regression}.

We simulate the functional regressor series $\{X_t\}_{t\in\mathbb{Z}}$ as functional linear processes: either a  functional autoregressive process or a functional moving average process.
We define $\{E_t\}_{t\in\mathbb{Z}}$ to be the stochastic innovation term for these linear processes.
It is assumed to be a sequence of independent identically distributed zero-mean Gaussian random variables in $\mathcal{H}= \Ltwo$ with the covariance kernel $K(x,y)$ given by
\begin{align}
\label{eq:covariance_kernel_in_simulations}
K(x,y) =
           &\sin(  2\pi x )\sin(  2\pi y ) +\\ \nonumber
    + 0.6  &\cos(  2\pi x )\cos(  2\pi y ) +\\ \nonumber
    + 0.3  &\sin(  4\pi x )\sin(  4\pi y ) +\\ \nonumber
    + 0.1  &\cos(  4\pi x )\cos(  4\pi y ) +\\ \nonumber
    + 0.1  &\sin(  6\pi x )\sin(  6\pi y ) +\\ \nonumber
    + 0.1  &\cos(  6\pi x )\cos(  6\pi y ) +\\ \nonumber
    + 0.05 &\sin(  8\pi x )\sin(  8\pi y ) +\\ \nonumber
    + 0.05 &\cos(  8\pi x )\cos(  8\pi y ) +\\ \nonumber
    + 0.05 &\sin( 10\pi x )\sin( 10\pi y ) +\\ \nonumber
    + 0.05 &\cos( 10\pi x )\cos( 10\pi y ), \qquad x,y\in[0,1].
\end{align}

We simulate realisations of the functional autoregressive process of order 1 and the functional moving average process of order 4 defined in the following two settings:
\begin{itemize}
\item[\FARno] \label{item:sim_FAR}
The process $\{X_t\}_{t\in\mathbb{Z}}$ is a functional autoregressive process of order 1 \citep{bosq2012linear} defined by the iteration
\begin{equation}\label{eq:item:sim_FAR}
X_{t+1} = \mathcal{A} X_t + E_t, \qquad t\in\mathbb{Z}.
\end{equation}
The operator $\mathcal{A}$ is assumed to be a Hilbert-Schmidt operator and we define its kernel as $A(x,y) = \kappa \sin( x-y ), x,y\in[0,1]$, where $\kappa>0$ is chosen such that $\|\mathcal{A}\|_{L(\mathcal{H})} = 0.7$ and $\|\cdot\|_{L(\mathcal{H})}$ is the operator norm in the space of linear operators on $\mathcal{H}$.

\item[\FMAno] \label{item:sim_FMA}
The process $\{X_t\}_{t\in\mathbb{Z}}$ is considered to be the functional moving average process of order 4 defined by
\begin{equation}\label{eq:item:sim_FMA}
X_t = E_t + \mathcal{M}_1 E_{t-1} + \mathcal{M}_2 E_{t-2} + \mathcal{M}_3 E_{t-3} + \mathcal{M}_4 E_{t-4}, \qquad t\in\mathbb{Z}.
\end{equation}
The operators $\mathcal{M}_1,\dots,\mathcal{M}_4$ are assumed to be Hilbert-Schmidt and given by their kernels 
$
M_1(x,y) = \kappa_1 \sin(  x+y ),
M_2(x,y) = \kappa_2 \sin( 1-x+y ),
M_3(x,y) = \kappa_3 \sin( 1+x-y ),
M_4(x,y) = \kappa_4 \sin( 2-x-y ),
$ for $x,y\in[0,1]$, respectively. The constants $\kappa_1>0,\dots,\kappa_4>0$ are chosen so that
$
\|\mathcal{M}_1\|_{L(\mathcal{H})} = 0.8,
\|\mathcal{M}_2\|_{L(\mathcal{H})} = 0.6,
\|\mathcal{M}_3\|_{L(\mathcal{H})} = 0.4,
\|\mathcal{M}_4\|_{L(\mathcal{H})} = 0.2
$
respectively.
\end{itemize}

The functional autoregressive process \FARno{}, defined uniquely by the equation \eqref{eq:item:sim_FAR}, and the functional moving average process \FMAno{} are stationary and Gaussian \citep{bosq2012linear}.

Each of the above defined functional processes is simulated with a varying time length $T\in\{300,600,900,1200\}$. The sparse observations \eqref{eq:observation_scheme} are generated by fixing the maximal number of observations per curve $N^{max}\in\{10,20,40,60\}$. For each curve, an integer valued random variable is drawn with uniform distribution on $\{0,\dots,N^{max}\}$ corresponding to the number of spatial locations where the $X_t$ is observed (with measurement error, to be defined). The measurement locations $x_{tj}$ are sampled as uniform random variables on $[0,1]$. At each $x_{tj}$ location, the measurement error is added as a realisation of a centred Gaussian random variable with variance $\sigma^2>0$.
The variance $\sigma^2>0$ is chosen so the signal-to-noise ratio is $\tr( \mathscr{R}_0^X )/\sigma^2 = 20$.

For the lagged regression model \eqref{eq:regression_model} we consider regression models where only certain filter coefficients $b_k$ are nonzero functions. In particular, the nonzero filter coefficients are considered to be either of the two following options

\begin{itemize}
\Item[\textbf{(A)}]
\begin{equation} \label{eq:shape_A}
\beta_A(x) = \cos( 4 \pi x ), \qquad x\in[0,1]
\end{equation}
\Item[\textbf{(B)}]
\begin{equation} \label{eq:shape_B}
\beta_B(x) = \sin( 2 \pi x ), \qquad x\in[0,1]
\end{equation}
\end{itemize}
The considered kernels and filter functions are visualised on Figure \ref{fig:kernels}.

We consider 3 lagged regression schemata with a varying set of nonzero filter coefficients.
\begin{itemize}
\item[\Ba] The filter coefficients $b_0, b_1$ are set to either $\beta_A$ or $\beta_B$.
\item[\Bb] The filter coefficients $b_0, b_3$ are set to either $\beta_A$ or $\beta_B$. 
\item[\Bc] The filter coefficients $b_0, b_1, b_2, b_3, b_4, b_5$ are nonzero but with decaying magnitude. They are set to either
$$ \left( b_0, b_1, b_2, b_3, b_4, b_5 \right) = \left(
\beta_A, 0.9 \beta_A, 0.7 \beta_A, 0.5 \beta_A, 0.3 \beta_A, 0.1 \beta_A
\right)  $$
or 
$$ \left( b_0, b_1, b_2, b_3, b_4, b_5 \right) = \left(
\beta_B, 0.9 \beta_B, 0.7 \beta_B, 0.5 \beta_B, 0.3 \beta_B, 0.1 \beta_B
\right)  $$
depending on the chosen shape of the filter coefficients.
\end{itemize}
The variance of the measurement error $\{e_t\}_{t\in\mathbb{Z}}$ is set to be $\tau^2 = 0.001$.

\begin{figure}[h]
\centering
\makebox[\textwidth][c]{
\includegraphics[width=0.99\textwidth]{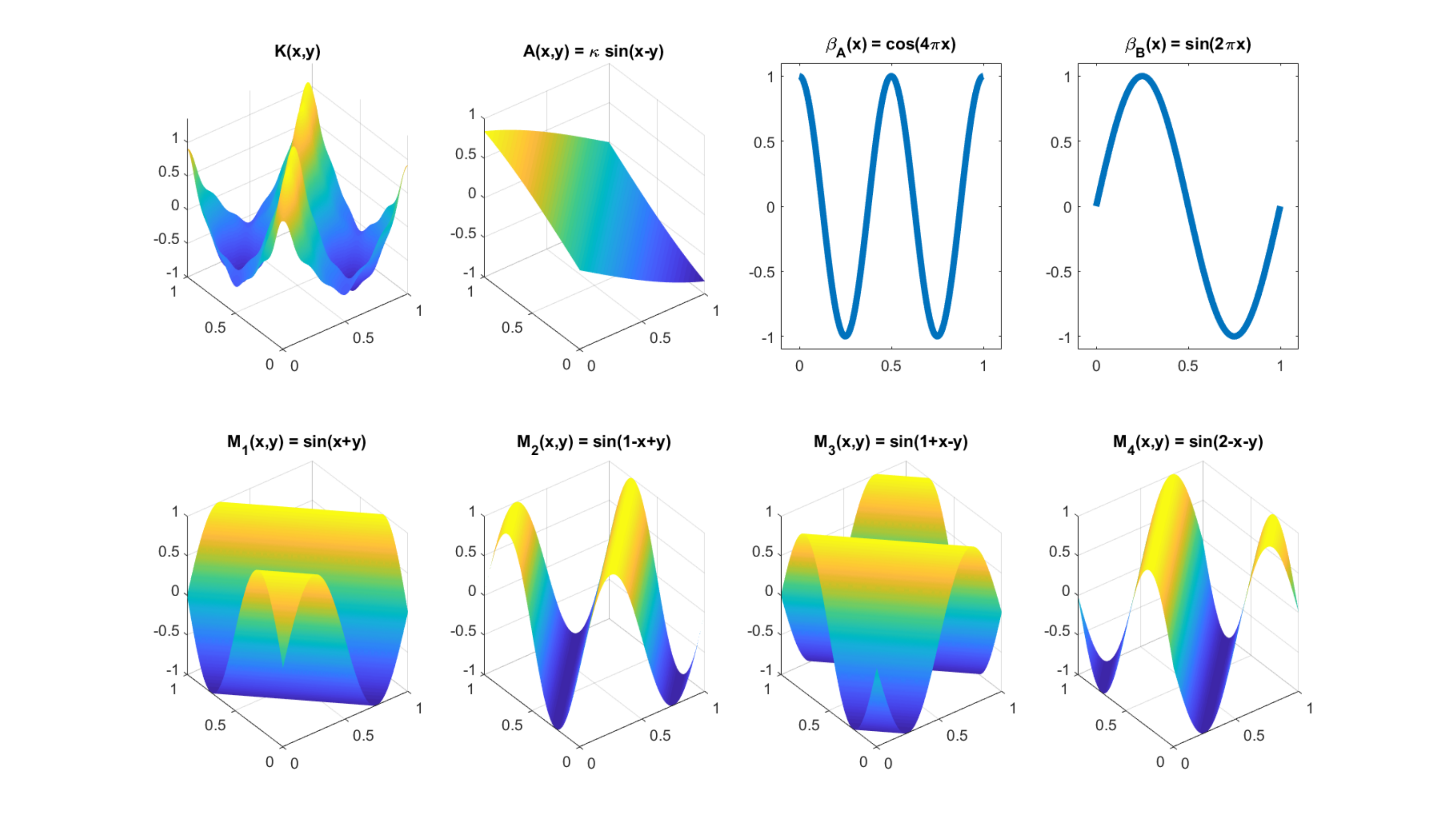}
}
\caption{
\textbf{Top row:} The covariance kernel of the stochastic innovation $K$ \eqref{eq:covariance_kernel_in_simulations}, the autoregressive kernel $A$ \eqref{eq:item:sim_FAR}, the filter coefficients $\beta_A$ \eqref{eq:shape_A} and $\beta_B$ \eqref{eq:shape_B} respectively.
\textbf{Bottom row:} The kernels of the moving average process $M_1,\dots,M_4$ \eqref{eq:item:sim_FMA}.
}
\label{fig:kernels}
\end{figure}

For each combination of the settings, i.e. each of the 2 linear processes of $\{X_t\}_{t\in\mathbb{Z}}$, each of 4 length parameters $T\in\{300,600,900,1200\}$, each of 4 sampling density parameters $N^{max}\in\{10,20,40,60\}$, each of 3 regression schemata, and 2 shapes of the filter coefficients, we run 90 independent runs.
The simulations have to be obviously performed in a finite dimension. We approximate the infinite dimensional dynamics of the process $\{X_t\}_{t\in\mathbb{Z}}$ by the B-Spline basis of dimension 21.  This follows the same scheme as \cite{rubin2020sparsely}, where further details on the definition of the basis as well as the approximation procedure can be found.
Moreover, we consider also the regime of complete functional observations in the setting of \cite{hormann2015estimation} in order to compare how much information is lost due to sparse sampling.

\subsection{Estimation procedure}
\label{subsec:Estimation Procedure}

%In order to assess the prediction error of the method of Section \ref{subsec:forecasting_the_response_process} we partition the response time series $Z_1,\dots,Z_T$ into the training set $Z_1,\dots,Z_S$ and the test set $Z_{S+1},\dots,Z_T$.
%The split is set to be 80:20 in favour of the training set, i.e. $S=0.8 T$.
%The model components estimates as well as the predictions $Z_{S+1},\dots,Z_T$ are constructed by the methodology of Section \ref{subsec:forecasting_the_response_process}.

The proposed methodology requires the selection of the tuning parameters. We implemented the choice of the bandwidths $B_R$ and $B_C$ for the estimation of $\{\mathscr{F}_\omega^X\}_{\omega\in[-\pi,\pi]}$ and $\sigma^2$ by means of K-fold cross validation, as explained in detail in \cite{rubin2020sparsely}. The Bartlett span parameter for the estimation of the spectral density is set to
$ L = \lfloor 2 T^{1/3} \rfloor $.

In order to select the regularisation parameters for either of the two proposed regularsation methods we resort to holdout cross-validation. We split the response time series $Z_1,\dots,Z_T$ into the training set $Z_1,\dots,Z_S$ and the test set $Z_{S+1},\dots,Z_T$.
The split is set to be 80:20 in favour of the training set, i.e. $S=0.8 T$.
The cross-spectral density $\{\mathscr{F}_\omega^{ZX}\}_{\omega\in[-\pi,\pi]}$ is estimated from the data in the training set $Z_1,\dots,Z_S$ and the bandwidth parameter $B_C$ is selected by K-fold cross-validation within the training set.

The truncation estimator \eqref{eq:regularisation_spectral_transfer_truncation} is constructed by means of  \textit{eigenvalue thresholding} \citep{hormann2015estimation}. Specifically, we set 
$K_\omega(\upsilon) = \argmax_{m\geq 1} \left\{ \hat{\lambda}_m^\omega > \upsilon \right\}$
where $\upsilon>0$ is a parameter to be chosen by holdout cross-validation in the following way. Having estimated  $\{\mathscr{F}_\omega^X\}_{\omega\in[-\pi,\pi]}$ and $\sigma^2$ from the sparsely observed regressor time series $\{X_t\}_{t=1}^T$, and the cross-spectral density
$\{\mathscr{F}_\omega^{ZX}\}_{\omega\in[-\pi,\pi]}$ from $\{X_t\}_{t=1}^T$ and the training partition of the response $Z_1,\dots,Z_S$, the spectral transfer function is estimated by the formula \eqref{eq:regularisation_spectral_transfer_truncation} using a candidate value of $K_\omega(\upsilon)$. The forecasts $\hat{Z}_{S+1},\dots,\hat{Z}_T$ are produced by the methodology outlined in Section \ref{subsec:forecasting_the_response_process}. Comparing the forecasts with the true values of $Z_{S+1},\dots,Z_T$ yields a mean square forecast error on the holdout partition which we minimse with respect to $\upsilon$. 
The Tikhonov estimator \eqref{eq:regularisation_spectral_transfer_Tikhonov} involves the selection of the parameter $\rho$. In the same way as for the truncation regularisation, we chose $\rho$ by holdout cross-validation based on the mean square forecast error on $Z_{S+1},\dots,Z_T$.

In the case of complete functional observations we again use holdout cross-validation for the selection of the {eigenvalue thresholding} parameter as well as the Tikhonov parameter.

\subsection{Evaluation criteria}
\label{subsec:evaluation_criteria}

We asses the estimation error of the filter coefficients by the mean square error criterion:
\begin{equation}
\label{eq:MSE_delta_B}
\delta^{\mathcal{B}} =
\sum_{k\in\mathbb{Z}} \left\| \hat{\mathcal{B}}_k-\mathcal{B}_k \right\|^2
\end{equation}

%\begin{equation}
%\label{eq:RMSE_delta_B}
%\delta^{\mathcal{B}} =
%\frac{
%	\sum_{k\in\mathbb{Z}} \frac{ \left\| \hat{\mathcal{B}}_k-\mathcal{B}_k \right\|^2}{\left\| \mathcal{B}_k \right\|^2} \mathds{1}_{\left\{ \left\| \mathcal{B}_k \right\| \neq 0 \right\}}
%}{
%\sum_{k\in\mathbb{Z}} \mathds{1}_{\left\{ \left\| \mathcal{B}_k \right\| \neq 0 \right\}}
%}
%\end{equation}

Next we want to asses the forecasting performance of the proposed methodology. Because the entire sample was used for fitting the model dynamics, we simulate an independent copy of the regressor time series, denoted as
$\{ X^{copy}_t \}_{t=1}^T$,
and the response process $\{Z^{copy}_t\}_{t=1}^T$. Using the estimates of the model dynamics from the original data, we produce the predictions $\{\hat{Z}^{copy}_t\}_{t=1}^T$ from the sparsely observed $\{ X^{copy}_t \}_{t=1}^T$ and compare with the true values $\{Z^{copy}_t\}_{t=1}^T$. The prediction relative mean square error is then defined
\begin{equation}
\label{eq:RMSE_delta_pred}
\delta^{pred} = \frac{1}{T} \sum_{t=S}^T \frac{\left( \hat{Z}^{copy}_t -Z^{copy}_t \right)^2}{\var(Z_0)} .
\end{equation}

Moreover, we include the prediction error of the oracle estimator that assumes that both the dynamics of the regressor time series $\{X_t\}_{t\in\mathbb{Z}}$ and the filter coefficients $\{\mathcal{B}_k\}_{k\in\mathbb{Z}}$ are known. The oracle estimator completes the steps \ref{item:algorithm_step_3} and \ref{item:algorithm_step_4} of the algorithm of Section \ref{subsec:forecasting_the_response_process} where the estimates $\{\hat{R}^X_h(\cdot,\cdot)\}_{h\in\mathbb{Z}}$ and $\{\hat{\mathcal{B}}^{trunc}_k\}_{k\in\mathbb{Z}}$ (or $\{\hat{\mathcal{B}}^{Tikh}_k\}_{k\in\mathbb{Z}}$) are replaced by the true values of $\{R^X_h(\cdot,\cdot)\}_{h\in\mathbb{Z}}$ and $\{\mathcal{B}_k\}_{k\in\mathbb{Z}}$.

\subsection{Results of numerical experiments}

\begin{figure}
\centering
\makebox[\textwidth][c]{
\includegraphics[width=1\textwidth]{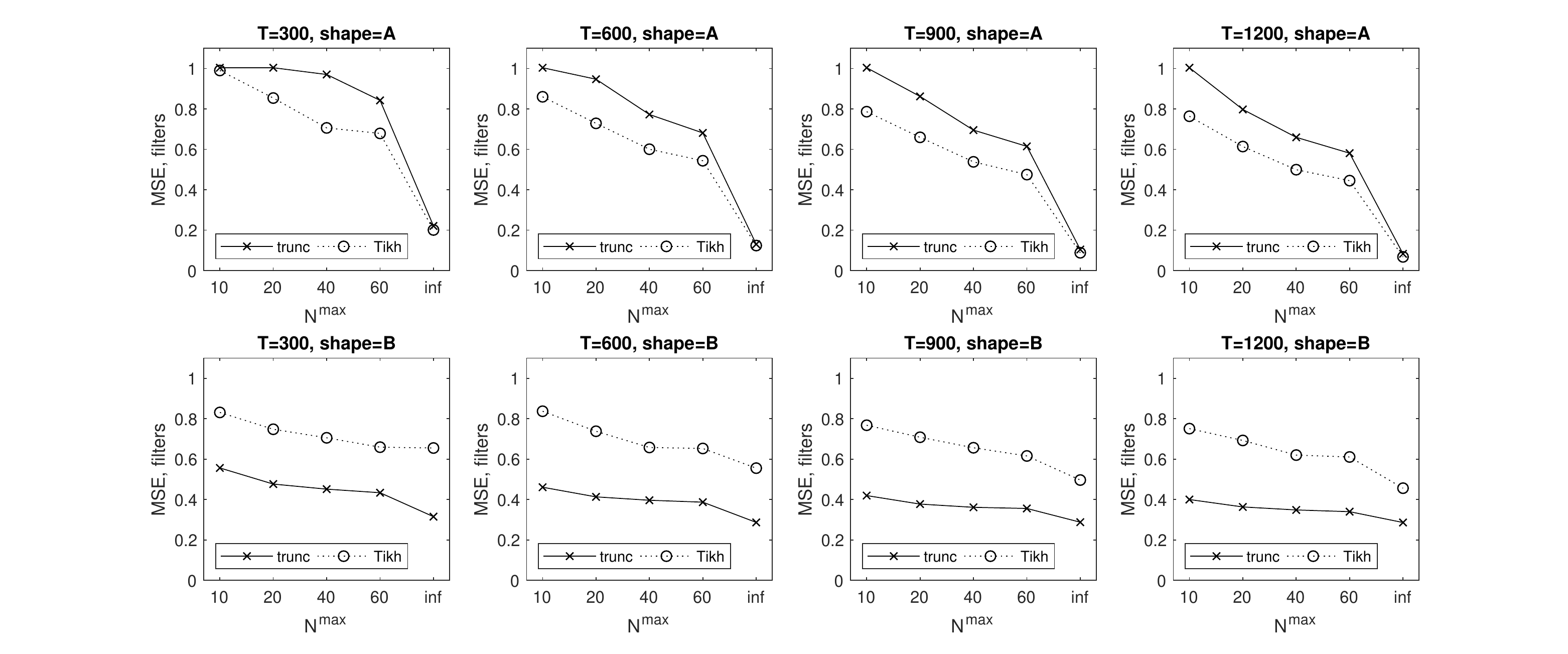}
}
\caption{
The median mean square error $\delta^{\mathcal{B}}$ of the filter coefficient estimates \eqref{eq:MSE_delta_B} for the truncation regularsation (``trunc'') and Tikhonov regularsation method (``Tikh''), displayed as a function of the time series length $T\in\{300,600,900,1200\}$ and the the maximum number of the observation locations $N^{max}\in\{10,20,40,60,\text{inf}\}$ where ``inf'' stands for the fully observed functional data. The top and the bottom row show the results for the filter coefficients of the shapes \eqref{eq:shape_A} and \eqref{eq:shape_B} respectively.
}
\label{fig:filter_mse}
\end{figure}

\begin{figure}
\centering
\includegraphics[width=1\textwidth]{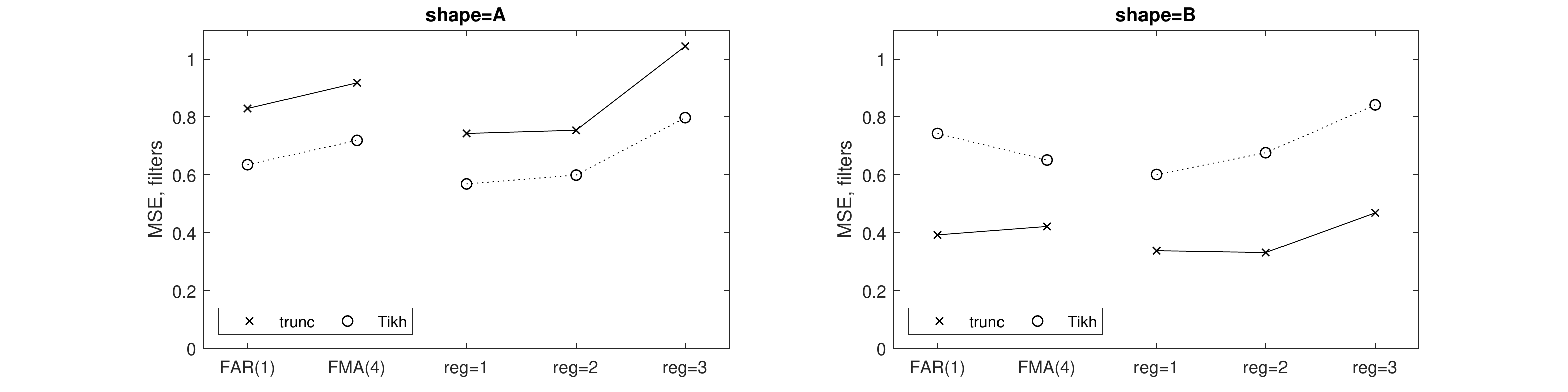}
\caption{
The median mean square error $\delta^{\mathcal{B}}$ of the filter coefficient estimates \eqref{eq:MSE_delta_B} for the truncation regularsation (``trunc'') and Tikhonov regularsation method (``Tikh'') with respect to the simulated dynamics of $\{X_t\}_{t}$ and the regression scheme. The results are aggregated over all sparse observation setups $T\in\{300,600,900,1200\}$ and $N^{max}\in\{10,20,40,60\}$. The left and the right figures show the results for the filter coefficients of the shapes \eqref{eq:shape_A} and \eqref{eq:shape_B} respectively.
}
\label{fig:filter_mse_per_process}
\end{figure}

\begin{figure}
\centering
\makebox[\textwidth][c]{
\includegraphics[width=1\textwidth]{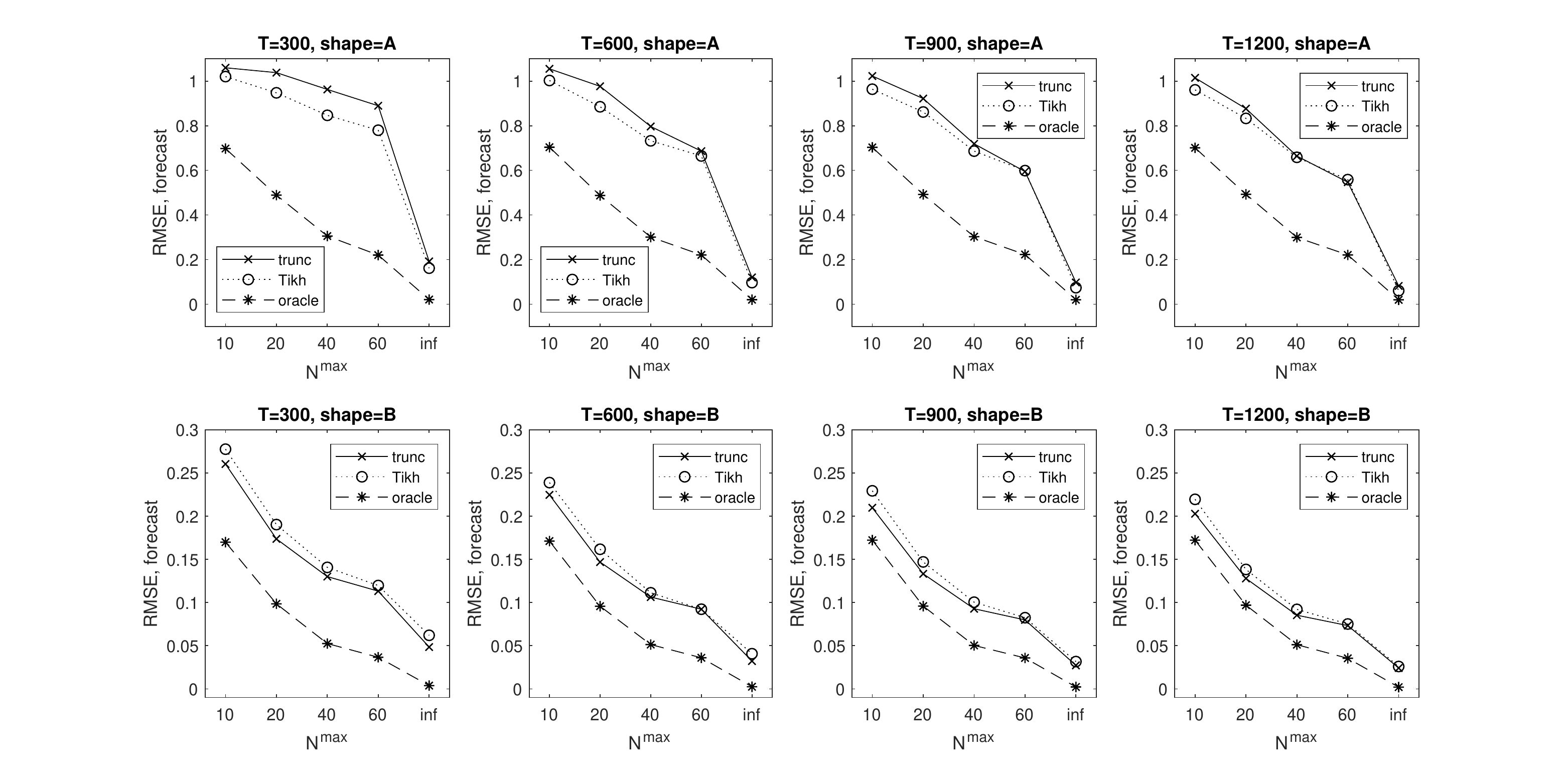}
}
\caption{
The median mean square prediction error $\delta^{pred}$ \eqref{eq:RMSE_delta_pred} for the truncation regularisation (``trunc''), Tikhonov regularisation method (``Tikh''), and the oracle estimator, displayed as a function of the time series length $T\in\{300,600,900,1200\}$ and the the maximum number of the observation locations $N^{max}\in\{10,20,40,60,\text{inf}\}$ where ``inf'' stands for the fully observed functional data. The top and the bottom row show the results for the filter coefficients of the shapes \eqref{eq:shape_A} and \eqref{eq:shape_B} respectively.
}
\label{fig:forecast_rmse}
\end{figure}

\begin{figure}
\centering
\includegraphics[width=1\textwidth]{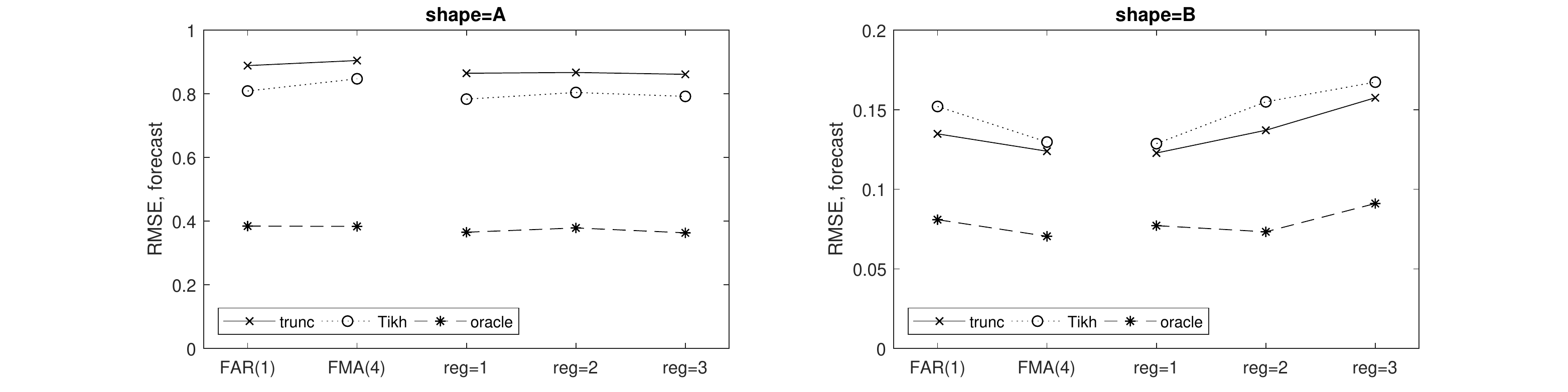}
\caption{
The median mean square prediction error $\delta^{pred}$ \eqref{eq:RMSE_delta_pred} for the truncation regularisation (``trunc''), Tikhonov regularisation method (``Tikh''), and the oracle estimator with respect to the simulated dynamics of $\{X_t\}_{t}$ and the regression scheme. The results are aggregated over all sparse observation setups $T\in\{300,600,900,1200\}$ and $N^{max}\in\{10,20,40,60\}$. The left and the right figures show the results for the filter coefficients of the shapes \eqref{eq:shape_A} and \eqref{eq:shape_B} respectively.
}
\label{fig:forecast_rmse_per_process}
\end{figure}

Due to the large number of simulation settings considered, we display the results in an aggregated form. Figures \ref{fig:filter_mse} and \ref{fig:forecast_rmse} present the results for the filter coefficient estimation and the prediction performance as a function of the sample size parameters $T$ and $N^{max}$. The results are aggregated over both types of simulated dynamics of $\{X_t\}_t$, i.e. the functional autoregressive process \FARno{ }and the functional moving average process \FMAno, and over all three regression schemes \Ba, \Bb, and \Bc. Figures \ref{fig:filter_mse_per_process} and \ref{fig:forecast_rmse_per_process}, on the other hand, present the results for different the simulated dynamics of the process and the different considered regression schemes separately, aggregated over all time series length parameters $T\in\{300,600,900,1200\}$ and all sparse observation regimes $N^{max}\in\{10,20,40,60\}$.

An inspection of figures \ref{fig:filter_mse} and \ref{fig:forecast_rmse} reveals that there is no clear winner between the truncation and the Tikhonov methods. The numerical experiments with the shape \textbf{(A)} defined by \eqref{eq:shape_A} show that the Tikhonov method dominates the truncation method in all considered settings for the estimation of the filter coefficients. The simulations with the shape \textbf{(B)} defined by \eqref{eq:shape_B} yield the opposite behaviour: the truncation regularisation prevails. We attribute this dichotomy to the following reasons:

\begin{itemize}

\item The shape \textbf{(B)} corresponds to the leading eigenfunction of the covariance kernel \eqref{eq:covariance_kernel_in_simulations}. Even though the functional regression is performed in the spectral domain, the spectral transfer function $\mathscr{B}_\omega$ is still well aligned with the first eigenfunction of $\mathscr{F}_\omega^X$ and therefore it is enough to cut off after the first eigenvalue,  thus favouring truncation regularisation. See Figure \ref{fig:percentage_explained} to visualise the alignment in the spectral domain.

\item The shape \textbf{(A)} corresponds to the fourth eigenfunction of the covariance kernel \eqref{eq:covariance_kernel_in_simulations}. Moreover the fourth eigenvalue is tied with the fifth and the sixth one. This structure is preserved also in the spectral domain, c.f. Figure \ref{fig:percentage_explained}. Since the Tikhonov regularisation does not discard the eigenspace corresponding to any of these eigenvalues, it achieves lower estimation error in this non-aligned case. Moreover, the Tikhonov regularisation enjoys the advantage of being stable to spectral eigenvalue ties \citep{hall2007methodology, pham2018methodology}.

\item Note that the shape \textbf{(A)} is generally more difficult to estimate than shape \textbf{(B)}. This is not surprising because the signal in the fourth eigenfunction is much weaker than in the first one.
\end{itemize}

%\begin{figure}
%\centering
%\makebox[\textwidth][c]{
%\includegraphics[width=1\textwidth]{}
%}
%\caption{
%\textbf{Left:} The six leading eigenvalues $\lambda_1^\omega,\dots,\lambda_6^\omega$ of the spectral density operator $\mathscr{F}^X_\omega$ as a function of the frequency $\omega\in[-\pi,\pi]$.
%\textbf{Center:} The percentage of the cross-spectral density operator $\mathscr{F}^{ZX}_\omega$ for the shape \textbf{(A)}, the simulated process \textbf{(FAR(1))}, and the regression scheme \textbf{(reg1)} explained by the eigenspaces corresponding to the six leading eigenvalues as a function of the frequency $\omega\in[-\pi,\pi]$.
%\textbf{Right:} The same as the central plot but for the shape \textbf{(B)}.
%}
%\label{fig:percentage_explained}
%\end{figure}

\begin{figure}[h]
\centering
\makebox[\textwidth][c]{
\includegraphics[width=1\textwidth]{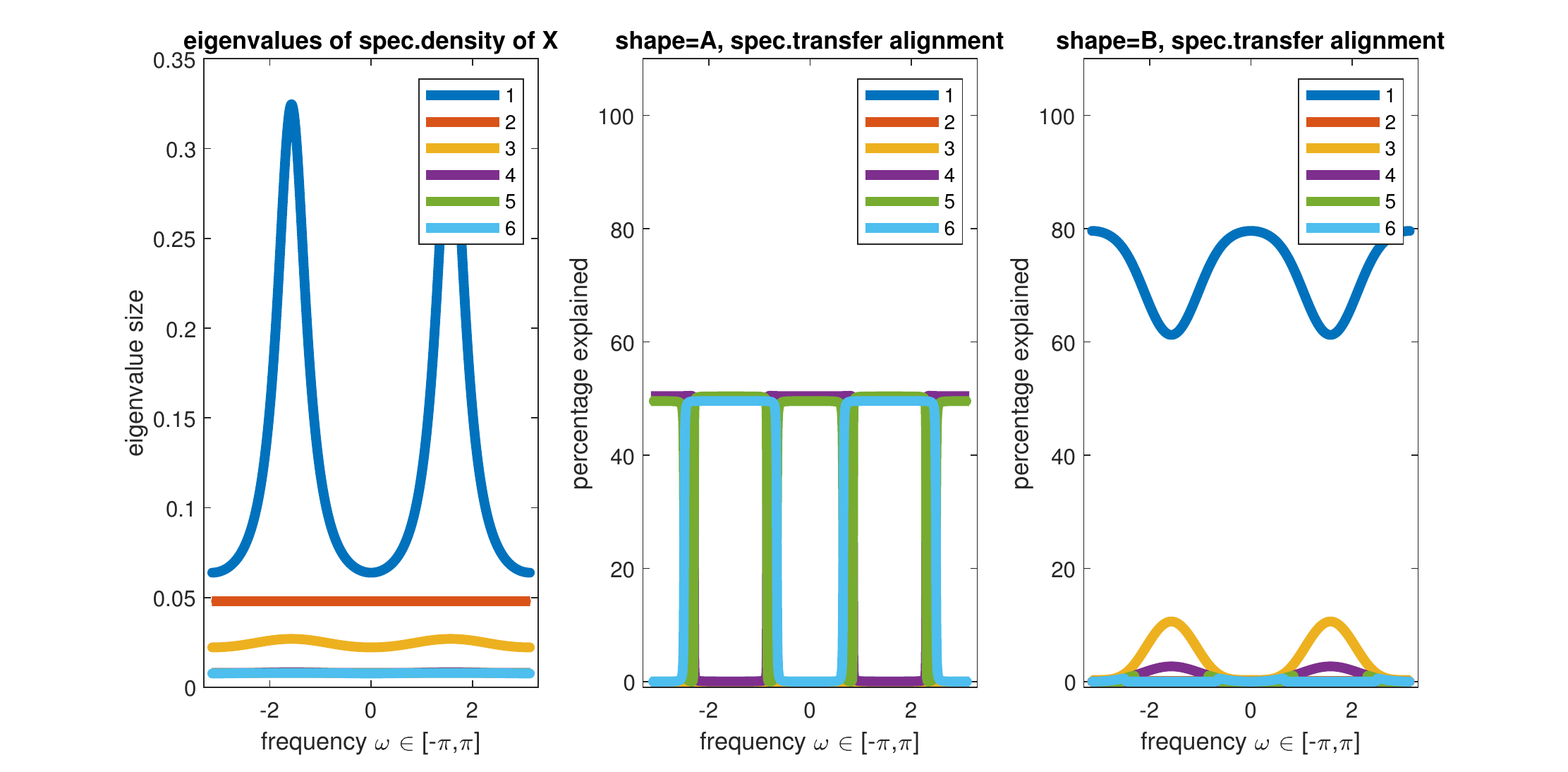}
}
\caption{
\textbf{Left:} The six leading eigenvalues $\lambda_1^\omega,\dots,\lambda_6^\omega$ of the spectral density operator $\mathscr{F}^X_\omega$ as a function of the frequency $\omega\in[-\pi,\pi]$.
\textbf{Center:} The norm percentage of the spectral transfer function $\mathscr{B}_\omega$ for the shape \textbf{(A)}, the simulated process \textbf{(FAR(1))}, and the regression scheme \textbf{(reg1)} explained by the eigenspaces corresponding to the six leading eigenvalues as a function of the frequency $\omega\in[-\pi,\pi]$.
\textbf{Right:} The same as the central plot but for the shape \textbf{(B)}.
}
\label{fig:percentage_explained}
\end{figure}

There is no notable difference in the difficulty of estimation between the two simulated dynamics of the regressor time series $\{X_t\}_t$, i.e. the functional autoregressive process \FARno{ }and the functional moving average process \FMAno. The considered regression schemes do not reveal any surprises: the longer the lagged dependence is, the more difficult estimation becomes. Therefore the \Bc{ } scheme produces the largest errors while \Ba{ }the lowest.

The prediction error of the response process $\delta^{pred}$, which is presented in Figures \ref{fig:forecast_rmse} and \ref{fig:forecast_rmse_per_process}, follows the same conclusions as the estimation of the filter coefficients. The shape \textbf{(A)} is more challenging to predict and the Tikhonov regularisation is seen to be preferable not only for estimation, but  for prediction too. The shape \textbf{(B)} is easier to predict using the truncation regularisation.
The predictions by either of the two techniques, feature twice to thrice greater prediction error $\delta^{pred}$ than the oracle estimator, i.e. the prediction assuming the model of the data to be known and the uncertainty coming only from sparse noisy sampling regime.

\section{Data analysis}
\label{sec:data analysis}

In this section we illustrate the proposed methodology on measurements recorded \citep{tammet2009joint} at the scientific observatory located at mount Wank located in southern Germany.
We remark that this analysis should be seen primarily as an example of the type of data that fall in our framework, rather than a complete data analysis, since there are presumably further important covariates that otherwise should be included.
The considered period is January 1, 1977 -- December 31, 1979 consisting of $T=1095$ days.
In particular, we analyse the interdependence of three time series:
\begin{itemize}

\item \textit{Atmospheric electricity.}
The ionisation processes in the atmosphere causes the air to be conductive and the conductivity can be measured in terms of electric potential difference per distance, expressed in volts per meter ($V/m$).
The atmospheric electricity is an important indicator for climate research \citep{tammet2009joint} and air pollution \citep{israelsson2001variation}.
However, the atmospheric electricity can be reliably measured only under fair-weather conditions, otherwise the atmospheric ionisation processes are changed and a different quantity is recorded. The standard meteorological methodologies \citep{xu2013periodic,israelsson2001variation} suggest to discard the observations under unfair conditions and analyse only those observations recorded under fair weather. Given these guidelines, we take into account only those hourly observations of atmospheric electricity where the wind speed was below $20\, km/h$ and the atmospheric electricity $E$ itself satisfies $0<E<250\, V/m$. The meteorological community also advocate discarding the data based on cloud coverage \citep{xu2013periodic,israelsson2001variation}, but, unfortunately, the mount Wank dataset \citep{tammet2009joint} does not contain cloud coverage information.

Based on the above criteria, we eventually retain an unevenly sampled scalar time series which we consequently decatenate into individual days. This technique is useful \citep{aue2015prediction,hormann2015dynamic,hormann2010weakly,hormann2018testing} in separating the intra-day variability and the temporal dependence across days. Thus we arrive at a sparsely observed functional time series, where the latent functional data are interpreted as the ``atmospheric electricity had the weather been fair''. %Therefore we justify the independent sampling protocol of our assumptions.

In what follows, we denote the fair weather electricity time series as $\{X^{(E)}_t\}_{t=1}^T$. The time series features total of 18326 measurements or 16.7 measurement per day on average.

\item \textit{Temperature.} The temperature was recorded hourly at Wank over the considered period. First, we remove the yearly periodicity in the data, then divide the time domain into individual days, and finally convert the hourly observations into functional data using  B-splines. The produced fully observed functional time series is denoted as $\{X^{(\tau)}_t\}_{t=1}^T$. The time series includes 21 missing days which we treat as missing completely at random.

\item \textit{Recorded visibility.} The reported visibility was recorded hourly at a range of locations. We define the response time series $\{Z_t\}_{t=1}^T$ as the average visibility on the given day. The time series includes 42 missing values which we treat as missing completely at random.

\end{itemize}

Since the goal of our analysis is to illustrate the lagged regression methodology and compare the Tikhonov and the truncation regularisation, we split the response time series into two parts, the training component $Z_1,\dots,Z_{822}$ and the test component $Z_{823},\dots,Z_{1095}$ consisting of roughly $75 \%$ and $25 \%$ of the observations respectively. We fit three models on the entire time span of
$\{X^{(E)}_t\}_{t=1}^T$ and/or $\{X^{(\tau)}_t\}_{t=1}^T$ and the training set $\{Z_t\}_{t=1}^{822}$:
\begin{itemize}
\item \textit{Atmospheric electricity model} \textbf{(E)}. In this model we use the sparsely observed functional time series $\{X^{(E)}_t\}_{t=1}^T$ as the regressor time series for the response $\{Z_t\}_{t=1}^T$ exploiting the methodology outlined in this paper.

\item \textit{Temperature model} \textbf{(T)}. This model handles the fully observed functional time series $\{X^{(\tau)}_t\}_{t=1}^T$ as the regressor for the response $\{Z_t\}_{t=1}^T$ using the methodology developed by \citet{hormann2015estimation}.

\item \textit{Joint model} \textbf{(E+T)}. Finally this model includes the information from both the sparsely observed functional time series $\{X^{(E)}_t\}_{t=1}^T$ and the fully observed functional time series $\{X^{(\tau)}_t\}_{t=1}^T$ to predict the response process $\{Z_t\}_{t=1}^T$ outlined by the equation
\begin{equation}
Z_t = a
+ \sum_{k\in\mathbb{Z}} \mathcal{B}_k^{(E)} X_{t-k}^{(E)}
+ \sum_{k\in\mathbb{Z}} \mathcal{B}_k^{(\tau)} X_{t-k}^{(\tau)}
+ e_t
\end{equation}
which extends the model \eqref{eq:regression_model}.

The estimation and the prediction in a model that includes multiple functional time series, some under sparse and others under full observation, is explained in detail in Appendix~\ref{sec:appendix_muptiple_inputs}. Figure \ref{fig:lagged_regression} displays a schematic visualisation of the prediction for a specific day.
\end{itemize}

\begin{figure}[h]
\centering
\makebox[\textwidth][c]{
\includegraphics[width=1\textwidth]{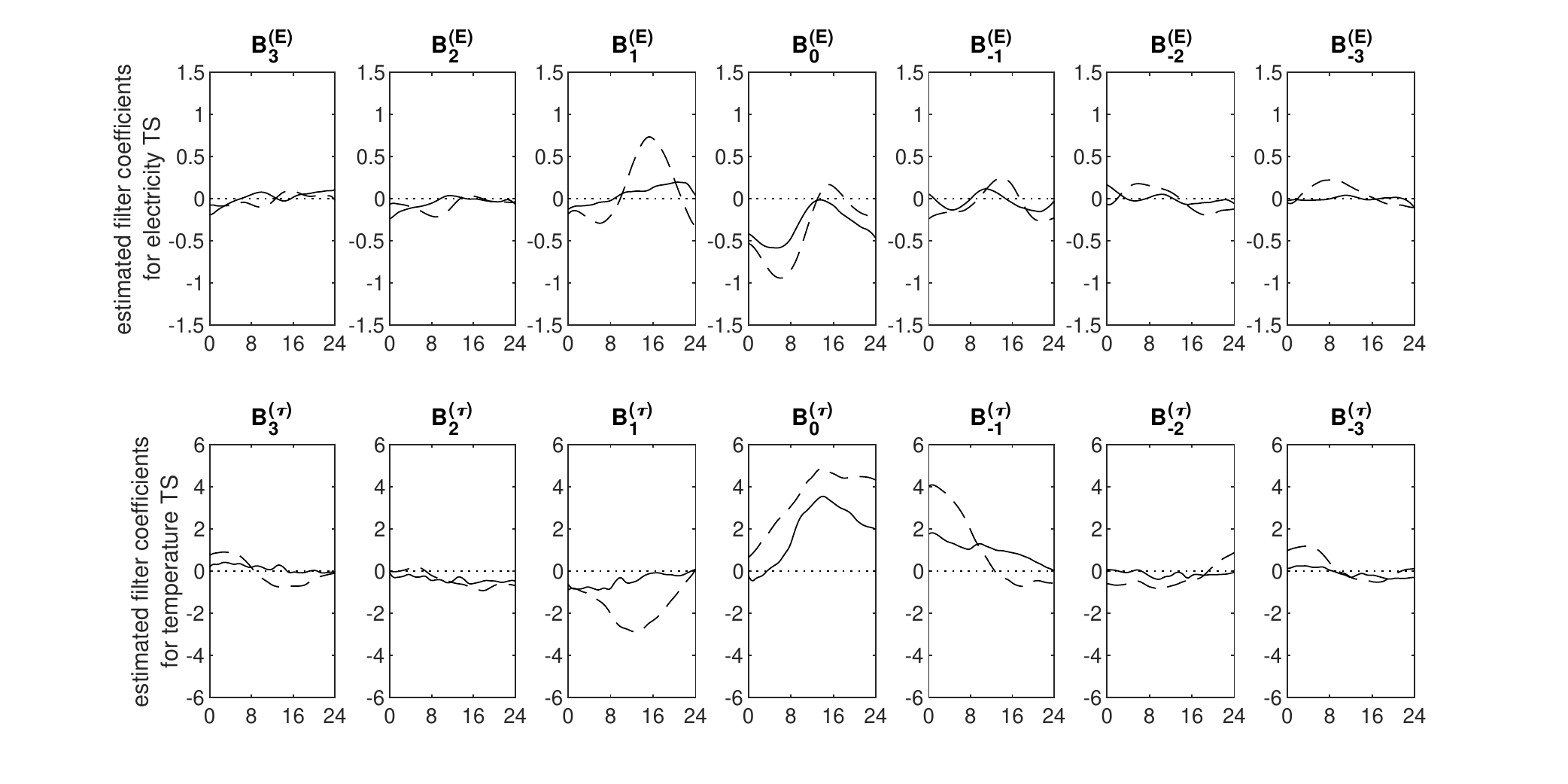}
}
\caption{
The estimated filter coefficients for lags $k\in\{-3,-2,-1,0,1,2,3\}$ for the joint model \textbf{(E+T)}. Solid line: the estimates by Tikhonov regularisation, dashed line: the estimates by truncation regularisation, dotted line: the reference line for zero. \textbf{Top row:} the filter coefficients $\mathcal{B}^{(E)}_k$ for the atmospheric electricity, \textbf{bottom row:} the filter coefficients $\mathcal{B}^{(\tau)}_k$ for the temperature time series
}
\label{fig:Wank_filters}
\end{figure}

\begin{figure}[hbtp]
\centering
\makebox[\textwidth][c]{
\includegraphics[width=0.86\textwidth]{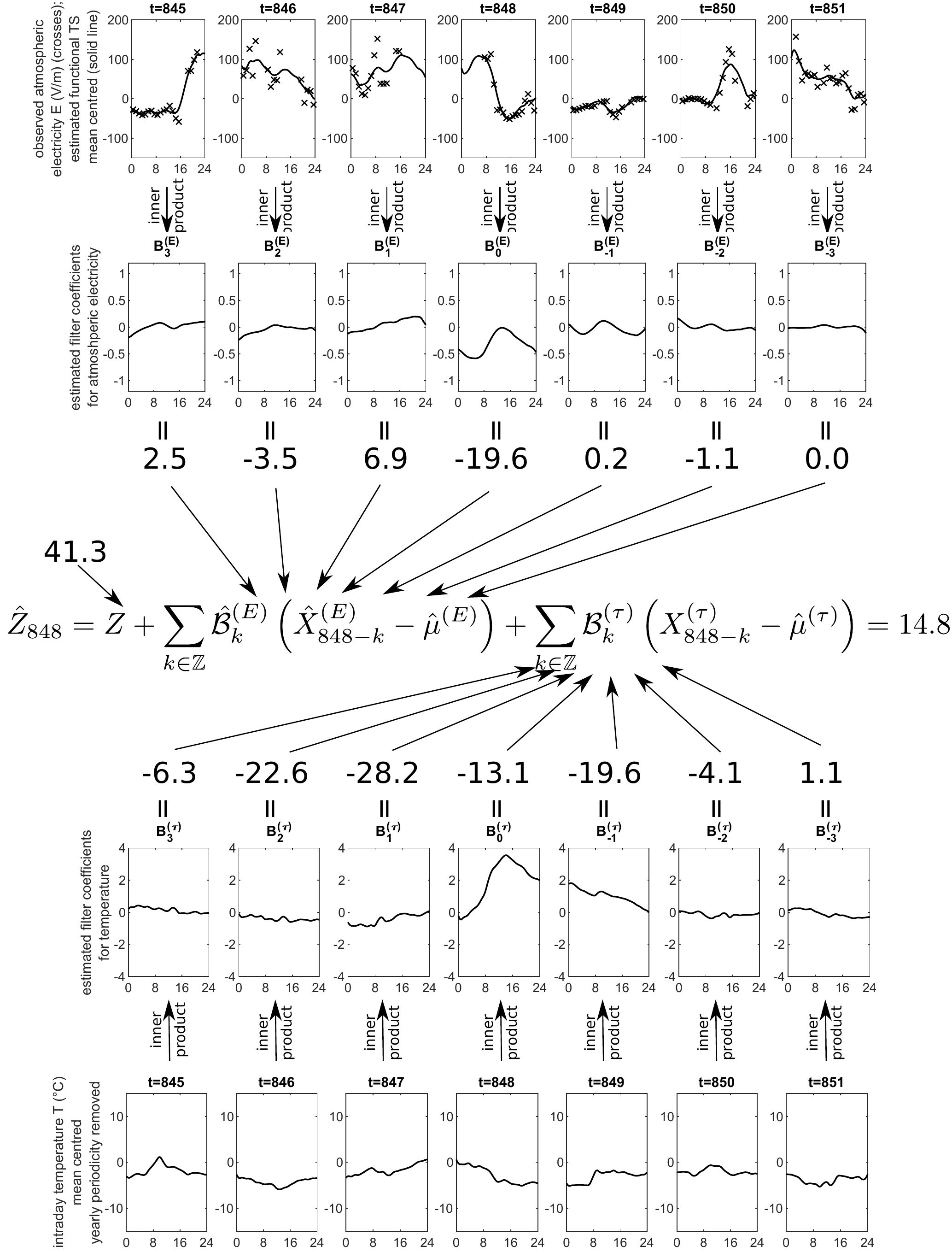}
}
\caption{
A schema demonstrating prediction of the response time series at time $t=848$ in the joint model \textbf{(E+T)}. Recall that because the response $\{Z_t\}$ is scalar, the filter coefficients $\mathcal{B}_k^{(E)}$ and $\mathcal{B}_k^{(\tau)}$ are functionals and thus can be viewed as inner products with fixed functions which are visualised here. \textbf{Top part of the figure:} the contribution of the atmospheric electricity, \textbf{bottom part:} the contribution of the temperature.
}
\label{fig:lagged_regression}
\end{figure}

We estimated the filter coefficients in all three models. Figure \ref{fig:Wank_filters} displays the estimated filter coefficients in the joint model \textbf{(E+T)}. The filter coefficients estimates in the marginal models \textbf{(E)} and \textbf{(T)} (not presented here) are very similar to the corresponding filters estimated in \textbf{(E+T)} thanks to the fact that the time series $\{X^{(E)}_t\}_{t=1}^T$ and $\{X^{(\tau)}_t\}_{t=1}^T$ are essentially uncorrelated.

A first look at Figure \ref{fig:Wank_filters} reveals that truncation-based estimates (depicted via their Riesz-representers) feature more spikes. This, combined with a worse prediction performance commented upon later, suggests that the Tikhonov regularisation provides a better fit. Therefore we comment only on the interpretation of the filter coefficients estimated by the Tikhonov regularisation.
The filter coefficient $\mathcal{B}_0^{(E)}$ is negative, especially in the morning hours, suggesting an obvious interpretation: high atmospheric electricity (which is linked to pollution \citep{israelsson2001variation}) implies a reduction of visibility.
The filter coefficients $\mathcal{B}_k^{(\tau)}$ corresponding to the temperature time series reveal an opposite effect. The filters $\mathcal{B}_0^{(\tau)}$ and $\mathcal{B}_{-1}^{(\tau)}$ are positive, therefore high temperatures on the same day and the next day predict a higher visibility today. We recall that our model is not causal, therefore the filter coefficient $\mathcal{B}_{-1}^{(\tau)}$ indeed predicts an effect backwards in time.

\begin{table}[h]
\caption{The mean square prediction error (MSE) and $R^2$ coefficients of determination of each of the model with either truncation or Tikhonov regularisation based estimation of the filter coefficients. Both the MSE's and the $R^2$ coefficients were determined on the test partition of the response $\{Z_t\}_{t=823}^{1095}$ }
\label{table:r2_coefficients_of_determination}
\centering
\begin{tabular}{ccccc}
\hline
Model &  \multicolumn{2}{c}{Truncation} & \multicolumn{2}{c}{Tikhonov} \\
& MSE & $R^2$ & MSE & $R^2$ \\
\hline
\textbf{(E)}   & 541 & 0.14 & 481 & 0.23 \\
\textbf{(T)}   & 407 & 0.35 & 379 & 0.39 \\
\textbf{(E+T)} & 392 & 0.37 & 335 & 0.46 \\
\hline
\end{tabular}
\end{table}

Table \ref{table:r2_coefficients_of_determination} presents the prediction performance of the considered models. We calculate the mean square error (MSE) on the test partition $\{Z_t\}_{t=823}^ {1095}$ and calculate the $R^2$ coefficient of determination.
The table reveals that the Tikhonov regularisation delivers a better prediction performance for all considered models.

\section{Discussion of the considered regularisation techniques}
\label{sec:discussion}

This article presents the methodology for the functional lagged regression problem where the regressor time series is observed sparsely and with noise contamination. We have shown how to estimate the (cross)-spectral density using surface smoothers. The estimation of the spectral transfer function and consequently the filter coefficients using the estimated (cross)-spectral density are ill-posed problems and therefore require regularisation.
We considered two regularisation strategies, namely spectral truncation and Tikhonov regularisation, and compared them on a simulation study and the analysis of a data set. In the following we summarise some observations  on the differences, strengths, and weaknesses of the two approaches.

The simulation study presented in Section \ref{sec:numerical experiments} illustrates that neither of the two regularisation method can dominate the other. In one of the considered scenarios the spectral transfer function $\mathscr{B}_\omega$ is well-aligned with the leading eigenfunction of the spectral density operator $\mathscr{F}_\omega^{X}$ thus being estimated better by means of truncation. In the other considered setting, $\mathscr{B}_\omega$ is explained by the fourth, the fifth and the sixth leading eigenvalue of $\mathscr{F}_\omega^{X}$, which are moreover nearly tied, resulting in a more challenging estimation task where the stability of Tikhonov regularisation to ties leads to better results than truncation.

The data analysis illustration given in Section \ref{sec:data analysis} analysed the dependence of  visibility on atmospheric electricity and temperature. The comparative analysis of the the two regularisation methods revealed that the estimates obtained by the Tikhonov regularisation feature better predictive performance. Moreover we found that the filter coefficients estimated by the Tikhonov regularisation were easier to interpret. At least in this application setting, the spectral transfer function $\mathscr{B}_\omega$ does not seem to be well-aligned with the spectral density operator $\mathscr{F}_\omega^X$.

We conclude that both of the regularisation techniques should belong to the statistician's repertoire as neither can dominate the other. However, if we were to choose only one to broadly recommend, this would be the Tikhonov approach, as it seems more robust to ``spectral misalignment" and eigenvalue ties, and its theoretical treatment requires fewer assumptions.

\FloatBarrier

%%%%%%%%%%%%%%%%%%%%%%%%%%%%%%%%%%%%%%%%%%%%%%%%%%%%%%%%%%%%%%%%%%%%%%%%%%%%%%%%%%%%%%%%%%%%%%%%%%%%%%
%%%%%%%%%%%%%%%%%%%%%%%%%%%%%%%%%%%%%%%%%%%%%%%%%%%%%%%%%%%%%%%%%%%%%%%%%%%%%%%%%%%%%%%%%%%%%%%%%%%%%%
%%%%%%%%%%%%%%%%%%%%%%%%%%%%%%%%%%%%%%%%%%%%%%%%%%%%%%%%%%%%%%%%%%%%%%%%%%%%%%%%%%%%%%%%%%%%%%%%%%%%%%
%%%%%%%%%%%%%%%%%%%%%%%%%%%%%%%%%%%%%%%%%%%%%%%%%%%%%%%%%%%%%%%%%%%%%%%%%%%%%%%%%%%%%%%%%%%%%%%%%%%%%%
\appendix

\section{Lagged regression with functional response}
\label{sec:appendix_functional_response}

\subsection{Functional response model and used notation}
In the main body of the article we commented on the fact that the main challenge of the methodology lies with the ill-posed inversion of the spectral density operator of the regressor time series $\{ \mathscr{F}^X_\omega \}_{\omega\in[-\pi,\pi]}$, and less so on whether the response is scalar or functional time series. Therefore we opted to present the model with a scalar response in the main body for of the article easier reading and postpone the treatment of the functional response to this appendix.

Now we consider the case where both the regressor time series $\{ X_t \} \equiv \{ X_t(\cdot) \}_{t\in\mathbb{Z}}$
and the response time series $\{ Z_t \} \equiv \{ Z_t(\cdot) \}_{t\in\mathbb{Z}}$ are zero-mean \emph{functional} time series. For simplicity of presentation we assume that they take values in the same Hilbert space $\mathcal{H} = \Ltwo$. The lagged regression model (without intercept) becomes
\begin{equation}\label{eq:S.model_functional_response}
Z_t = \sum_{k\in\mathbb{Z}} \mathcal{B}_k X_{t-k} + e_t 
\end{equation}
where the filter coefficients $\mathcal{B}_k$ are now assumed to be Hilbert-Schmidt operators from $\mathcal{H}$ to $\mathcal{H}$, and $\{e_t\}_{t\in\mathbb{Z}}$ is a sequence of independent identically distributed zero-mean random elements in $\mathcal{H}$ that is further independent of $\{X_t\}_{t\in\mathbb{Z}}$.
We denote by $B_k(\cdot,\cdot)$ the integral kernel corresponding to the operator $\mathcal{B}_k$ for $k\in\mathbb{Z}$.
The assumption \ref{assumption:A.3} is replaced by
\begin{enumerate}
\Item[S.(A3)'] \label{assumption:S.A.3}
$$
\sum_{k\in\mathbb{Z}}\| B_k \|_{\infty} =
\sum_{k\in\mathbb{Z}} \sup_{x,y\in[0,1]} | B_k(x,y) |
< \infty, \qquad
\sum_{k\in\mathbb{Z}}\| \mathcal{B}_k \|_2 < \infty,
$$
\end{enumerate}
where $\|\cdot\|_2$ is the Hilbert-Schmidt norm.

Both the regressor time series $\{X_t\}$ and the response time series $\{Z_t\}$ are assumed to be observed sparsely
\begin{align*}
Y_{tj} &= X_t(x_{tj}) + \epsilon_{tj}^X, \qquad j = 1,\dots,N_t^X, \quad t=1,\dots,T, \\
V_{tj} &= Z_t(z_{tj}) + \epsilon_{tj}^Z, \qquad j = 1,\dots,N_t^Z, \quad t=1,\dots,T,
\end{align*}
where $x_{tj}$ and $z_{tj}$ are the measurement locations, $N_t^X$ and $N_t^Z$ are the number of measurement locations at time $t$, and the $\epsilon_{tj}^X$ and $\epsilon_{tj}^Z$ are additive measurement error contaminants. The $\{\epsilon_{tj}^X\}_{tj}$ and $\{\epsilon_{tj}^Z\}_{tj}$ are assumed to be two ensembles of independent identically distributed zero-mean real random variables with variances $\sigma^2_X>0$ and $\sigma^2_Z>0$ respectively. Moreover, the two ensembles are independent of each other and the underlying pair of functional time series $\{X_t\}$ and $\{Z_t\}$.

\subsection{Nonparametric estimation of the cross-spectral density and spectral transfer function}

In this section we describe how to estimate the cross-spectral density operator $\{ \mathscr{F}_\omega^{ZX} \}_{\omega\in[-\pi,\pi]}$.
For $h\in\mathbb{Z}, |h|<T,
t=\max(1,1+h),\dots,\min(T,T-h),
\quad j=1,\dots,N^Z_{t+h},
\quad k=1,\dots,N^X_{t},$
define the raw covariances as
$$ G^{ZX}_{h,t}(z_{t+h,j}, x_{tk})
= V_{t+h,j} Y_{tk}.
$$
The cross-spectral density is estimated by smoothing the raw covariances using a local-linear surface smoother. For fixed frequency $\omega\in[-\pi,\pi]$, Bartlett span parameter $L\in\mathbb{N}$, and smoothing bandwidth $B_R>0$ we set
$$ \hat{f}_\omega^{ZX}(x,y) = \frac{L}{2\pi} \bar{d}_0 $$
where
\begin{multline*}
(\bar{d}_0,\bar{d}_1,\bar{d}_2) = \argmin_{(d_0,d_1,d_2)\in\mathbb{C}^3}
\frac{1}{T}
\sum_{-L}^L
\sum_{t=\max(1,1-h)}^{\min(T,T-h)}
\sum_{j=1}^{N_{t+h}^Z}
\sum_{k=1}^{N_t^X}
\Big|
G^{ZX}_{h,t}(z_{t+h,j},x_{tj'})e^{-\I h \omega} -\\
-d_0 - d_1(z_{t+h,j}) - d_2(x_{tk})
\Big|
W_h \frac{1}{B_R^2}
K\left( \frac{z_{t+h,j} - x}{B_R} \right)
K\left( \frac{x_{tk} - y}{B_R} \right)
\end{multline*}

Once the cross-spectral density $\{\mathscr{F}_\omega^{ZX}\}_{\omega\in[-\pi,\pi]}$ and the spectral density $\{\mathscr{F}_\omega^{ZX}\}_{\omega\in[-\pi,\pi]}$ have been estimated we turn our attention to the estimation of the spectral transfer function.

The spectral function is estimated by the formulae 
\eqref{eq:regularisation_spectral_transfer_truncation} and
\eqref{eq:regularisation_spectral_transfer_Tikhonov} where $\hat{\mathscr{F}}_\omega^{ZX}\left( \hat{\varphi}_j^\omega \right)$ is now the application of the operator to the function $\hat{\varphi}_j^\omega$, yielding a function as output. Thus $\hat{\mathscr{B}}_\omega^{trunc}$ and $\hat{\mathscr{B}}_\omega^{Tikh}$ are operators.
The filter coefficients, now operators, are again recovered by the formulae
\eqref{eq:fourier_truncation_filter} and
\eqref{eq:fourier_tikhonov_filter}.

\subsection{Forecasting the response process}

We extend the setting outlined in Section \ref{subsec:forecasting_the_response_process} in the main body of the article. We assume that the response functional time series $\{Z_t\}$ is sparsely observed only up to point $S$ for $S<T$ and we wish to forecast the values $Z_{S+1},\dots,Z_T$.

The forecasting algorithm follows in fact the same steps as presented in Section \ref{subsec:forecasting_the_response_process}.
The only difference is that the formulae in step \ref{item:algorithm_step_4} for the forecast construction
$ \hat{\Pi}(Z_s \vert \mathbb{Y}) = \sum_{k = -M}^{M} \hat{\mathcal{B}}_k^{trunc} \hat{\Pi}({X}_{s-k}\vert \mathbb{Y}) $ (or alternatively
$ \hat{\Pi}(Z_s \vert \mathbb{Y}) = \sum_{k = -M}^{M} \hat{\mathcal{B}}_k^{Tikh} \hat{\Pi}({X}_{s-k}\vert\mathbb{Y} ) $)
output elements of the function space $\mathcal{H}$ because $\hat{\mathcal{B}}_k^{trunc}$ and $\hat{\mathcal{B}}_k^{Tikh}$ are now operators.

%%%%%%%%%%%%%%%%%%%%%%%%%%%%%%%%%%%%%%%%%%%%%%%%%%%%%%%%%%%%%%%%%%%%%%%%%%%%%%%%%%%%%%%%%%%%%%%%%%%%%%
%%%%%%%%%%%%%%%%%%%%%%%%%%%%%%%%%%%%%%%%%%%%%%%%%%%%%%%%%%%%%%%%%%%%%%%%%%%%%%%%%%%%%%%%%%%%%%%%%%%%%%

\section{Lagged regression with multiple inputs}
\label{sec:appendix_muptiple_inputs}

\subsection{Joint regression model and its spectral analysis}
It may very well happen that we want to analyse the relationship between multiple time series, including functional time series observed either sparsely or fully, multivariate time series, and scalar time series. To keep the notation simple we focus our presentation on the setting of a scalar response and two functional time series regressors, one of which is observed sparsely and the other fully. This setting corresponds to the model \textbf{(E+T)} of the data analysis in Section~\ref{sec:data analysis}. The generalisation to a higher number of functional time series, observed sparsely or fully, is clear, treating multivariate time series as time series in a finite-dimensional Hilbert space; and, the further incorporation of multivariate or scalar time series is uncomplicated because the regularisation of the spectral density inversion is no longer required.

Consider the model with two functional time series regressors $\{X_{t}^{(1)}\} \equiv \{X_{t}^{(1)}(\cdot)\}_{t\in\mathbb{Z}}$ and $\{X_{t}^{(2)}\}\equiv\{X_{t}^{(2)}(\cdot)\}_{t\in\mathbb{Z}}$. To simplify the notation we assume the both functional time series to take values in the same Hilbert space $\mathcal{H}$. The lagged regression with a scalar response $\{Z_t\} \equiv \{Z_t\}_{t\in\mathbb{Z}}$ becomes:
$$
Z_t = a
+ \sum_{k\in\mathbb{Z}} \mathcal{B}_k^{(1)} X_{t-k}^{(1)}
+ \sum_{k\in\mathbb{Z}} \mathcal{B}_k^{(2)} X_{t-k}^{(2)}
+ e_t
$$
where $a\in\mathbb{R}$ is the intercept, $\{\mathcal{B}_k^{(1)}\}_{k\in\mathbb{Z}}$ and $\{\mathcal{B}_k^{(1)}\}_{k\in\mathbb{Z}}$ are two sequences of linear mappings from $\mathcal{H}$ to $\mathbb{R}$, and $\{e_t\}$ is a sequence of zero mean independent identically distributed real random variables.
Denote $\{b_k^{(1)}\}_{k\in\mathbb{Z}}$ and $\{b_k^{(1)}\}_{k\in\mathbb{Z}}$ the filter functions corresponding to the functionals  $\{\mathcal{B}_k^{(1)}\}$ and $\{\mathcal{B}_k^{(1)}\}$ by the Riesz representation theorem.
We assume that the first functional time series, $\{X_{t-k}^{(1)}\}$, is sparsely observed, that is we have access to only the observations generated by
$$ Y^{(1)}_{tj} = X^{(1)}_t(x_{tj}) + \epsilon_{tj}, \qquad j = 1,\dots,N_t, \quad t=1,\dots,T, $$
where $N_t$ is a number of observation locations $\{x_{tj}\}$ at time $t=1,\dots,T$.
The second functional time series is fully observed, therefore we have access to $X_1^{(2)},\dots,X_T^{(2)}$. The mean functions of $\{X_{t}^{(1)}\}$ and $\{X_{t}^{(2)}\}$ are denoted as $\mu^{(1)}(\cdot)$ and $\mu^{(2)}(\cdot)$ respectively.

Denote $R_h^{\ast}$ and $\mathscr{R}_h^{\ast}$ the (auto)covariance kernel and (auto)covariance operator where $\ast$ is substituted by a single functional time series or a pair theoreof, as the case may be.
Likewise we denote the (cross)-spectral density kernel and the (cross)-spectral density operator as $f_\omega^{\ast}$ and $\mathscr{F}_\omega^{\ast}$.

Assuming
$$
\sum_{h\in\mathbb{Z}} \left\|
\begin{bmatrix}
R_h^{X^{(1)}} & R_h^{X^{(1)}X^{(2)}} \\
R_h^{X^{(2)}X^{(1)}} & R_h^{X^{(2)}}
\end{bmatrix}
\right\|_\infty  < \infty, \qquad
\sum_{h\in\mathbb{Z}} \left\|
\begin{bmatrix}
\mathscr{R}_h^{X^{(1)}} & \mathscr{R}_h^{X^{(1)}X^{(2)}} \\
\mathscr{R}_h^{X^{(2)}X^{(1)}} & \mathscr{R}_h^{X^{(2)}}
\end{bmatrix}
\right\|_1  < \infty$$
and $ \sum_{k\in\mathbb{Z}} \|b_k^{(1)}\|_\infty < \infty $,
$ \sum_{k\in\mathbb{Z}} \|b_k^{(2)}\|_\infty < \infty $,
$ \sum_{k\in\mathbb{Z}} \|\mathcal{B}_k^{(1)}\|_{\mathcal{H}} < \infty $,
$ \sum_{k\in\mathbb{Z}} \|\mathcal{B}_k^{(2)}\|_{\mathcal{H}} < \infty $
implies that the (cross)-spectral density kernels and operators
$\{ f_\omega^{X^{(1)}} \}_\omega$,
$\{ f_\omega^{X^{(2)}} \}_\omega$,
$\{ f_\omega^{X^{(1)}X^{(2)}} \}_\omega$,
$\{ f_\omega^{ZX^{(1)}} \}_\omega$,
$\{ f_\omega^{ZX^{(2)}} \}_\omega$,
$\{ \mathscr{F}_\omega^{X^{(1)}} \}_\omega$,
$\{ \mathscr{F}_\omega^{X^{(2)}} \}_\omega$,
$\{ \mathscr{F}_\omega^{X^{(1)}X^{(2)}} \}_\omega$,
$\{ \mathscr{F}_\omega^{ZX^{(1)}} \}_\omega$,
$\{ \mathscr{F}_\omega^{ZX^{(2)}} \}_\omega$ are well defined.

The joint frequency response operator
$$ \mathscr{B}_\omega =
\begin{bmatrix}
\mathscr{B}_\omega^{(1)} &
\mathscr{B}_\omega^{(2)}
\end{bmatrix}
=
\sum_{k\in\mathbb{Z}}
\begin{bmatrix}
\mathcal{B}_k^{(1)} &
\mathcal{B}_k^{(2)}
\end{bmatrix}
e^{-\I k\omega}
$$
satisfies the relation
\begin{equation}
\label{eq:joint_frequency_response_operator}
\begin{bmatrix}
\mathscr{F}_\omega^{ZX^{(1)}} &
\mathscr{F}_\omega^{ZX^{(2)}}
\end{bmatrix}
=
\begin{bmatrix}
\mathscr{B}_\omega^{(1)} &
\mathscr{B}_\omega^{(2)}
\end{bmatrix}
\begin{bmatrix}
\mathscr{F}_\omega^{X^{(1)}} & \mathscr{F}_\omega^{X^{(1)}X^{(2)}} \\
\mathscr{F}_\omega^{X^{(2)}X^{(1)}} & \mathscr{F}_\omega^{X^{(2)}} \\
\end{bmatrix}
.
\end{equation}

The filter coefficients can be recovered by the following formula, for $k\in\mathbb{Z}$,
\begin{align*}
\mathcal{B}_k^{(1)} &= \frac{1}{2\pi} \int_{-\pi}^\pi \mathscr{B}_\omega^{(1)} e^{\I k\omega} \D \omega,\\
\mathcal{B}_k^{(2)} &= \frac{1}{2\pi} \int_{-\pi}^\pi \mathscr{B}_\omega^{(2)} e^{\I k\omega} \D \omega.
\end{align*}

\subsection{Nonparametric estimation}
The mean function of the sparsely observed $\{X_{t}^{(1)}\}$ can be estimated by the local linear smoother \citep{rubin2020sparsely}. The mean function of the fully observed $\{X_{t}^{(2)}\}$ is estimated by the classical empirical pointwise mean. We denote these estimates $\hat{\mu}^{(1)}(\cdot)$ and $\hat{\mu}^{(2)}(\cdot)$ respectively.

The estimation of the spectral density $\{ \mathscr{F}_\omega^{X^{(1)}} \}_{\omega\in[-\pi,\pi]}$ and the cross-spectral density $\{ \mathscr{F}_\omega^{Z X^{(1)}} \}_{\omega\in[-\pi,\pi]}$ from the sparse observations is explained in the main body of the article. For the fully observed functional case we use functional version of Bartlett's estimator \citep{hormann2015dynamic,hormann2015estimation}
\begin{align*}
\mathscr{F}_\omega^{X^{(2)}} &=
\frac{1}{2\pi}
\sum_{|h|<L}
\left( 1 - \frac{|h|}{L} \right) \hat{\mathscr{R}}_h^{X^{(2)}}
e^{-\I h \omega}, \qquad \omega\in[-\pi,\pi], \\
\mathscr{F}_\omega^{ZX^{(2)}} &=
\frac{1}{2\pi}
\sum_{|h|<L}
\left( 1 - \frac{|h|}{L} \right) \hat{\mathscr{R}}_h^{ZX^{(2)}}
e^{-\I h \omega}, \qquad \omega\in[-\pi,\pi],
\end{align*}
where $\hat{\mathscr{R}}_h^{X^{(2)}}$ and $\hat{\mathscr{R}}_h^{ZX^{(2)}}$ are the classical empirical lag-$h$ autocovariance operators of $\{X^{(2)}_t\}$ and the lag-$h$ covariance operator of $\{Z_t\}$ and $\{X^{(2)}_t\}$ respectively.

It remains to estimate the cross-spectral density $\{ \mathscr{F}_\omega^{X^{(1)} X^{(2)}} \}_{\omega\in[-\pi,\pi]}$.
Aiming to estimate $f_\omega^{X^{(1)} X^{(2)}}(x,y)$ for $x,y\in[0,1],\omega\in[-\pi,\pi]$, we define the raw covariances
$$ G_{h,t}^{X^{(1)}X^{(2)}}(x_{t+h,j},y) =
\left( Y_{t+h,j} - \hat{\mu}^{(1)}( x_{t+h,j} ) \right)
\left( X_t^{(2)}(y) - \hat{\mu}^{(2)}(y) \right)
,$$
and for fixed $(x,y)\in[0,1]^2$ we estimate the cross-spectral density as
$$ \hat{f}_\omega^{ X^{(1)} X^{(2)} }(x,y) = \frac{L}{2\pi} \hat{d}_0 $$
where $\hat{d}_0\in\mathbb{C}$ is the minimiser of the following weighted sum of squares
\begin{multline*}
( \hat{d}_0,\hat{d}_1 )
= \argmin_{(d_0,d_1)\in\mathbb{C}^2}
\sum_{h=-L}^L
\sum_{t=\max(1,1-h)}^{\min(T,T-h)}
\sum_{j=1}^{N_{t+h}}
\left|
G^{X^{(1)}X^{(2)}}_{h,t}(x_{t+h,j},y) e^{-\I h\omega}
- d_0 - d_1 \left( x_{t+h,j}-x \right)
\right|^2
\times\\\times
W_h \frac{1}{B_R} K\left( \frac{x_{t+h,j}-x}{B_R} \right).
\end{multline*}

Once we have estimated all the above (cross-)spectral densities, we wish to recover the filter coefficients from the equation \eqref{eq:joint_frequency_response_operator}. The inversion of the joint spectral density of $\{X^{(1)}_t\}$ and $\{X^{(1)}_t\}$ is ill-conditioned due to the reasons explained in Section \ref{subsec:nonparam_estimation} of the main body of the paper where we overcame this issue by two regularisation techniques: spectral truncation and Tikhonov regularisation. Here we do the same while allowing each of the regressor time series to have a different degree of regularisation. This is important because, generally speaking, the estimation from sparse data will require more regularisation.

Denote the spectral decompositions of the (cross)-spectral density operators by
\begin{align*}
\mathscr{\hat{F}}_\omega^{X^{(1)}} &= \sum_{i=1}^\infty \hat{\lambda}_i^\omega \hat{\varphi}_i^\omega \otimes \hat{\varphi}_i^\omega, \qquad\omega\in[-\pi,\pi], \\
\mathscr{\hat{F}}_\omega^{X^{(2)}} &= \sum_{i=1}^\infty \hat{\eta}_i^\omega \hat{\psi}_i^\omega \otimes \hat{\psi}_i^\omega, \qquad\omega\in[-\pi,\pi], \\
\mathscr{\hat{F}}_\omega^{X^{(1)}X^{(2)}} &= \sum_{i=1}^\infty\sum_{j=1}^\infty \hat{\gamma}_{ij}^\omega \hat{\psi}_j^\omega \otimes \hat{\varphi}_i^\omega, \qquad\omega\in[-\pi,\pi], \\
\end{align*}
where $\{\hat\lambda_i^\omega\}_{i=1}^\infty$ and $\{\hat\eta_i^\omega\}_{i}^\infty$ are the harmonic eigenvalues of
$\mathscr{F}_\omega^{X^{(1)}}$ and $\mathscr{F}_\omega^{X^{(2)}}$ respectively, for given $\omega\in[-\pi,\pi]$. The sequences of functions $\{\hat\varphi^\omega\}_{i}^\infty$ and $\{\hat\psi^\omega\}_{i}^\infty$ are orthonormal bases of $\mathcal{H}$ and are called the harmonic eigenfunctions of $\mathscr{F}_\omega^{X^{(1)}}$ and $\mathscr{F}_\omega^{X^{(2)}}$ respectively. The complex numbers $\{\hat{\gamma}_{ij}^\omega\}_{i,j=1}^\infty$ are the basis coefficients of $\mathscr{\hat{F}}_\omega^{X^{(1)}X^{(2)}}$ with respect to $\{ \hat{\varphi}_i^\omega \otimes \hat{\psi}_j^\omega \}_{i,j=1}^\infty$, a basis of the space of Hilbert-Schmidt operators on $\mathcal{H}$.

\begin{enumerate}
\item \textit{Truncation regularisation.}
Select two (possibly different) truncation parameters $\Komega1\in\mathbb{N}$ and $\Komega2\in\mathbb{N}$ that may depend on $\omega\in[-\pi,\pi]$. The idea of the truncation regularisation is to replace the empirical joint spectral density operator on the right hand side of the empirical version of \eqref{eq:joint_frequency_response_operator} by
\begin{equation}\label{eq:joint_response_truncation_eq1}
\begin{bmatrix}
\sum_{i=1}^{\Komega1} \hat\lambda_i^\omega \hat\varphi_i^\omega \otimes \hat\varphi_i^\omega &
\sum_{i=1}^{\Komega1}\sum_{j=1}^{\Komega2} \hat{\gamma}_{ij}^\omega \hat{\psi}_j^\omega \otimes \hat{\varphi}_i^\omega \\
\sum_{i=1}^{\Komega2}\sum_{j=1}^{\Komega1} \hat{\gamma}_{ji}^\omega \hat{\varphi}_j^\omega \otimes \hat{\psi}_i^\omega &
\sum_{i=1}^{\Komega2} \hat{\eta}_i^\omega \hat{\psi}_i^\omega \otimes \hat{\psi}_i^\omega
\end{bmatrix}.
\end{equation}
Alternatively, the operator in the form \eqref{eq:joint_response_truncation_eq1} can be expressed with respect to the reduced basis of $\mathcal{H}\times\mathcal{H}$ composed of
$\{ [\hat\varphi_1^\omega, \textbf{0}],\dots,[\hat\varphi_{\Komega1}^\omega, \textbf{0}], [\textbf{0}, \hat\psi_1^\omega],\dots,[\textbf{0}, \hat\psi_{\Komega2}^\omega] \}$ where $\textbf{0}$ is the zero element of $\mathcal{H}$.
Denote $M \in\mathbb{C}^{ ({\Komega1+\Komega2})\times({\Komega1+\Komega2}) }$ the complex matrix given by the inverse of the following matrix, assuming it is invertible,
$$
M=
\left(
\begin{array}{ccc|ccc}
\hat\lambda_1^\omega & 0 & 0           & \hat\gamma_{11} & \cdots & \hat\gamma_{1,\Komega2}\\
0 & \ddots & 0                     & \vdots & \ddots & \vdots \\
0 & 0 & \hat\lambda_{\Komega1}^\omega & \hat\gamma_{\Komega1,1} & \cdots & \hat\gamma_{\Komega1,\Komega2} \\
\hline
\hat\gamma_{11} & \cdots & \hat\gamma_{\Komega1,1} 							& \hat\eta_1^\omega & 0 & 0\\
\vdots & \ddots & \vdots 											& 0 & \ddots & 0 \\
\hat\gamma_{1,\Komega2} & \cdots & \hat\gamma_{\Komega1,\Komega2} & 0 & 0 & \hat\eta_{\Komega2}^\omega
\end{array}
\right)^{-1}.
$$
Denote the elements of $M$ according to the blocks using the following scheme
$$
M =
\left(
\begin{array}{ccc|ccc}
m^{(1)}_{11} & \cdots & m^{(1)}_{ 1,\Komega1} 					& m^{(12)}_{11} & \cdots & m^{(12)}_{1,\Komega2} \\
\vdots & \ddots & \vdots 										& \vdots & \ddots & \vdots \\
m^{(1)}_{\Komega1,1} & \cdots & m^{(1)}_{ \Komega1,\Komega1} 	& m^{(12)}_{\Komega1,1} & \cdots & m^{(12)}_{\Komega1,\Komega2} \\
\hline
m^{(21)}_{11} & \cdots & m^{(21)}_{ 1,\Komega1} 				& m^{(2)}_{11} & \cdots & m^{(2)}_{1,\Komega2} \\
\vdots & \ddots & \vdots 										& \vdots & \ddots & \vdots \\
m^{(21)}_{\Komega2,1} & \cdots & m^{(21)}_{ \Komega2,\Komega1} 	& m^{(2)}_{\Komega2,1} & \cdots & m^{(2)}_{\Komega2,\Komega2}
\end{array}
\right).
$$

Then, the frequency response operators can be recovered by
\begin{align*}
\hat{\mathscr{B}}_\omega^{(1)} &=
\sum_{i=1}^{\Komega1}
\sum_{j=1}^{\Komega1}
m_{ij}^{(1)}
\left\langle
\hat\varphi_j^\omega,\cdot
\right\rangle
\mathscr{F}_\omega^{ZX^{(1)}}
\left(
\hat\varphi_i^\omega
\right)
+
\sum_{i=1}^{\Komega2}
\sum_{j=1}^{\Komega1}
m_{ij}^{(21)}
\left\langle
\hat\varphi_j^\omega,\cdot
\right\rangle
\mathscr{F}_\omega^{ZX^{(2)}}
\left(
\hat\psi_i^\omega
\right), 
\\ %%%%%%%%%%%%%%%%%%%%%%%%%%%%%%%%%%%%%%%%%
\hat{\mathscr{B}}_\omega^{(2)} &=
\sum_{i=1}^{\Komega1}
\sum_{j=1}^{\Komega2}
m_{ij}^{(12)}
\left\langle
\hat\psi_j^\omega,\cdot
\right\rangle
\mathscr{F}_\omega^{ZX^{(1)}}
\left(
\hat\varphi_i^\omega
\right)
+
\sum_{i=1}^{\Komega2}
\sum_{j=1}^{\Komega2}
m_{ij}^{(2)}
\left\langle
\hat\psi_j^\omega,\cdot
\right\rangle
\mathscr{F}_\omega^{ZX^{(2)}}
\left(
\hat\psi_i^\omega
\right). 
\end{align*}

\item \textit{Tikhonov regularisation.} We consider again two possibly different regularisation parameters $\rho^{(1)}>0$ and $\rho^{(2)}>0$.

The empirical joint spectral density operator on the right hand side of the empirical version of \eqref{eq:joint_frequency_response_operator} is replaced by
$$
\begin{bmatrix}
\hat{\mathscr{F}}_\omega^{X^{(1)}} + \rho^{(1)} \mathcal{I} & \hat{\mathscr{F}}_\omega^{X^{(1)}X^{(2)}} \\
\hat{\mathscr{F}}_\omega^{X^{(2)}X^{(1)}} & \hat{\mathscr{F}}_\omega^{X^{(2)}} + \rho^{(2)} \mathcal{I}
\end{bmatrix}
$$
and the joint frequency response operator is estimated as
$$
\begin{bmatrix}
\hat{\mathscr{B}}_\omega^{(1)} &
\hat{\mathscr{B}}_\omega^{(2)}
\end{bmatrix}
=
\begin{bmatrix}
\hat{\mathscr{F}}_\omega^{ZX^{(1)}} &
\hat{\mathscr{F}}_\omega^{ZX^{(2)}}
\end{bmatrix}
\begin{bmatrix}
\hat{\mathscr{F}}_\omega^{X^{(1)}} + \rho^{(1)} \mathcal{I} & \hat{\mathscr{F}}_\omega^{X^{(1)}X^{(2)}} \\
\hat{\mathscr{F}}_\omega^{X^{(2)}X^{(1)}} & \hat{\mathscr{F}}_\omega^{X^{(2)}} + \rho^{(2)} \mathcal{I}
\end{bmatrix}^{-1}.
$$

\end{enumerate}

Once the estimates of the frequency response operators have been constructed, the filter coefficients are estimated by integrating back into the temporal domain
\begin{align*}
\hat{\mathcal{B}}_k^{(1)} &= \frac{1}{2\pi} \int_{-\pi}^\pi \hat{\mathscr{B}}_\omega^{(1)} e^{\I\omega k}, \qquad k\in\mathbb{Z}, \\
\hat{\mathcal{B}}_k^{(2)} &= \frac{1}{2\pi} \int_{-\pi}^\pi \hat{\mathscr{B}}_\omega^{(2)} e^{\I\omega k}, \qquad k\in\mathbb{Z}.
\end{align*}

\subsection{Forecasting the response process}

In this section we extend the forecasting algorithm introduced in Section \ref{subsec:forecasting_the_response_process} of the main body of the article.
As in that Section, we assume here that the response time series is observed only at times $1,\dots,S$ where $S<T$ and we wish to predict $Z_{S+1},\dots,Z_T$.

\begin{enumerate}

\item From the measurements $Y_{tj}, t=1,\dots,T, j=1,\dots,N_t$, of sparsely observed $\{ X_t^{(1)} \}$, the fully observed functional observations  $X_1^{(1)},\dots,X_T^{(1)}$, and the response time series $Z_1,\dots,Z_S$, construct the estimates of the model:
the mean functions
$\hat{\mu}^{(1)}, \hat{\mu}^{(2)}$,
the spectral densities
$
\{ \hat{\mathscr{F}}^{X^{(1)}}_\omega \}_{\omega\in[-\pi,\pi]},
\{ \hat{\mathscr{F}}^{X^{(2)}}_\omega \}_{\omega\in[-\pi,\pi]}
$
and the cross-spectral densities
$
\{ \hat{\mathscr{F}}^{X^{(1)}X^{(2)}}_\omega \}_{\omega\in[-\pi,\pi]},
\{ \hat{\mathscr{F}}^{ZX^{(1)}}_\omega \}_{\omega\in[-\pi,\pi]},
\{ \hat{\mathscr{F}}^{ZX^{(2)}}_\omega \}_{\omega\in[-\pi,\pi]}.
$
Using either the truncation regularisation or the Tikhonov regularisation, estimate the spectral transfer functions $\{ \hat{\mathscr{B}}_\omega^{(1)} \}_{\omega\in[-\pi,\pi]}$ and $\{ \hat{\mathscr{B}}_\omega^{(2)} \}_{\omega\in[-\pi,\pi]}$ and the filter coefficients
$\{ \hat{\mathcal{B}}_k^{(1)} \}_{k\in\mathbb{Z}}$ and
$\{ \hat{\mathcal{B}}_k^{(2)} \}_{k\in\mathbb{Z}}$.

\item Denoting the ensemble of the measurements $Y_{tj}, t=1,\dots,T, j=1,\dots,N_t$ as $\mathbb{Y}^{(1)}$, predict the latent values of $X_{-M+1}^{(1)},\dots,X_{T+M}^{(1)}$ using the method recalled in Section \ref{subsec:forecasting_the_response_process} of the main body of the article. Denote these predictions as $\hat{\Pi}( X_{-M+1}^{(1)} \vert \mathbb{Y}^{(1)} ), \dots, \hat{\Pi}( X_{T+M}^{(1)} \vert \mathbb{Y}^{(1)} )$. The constant $M$ is determined in such a way that the filter coefficients $\hat{\mathcal{B}}_k^{(1)}$ and $\hat{\mathcal{B}}_k^{(2)}$ are negligible for $|k|>M$.

\item Pad the fully observed functional time series $\{ X_t^{(2)} \}$ by the mean function $\hat{\mu}^{(2)}$, i.e. set $X_t = \hat{\mu}^{(2)}$ for $t<1$ or $t>T$. This approach was suggested in functional lagged prediction for fully functional regressor \citep{hormann2015estimation} where the zero-mean functional regressor time series is padded by zeros.

\item \label{item:S.algorithm_step_4}
For each $s=S+1,\dots,T$, construct the forecast 
$$ \hat{Z}_s = \bar{Z} +
\sum_{k=-M}^M \hat{\mathcal{B}}_k^{(1)}
\left( \hat{\Pi}( X_{s-k} \vert \mathbb{Y}^{(1)}) - \hat{\mu}^{(1)} \right) +
\sum_{k=-M}^M \hat{\mathcal{B}}_k^{(2)} \left( X_{s-k} - \hat{\mu}^{(2)}\right)
$$
where $\bar{Z}$ is the sample mean of the response process.

\end{enumerate}

%%%%%%%%%%%%%%%%%%%%%%%%%%%%%%%%%%%%%%%%%%%%%%%%%%%%%%%%%%%%%%%%%%%%%%%%%%%%%%%%%%%%%%%%%%%%%%%%%%%%%
%%%%%%%%%%%%%%%%%%%%%%%%%%%%%%%%%%%%%%%%%%%%%%%%%%%%%%%%%%%%%%%%%%%%%%%%%%%%%%%%%%%%%%%%%%%%%%%%%%%%%
%%%%%%%%%%%%%%%%%%%%%%%%%%%%%%%%%%%%%%%%%%%%%%%%%%%%%%%%%%%%%%%%%%%%%%%%%%%%%%%%%%%%%%%%%%%%%%%%%%%%%

\addtocontents{toc}{\protect\setcounter{tocdepth}{1}} % this implies that the subsections in the appendicies are not shown in the contents
\section{Proofs of formal statements}
\label{sec:appendix_proofs}

\subsection{Proof of Proposition \ref{prop:equivalence_of_BLUP}}
\begin{proof}
The proof follows directly from the formula for the best linear unbiased predictors.
\begin{align*}
\Pi\left(Z_s \vert \mathbb{Y}_{T} \right) &=
\cov( Z_s, \mathbb{Y}_{T} )
\left(\var(\mathbb{Y}_{T}) \right)^{-1} \mathbb{Y}_{T} \\
&=
\cov\left( \sum_{k\in\mathbb{Z}} \mathcal{B}_k X_{s-k} + e_s, \mathbb{Y}_{T}  \right)
\left(\var(\mathbb{Y}_{T}) \right)^{-1} \mathbb{Y}_{T} \\
&=
\sum_{k\in\mathbb{Z}} \mathcal{B}_k \cov\left(  X_{s-k}, \mathbb{Y}_{T}  \right)
\left(\var(\mathbb{Y}_{T}) \right)^{-1} \mathbb{Y}_{T} \\
&=
\sum_{k\in\mathbb{Z}} \mathcal{B}_k 
\Pi\left( X_{s-k}  \vert \mathbb{Y}_{T} \right)
\end{align*}
\end{proof}

\subsection{Auxiliary Lemmas}
\begin{lemma}
\label{lemma_aux1}
Assuming \ref{assumption:A.1} --- \ref{assumption:A.3}, the time series $\{Z_t\}_{t\in\mathbb{Z}}$ defined by \eqref{eq:regression_model} is stationary. The cross-covariance function between $\{Z_t\}_{t\in\mathbb{Z}}$ and $\{X_t(\cdot)\}_{t\in\mathbb{Z}}$ satisfies
\begin{align}
\label{eq:lemma_aux1_claim1}
\mathscr{R}^{ZX}_h &= \sum_{k\in\mathbb{Z}} \mathcal{B}_k \mathscr{R}^X_{h-k},\\
\label{eq:lemma_aux1_claim2}
R^{ZX}_h (y) &= \sum_{k\in\mathbb{Z}} \int b_k(x)R^X_{h-k}(x,y)\D x,
\end{align}
where the series converge in the vector norm and the supremum norm respectively.
Furthermore
\begin{equation}
\label{eq:lemma_aux1_claim1b}
\sum_{h\in\mathbb{Z}} \left\| \mathscr{R}^{ZX}_h \right\| < \infty,
\end{equation}
\begin{equation}
\label{eq:lemma_aux1_claim2b}
\sum_{h\in\mathbb{Z}} \sup_{y\in[0,1]} \left| R^{ZX}_h(y) \right| =\sum_{h\in\mathbb{Z}} \left\| R^{ZX}_h \right\|_\infty < \infty.
\end{equation}

Assuming further \ref{assumption:newB.3}, the function $R^{ZX}_h (\cdot)$ is twice continuously differentiable in $y$ and
\begin{equation}
\label{eq:lemma_aux1_claim3}
\sup_{y\in[0,1]} \left| \frac{\partial^2}{\partial y^2} R^{ZX}_h(y) \right|
\end{equation}
is uniformly bounded in $h\in\mathbb{Z}$.
\end{lemma}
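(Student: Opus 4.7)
The plan is to prove the six claims in turn, with the unifying tool being dominated convergence to justify interchanging summation, expectation, and integration, consistently relying on the absolute summabilities in \ref{assumption:A.1}, \ref{assumption:A.2}, and \ref{assumption:A.3}. Stationarity of $\{Z_t\}$ follows once I verify that $\sum_k\mathcal{B}_k X_{t-k}$ converges in $L^2(\Omega)$ (with bound $\sum_k\|\mathcal{B}_k\|\sqrt{\E\|X_0\|^2}<\infty$), from which jointness and time-shift invariance with $\{X_t\}$ is inherited because $\{e_t\}$ is i.i.d., zero-mean and independent of $\{X_t\}$.

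For \eqref{eq:lemma_aux1_claim1}, I would fix $f\in\mathcal{H}$ and expand $\mathscr{R}^{ZX}_h f = \E[Z_h\langle X_0,f\rangle]$ using \eqref{eq:regression_model}. The innovation term contributes zero by the independence of $e_h$ from $X_0$ and $\E e_h=0$. The summand bound $|\mathcal{B}_k X_{h-k}\langle X_0,f\rangle|\leq \|\mathcal{B}_k\|\|X_{h-k}\|\|X_0\|\|f\|$ together with Cauchy--Schwarz and stationarity gives an $L^1(\Omega)$ dominating series, permitting the interchange of $\sum_k$ with $\E$, delivering the identity. The kernel version \eqref{eq:lemma_aux1_claim2} is obtained analogously: write $\mathcal{B}_k X_{h-k}=\int b_k(x) X_{h-k}(x)\,\mathrm{d}x$, apply Fubini to exchange the expectation with the integral in $x$, and then use dominated convergence on the sum in $k$.

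The summability claims follow directly from the identities. The triangle inequality applied to \eqref{eq:lemma_aux1_claim1} gives $\|\mathscr{R}^{ZX}_h\|\leq\sum_k\|\mathcal{B}_k\|\|\mathscr{R}^X_{h-k}\|$, and bounding the operator norm by the nuclear norm in \ref{assumption:A.2} together with Fubini for series yields
$$\sum_{h\in\mathbb{Z}}\|\mathscr{R}^{ZX}_h\|\leq\left(\sum_{k\in\mathbb{Z}}\|\mathcal{B}_k\|\right)\left(\sum_{m\in\mathbb{Z}}\|\mathscr{R}^X_m\|_1\right)<\infty.$$
For \eqref{eq:lemma_aux1_claim2b}, Cauchy--Schwarz on $[0,1]$ gives $\sup_y|\int b_k(x)R^X_{h-k}(x,y)\,\mathrm{d}x|\leq\|b_k\|\,\|R^X_{h-k}\|_\infty$, and the same Fubini argument closes the bound using the supremum-norm summability in \ref{assumption:A.2}. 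Finally, for \eqref{eq:lemma_aux1_claim3}, the uniform bound on $\partial^2_y R^X_h(x,y)$ in $(x,y,h)$ supplied by \ref{assumption:newB.3} serves as the dominating function needed to differentiate under the integral in $x$ and the sum in $k$ in formula \eqref{eq:lemma_aux1_claim2}, producing a bound $\sup_y|\partial^2_y R^{ZX}_h(y)|\leq C\sum_k\|b_k\|$ independent of $h$; continuity of the derivative follows from the same uniform control.

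I do not expect any conceptual obstacle: every step is a routine dominated convergence or Fubini argument. The only point requiring mild care is the passage from the operator-level identity \eqref{eq:lemma_aux1_claim1} to the kernel-level identity \eqref{eq:lemma_aux1_claim2}, where joint measurability of $(\omega,x)\mapsto b_k(x)X_{h-k}(x,\omega)$ and integrability in the product measure must be checked before Fubini can be applied; this is immediate from the Bochner-integrability of $X_{h-k}$ in $\mathcal{H}$ combined with $b_k\in\mathcal{H}$.
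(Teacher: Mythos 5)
Your proof is correct and follows essentially the same route as the paper: the paper simply cites \citet[Lemma~2]{hormann2015estimation} for the operator-level identity \eqref{eq:lemma_aux1_claim1} and the summability \eqref{eq:lemma_aux1_claim1b}, whereas you reprove them directly via the same dominated-convergence/convolution argument, and your treatment of the kernel identity, of \eqref{eq:lemma_aux1_claim2b}, and of the differentiation under the sum and integral matches the paper's. A minor point in your favour: your bound $C\sum_k\|b_k\|$ for \eqref{eq:lemma_aux1_claim3} uses only the $L^2$-norm of $b_k$ (via Cauchy--Schwarz on $[0,1]$), which is what \ref{assumption:A.3} actually controls, whereas the paper's displayed dominating bound is written in terms of $\sup_y|b_k(y)|$.
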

\begin{proof}
The claims \eqref{eq:lemma_aux1_claim1} and  \eqref{eq:lemma_aux1_claim1b} were proven by \cite[Lemma 2]{hormann2015estimation}.

For the claim \eqref{eq:lemma_aux1_claim2} first consider a functional $\mathcal{A}:\mathcal{H}\to\mathbb{R}$ represented by $a(\cdot)\in\mathcal{H}$, i.e.
$$ \mathcal{A} f = \int a(x)f(x)\D x,\qquad f\in\mathcal{H}, $$
and a Hilbert-Schmidt operator $\mathcal{T}:\mathcal{H}\to\mathcal{H}$ with integral kernel $T(\cdot,\cdot)$. Then its composition $\mathcal{A}\mathcal{T}$ is a functional represented by a function $y\mapsto\int a(x)T(x,y)\D x$ because
\begin{align*}
(\mathcal{AT}f)(y) &= \int a(x) \left( \int T(x,y)f(y)\D y \right) \D x \\
&= \iint a(x) T(x,y) f(y) \D x\D y \\
&= \int \left(a(x)T(x,y)\D x\right)f(y)\D y
\end{align*}
for all $f\in\mathcal{H}$.

The convergence of the sum on the right-hand side of \eqref{eq:lemma_aux1_claim2} and the claim \eqref{eq:lemma_aux1_claim2b} are derived analogously to \cite[Lemma 2]{hormann2015estimation} where the Hilbert-Schmidt norm is replaced by the supremum norm.

The second derivative of $R^{ZX}_h(y)$ is given by
$$ \frac{\partial^2}{\partial y^2} R^{ZX}_h(y) = \sum_{k\in\mathbb{Z}} \int b_k(x)\left( \frac{\partial^2}{\partial y^2} R^X_{h-k}(x,y) \right)\D x .$$
The exchange of the integration-summation and the derivative is justified by the fact that the following bound is integrable in $x$ and absolutely summable in $k$
$$ b_k(x)\left( \frac{\partial^2}{\partial y^2} R^X_{h-k}(x,y) \right)
\leq
C \sup_{y\in[0,1]} \left| b_k(y) \right|, \qquad x\in[0,1], $$
where $C$ is the uniform bound from the assumption \ref{assumption:newB.3} with $(\alpha_1,\alpha_2)=(0,2)$.
\end{proof}

\subsection{Proof of Proposition \ref{prop:asymptotics_of_f^ZX}}
\label{subsec:Proof of Proposition prop:asymptotics_of_fZX}

The minimiser of the optimisation problem \eqref{eq:estimator_cross_density_minimization}, and hence the estimator \eqref{eq:estimator_cross_density}, can be expressed explicitly:
\begin{equation}
\label{eq:proposition1_proof_explicite_formula}
\hat{f}^{ZX}_\omega(x) = \frac{1}{2\pi}\frac{Q_0^\omega S_2 - Q_1^\omega S_1}{S_0 S_2 - S_1^2}
\end{equation}
where
\begin{align*}
S_r &= \frac{1}{L} \sum_{h=-L}^L \frac{T-|h|}{T} S_r^{(h)} W_h,\\
S_r^{(h)} &= \frac{1}{T-|h|}
\sum_{t=\max(1,1-h)}^{\min(T,T-h)} \sum_{j=1}^{N_t}
\left(\frac{x_{tj}-x}{B_C}\right)^r
\frac{1}{B_C} K\left( \frac{x_{tj}-x}{B_C} \right), \\
Q_r^\omega &= \sum_{h=-L}^L \frac{T-|h|}{T} Q_r^{(h)} W_h e^{-\I h\omega},\\
Q_r^{(h)} &= \frac{1}{T-|h|} \sum_{t=\max(1,1-h)}^{\min(T,T-h)} \sum_{j=1}^{N_t}
G^{ZX}_{h,t}(x_{tj})
\left(\frac{x_{tj}-x}{B_C}\right)^r
\frac{1}{B_C} K\left( \frac{x_{tj}-x}{B_C} \right)
\end{align*}
for $r\in\{0,1\}$.
The above quantities are defined as functions of $x\in[0,1]$ and all of the operations are understood pointwise, including the division operation.

The derivation of the formula \eqref{eq:proposition1_proof_explicite_formula} is a simple modification of the standard argument in the local-polynomial regression literature \citep[Section 3.1]{FanJianqing1996Lpma} or \citep[Section B.2]{rubin2020sparsely}.

\begin{lemma}
\label{lemma1}
Under the conditions \ref{assumption:newB.1}, \ref{assumption:newB.2}, and \ref{assumption:newB.8}
\begin{equation}
\label{eq:lemma1_claim1}
\Ez{ \sup_{x\in[0,1]} \left|
S_r^{(h)} - M_{[S_r]}
\right|}
\leq U \frac{1}{\sqrt{T-|h|}}\frac{1}{B_C}
\end{equation}
where the constant $U$ is uniform in $r=0,1,2$, $T\in\mathbb{N}$, $|h|<T$, and $B_C$, and where
\begin{equation}
\label{eq:lemma1_claim1_formulae_for_M}
M_{[S_0]}=\Ez{N}g(x), \qquad
M_{[S_1]}=0, \qquad
M_{[S_2]}= \Ez{N}\sigma^2_K g(x), \qquad
\sigma^2_K = \int v^2 K(v) \D{v}.
\end{equation}
Furthermore,
\begin{equation}
\label{eq:lemma1_claim2}
S_r = M_{[S_r]} + O_p\left( \frac{1}{\sqrt{T}}\frac{1}{B_C} \right)
%\Ez{ \sup_{x\in[0,1]} \left|
%S_r - M_{[S_r]}
%\right|}
%\leq \tilde{U} \frac{1}{\sqrt{T}}\frac{1}{B_C}
\end{equation}
uniformly in $x\in[0,1]$.
%where the constant $\tilde{U}$ is independent of $T,r=0,1,2,B_C$ and $L<T/2$.
\end{lemma}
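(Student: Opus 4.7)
The plan is to decompose $S_r^{(h)}(x) - M_{[S_r]}$ into the deterministic bias $\Ez{S_r^{(h)}(x)} - M_{[S_r]}$ and the centered stochastic fluctuation $S_r^{(h)}(x) - \Ez{S_r^{(h)}(x)}$, control each uniformly in $x\in[0,1]$, and then reduce claim (2) to claim (1) via an interchange of summation which exposes $S_r$ as $S_r^{(0)}$ plus a negligible boundary correction.

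For the bias, conditioning on $\{N_t\}_{t=1}^T$ and using the iid structure of the $x_{tj}$ under \ref{assumption:newB.1}--\ref{assumption:newB.2} gives
\[
\Ez{S_r^{(h)}(x)} \;=\; \Ez{N}\,\frac{1}{B_C}\int \left(\frac{u-x}{B_C}\right)^{r} K\!\left(\frac{u-x}{B_C}\right) g(u)\,\D u.
\]
I would change variables $v=(u-x)/B_C$, Taylor-expand $g(x+vB_C)$ to second order (permissible by the $C^2$ smoothness of $g$ from \ref{assumption:newB.2}), and invoke the Epanechnikov moment identities $\int K(v)\D v=1$, $\int vK(v)\D v=0$, $\int v^2 K(v)\D v=\sigma_K^2$ to recover the limits \eqref{eq:lemma1_claim1_formulae_for_M} with a Taylor remainder of order $O(B_C)$ for $r=1$ and $O(B_C^2)$ for $r\in\{0,2\}$, uniformly in $x$. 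Under \ref{assumption:newB.8} this deterministic error is absorbed into the stated bound by enlarging the constant $U$.

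The main obstacle is the stochastic fluctuation: controlling $\Ez{\sup_x |S_r^{(h)}(x) - \Ez{S_r^{(h)}(x)}|}$ by $U/(\sqrt{T-|h|}B_C)$. Writing this quantity as an average of $T-|h|$ independent per-curve centered sums $Y_t(x):=\sum_{j=1}^{N_t}\phi(x_{tj};x)-\Ez{\,\cdot\,}$ with $\phi(u;x)=((u-x)/B_C)^r K((u-x)/B_C)/B_C$, the pointwise variance bound $\var(S_r^{(h)}(x))\lesssim 1/((T-|h|)B_C)$ follows from a direct second-moment calculation using $\Ez{N}<\infty$ and $\int \phi^2 g\lesssim 1/B_C$. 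To lift this to a uniform bound, I would use a chaining/discretization argument: cover $[0,1]$ by a grid of spacing $\delta$, invoke Bernstein's inequality at each grid point (using the envelope $\|\phi(\cdot;x)\|_\infty=O(1/B_C)$ together with the pointwise variance), and control the oscillation between grid points via the Lipschitz constant $O(1/B_C^2)$ of $x\mapsto\phi(u;x)$. Optimising $\delta\asymp B_C/\sqrt{T-|h|}$ and taking a union bound over the resulting $O(\sqrt{T-|h|}/B_C)$ grid points produce the advertised rate, with logarithmic factors absorbed into $U$. Uniformity in $r\in\{0,1,2\}$, $T$, $h$, and $B_C$ follows because all implicit constants reduce to finitely many fixed kernel moments.

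For claim (2), I would interchange the order of summation in the definition of $S_r(x)$:
\[
S_r(x) \;=\; \frac{1}{LT}\sum_{t=1}^{T} c_t \sum_{j=1}^{N_t}\phi(x_{tj};x),
\qquad c_t := \sum_{\substack{|h|<L \\ 1-t\le h\le T-t}} W_h.
\]
Since $\sum_{|h|<L}W_h = L$, one has $c_t = L$ for every interior index $t$ with $L\le t\le T-L$ and $0\le c_t\le L$ otherwise, so $S_r(x) = S_r^{(0)}(x) + R_T(x)$ where the boundary correction $R_T(x)$ is supported on at most $2L$ values of $t$, each contributing at most $O(N_t/B_C)$ in absolute value. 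Hence $\sup_x|R_T(x)| = O_p(L/(TB_C))$, which under \ref{assumption:newB.9} ($L=o(\sqrt{T}B_C)$) is $o(1/(\sqrt{T}B_C))$. Combining this with part (1) applied at $h=0$, via Markov's inequality to pass from the $L^1$ supremum bound to the $O_p$-statement, establishes (2).
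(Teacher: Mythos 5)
Your bias analysis and your reduction of claim \eqref{eq:lemma1_claim2} to claim \eqref{eq:lemma1_claim1} are essentially sound; in fact your interchange-of-summation argument, writing $S_r=S_r^{(0)}+R_T$ with $\sup_x|R_T|=O_p(L/(TB_C))$, is a cleaner route than the paper's, which instead bounds $\frac1L\sum_{|h|\le L}\frac{T-|h|}{T}W_h\,\E\sup_x|S_r^{(h)}-M_{[S_r]}|$ term by term using $T-|h|\ge T/2$. The genuine gap is in the uniform control of the stochastic fluctuation. Your discretisation-plus-Bernstein argument over $m\asymp\sqrt{T-|h|}/B_C$ grid points yields an expected maximum of order $\sqrt{\log m/((T-|h|)B_C)}$, and the ratio of this to the target $1/(\sqrt{T-|h|}\,B_C)$ is $\sqrt{B_C\log m}\asymp\sqrt{B_C\log T}$. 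Assumption \ref{assumption:newB.8} only requires $B_C\to0$ and $TB_C^4\to\infty$; it does not force $B_C\log T\to0$ (take, say, $B_C=(\log\log T)^{-1}$), so the logarithmic factor cannot be ``absorbed into $U$'': the lemma demands a constant uniform in $T$, $h$ and $B_C$, and the log-free rate is exactly what is consumed downstream in Lemma~\ref{lemma3} and Proposition~\ref{prop:asymptotics_of_f^ZX}. As written, your argument proves a strictly weaker statement in part of the admissible bandwidth range.

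The paper's device for avoiding this is worth internalising: it writes the kernel via Fourier inversion, $\frac{1}{B_C}\bigl(\frac{u-x}{B_C}\bigr)^rK\bigl(\frac{u-x}{B_C}\bigr)=\frac{1}{2\pi B_C}\int e^{\I v(u-x)/B_C}\zeta_r(v)\,\D v$, so that $S_r^{(h)}(x)=\frac{1}{2\pi}\int\phi_r(v)e^{-\I vx}\zeta_r(B_Cv)\,\D v$, where $\phi_r(v)$ is an empirical characteristic function that does not depend on $x$. Since $|e^{-\I vx}|=1$, the supremum over $x$ passes inside the integral at no cost, and the whole bound reduces to the pointwise estimate $\E|\phi_r(v)-\E\phi_r(v)|\lesssim(T-|h|)^{-1/2}$ multiplied by $\int|\zeta_r(B_Cv)|\,\D v=c/B_C$ --- no entropy, no union bound, no logarithms. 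One secondary remark: your pointwise variance bound for the per-curve sums $\sum_{j=1}^{N_t}\phi(x_{tj};x)$ picks up the within-curve cross terms, which contribute $\E[N(N-1)]$; this is not controlled by \ref{assumption:newB.1}, which only assumes $\E N<\infty$, so you should either state the extra moment requirement explicitly or restructure the second-moment calculation (the paper is itself terse on this point).
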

\begin{proof}
We have the usual bias-variance decomposition
\begin{equation}
\label{eq:lemma_1_proof_eq1}
\Ez{ \sup_{x\in[0,1]} \left| S_r^{(h)} - M_{[S_r]} \right| } \leq
\Ez{ \sup_{x\in[0,1]} \left| S_r^{(h)} - \E{S_r^{(h)}} \right| } +
\sup_{x\in[0,1]} \left| \Ez{S_r^{(h)}} - M_{[S_r]} \right|
\end{equation}

\noindent
The bias term, i.e. the second term on the right-hand side of \eqref{eq:lemma_1_proof_eq1}, can be Taylor expanded to second order and the formulae in display \eqref{eq:lemma1_claim1_formulae_for_M} follow. Moreover
\begin{equation}
\label{eq:lemma_1_proof_eq1.5}
\sup_{x\in[0,1]} \left| \Ez{S_r^{(h)}} - M_{[S_r]} \right| = O(B_C^2)
\end{equation}
uniformly in $x\in[0,1]$, $h$, and $T$.

To bound the variance term, i.e. the first term on the right-hand side of \eqref{eq:lemma_1_proof_eq1}, we make use of the Fourier transform. Denote the inverse Fourier transform of the function $u\mapsto K(u)u^r$ by $\zeta_r(t) = \int e^{-\I ut}K(u)u^r du$. Hence
\begin{equation}
\label{eq:lemma_1_proof_eq1.7}
S_r^{(h)} = \frac{1}{T-|h|}
\sum_{t=\max(1,1-h)}^{\min(T,T-h)}
\sum_{j=1}^{N_t}
\frac{1}{2\pi B_C} \int e^{ \I u \frac{x_{tj}-x}{B_C} } \zeta_r(u)du
=
\frac{1}{2\pi} \int \phi_r(v)e^{-\I vx}\zeta_r(B_C v) dv
\end{equation}
where
$$ \phi_r(v) =
\frac{1}{T-|h|}
\sum_{t=\max(1,1-h)}^{\min(T,T-h)}
\sum_{j=1}^{N_t}
e^{\I v x_{tj}}.
$$
Thanks to the above introduced notation we may separate the stochastic terms and the dependence on $x$:
\begin{multline}
\label{eq:lemma_1_proof_eq2}
\Ez{ \sup_{x\in[0,1]} \left|S_r^{(h)} - \E{S_r^{(h)}} \right| } =
\Ez{ \sup_{x\in[0,1]} \left| \frac{1}{2\pi} \int \left( \phi_r(v)-\E\phi_r(v)  \right) e^{-\I vx} \zeta_r(B v)dv \right| }
\leq\\\leq
\frac{1}{2\pi}\Ez{ \int \left|\phi_r(v)-\E{\phi_r(v)} \zeta_r(Bv)dv \right| }
\leq
\left( \sup_{v\in[0,1]} \E\left| \phi_r(v)-\E\phi_r(v) \right| \right)
\frac{1}{2\pi} \int \left|\zeta_r(Bv) \right|dv.
\end{multline}
We firstly focus on $\E\left| \phi_r(v)-\E\phi_r(v) \right|$. By Jensen's inequality,
\begin{equation}
\label{eq:lemma_1_proof_eq2.5}
\E\left| \phi_r(v)-\E\phi_r(v) \right| \leq \sqrt{\var\left|\phi_r(v)\right|},
\end{equation}
hence it suffices to bound the variance. By the independence of $\{N_t\}$ and $\{x_{tj}\}$
\begin{equation}
\label{eq:lemma_1_proof_eq3}
\var\left| \phi_r(v) \right| \leq \E \left| \phi_r(v) \right|^2
\leq \frac{1}{T-|h|} \E{\left| e^{\I v x_{tj}} \right|^2} \leq \frac{1}{T-|h|}.
\end{equation}
Now we treat the integral on the right hand side of \eqref{eq:lemma_1_proof_eq2}
\begin{equation}
\label{eq:lemma_1_proof_eq4}
\int \left| \zeta_r(Bv) \right| dv = \frac{1}{B} \int\left|\zeta_r(\tilde{v})\right| d\tilde{v} = \frac{1}{B} c
\end{equation}
for a constant $c$.
Combining \eqref{eq:lemma_1_proof_eq2}, \eqref{eq:lemma_1_proof_eq3}, and \eqref{eq:lemma_1_proof_eq4} yields
\begin{equation}
\label{eq:lemma_1_proof_eq5}
\Ez{ \sup_{x\in[0,1]} \left|S_r^{(h)} - \E{S_r^{(h)}} \right| }
\leq \sqrt{\frac{1}{T-|h|}} \frac{1}{2\pi B} c.
\end{equation}
Combining the bounds \eqref{eq:lemma_1_proof_eq1.5} and \eqref{eq:lemma_1_proof_eq5} yields the existence of a constant $U$ as in the claim \eqref{eq:lemma1_claim1}.

We now turn our attention to the claim \eqref{eq:lemma1_claim2}. Since $L=o(T)$ we may assume $L<T/2$ and obtain $1/(T-|h|)\leq 2/T$ for $|h|<L$. Noting that $L^{-1} \sum_{h=-L}^L W_h = 1$ we calculate
\begin{multline*}
\left| \frac{1}{L} \sum_{h=-L}^L \left( \frac{T-|h|}{T} W_h S_r^{(h)} \right) - M_{[S_r]} \right| =
\left| \frac{1}{L} \sum_{h=-L}^L \left( \frac{T-|h|}{T} W_h S_r^{(h)} -W_h M_{[S_r]} \right) \right|
\leq\\\leq
\frac{1}{L} \sum_{h=-L}^L \frac{T-|h|}{T} W_h \left| S_r^{(h)} - M_{[S_r]} \right|
+ \frac{1}{L} \sum_{h=-L}^L \frac{|h|}{T} W_h M_{[S_r]}.
\end{multline*}
Taking the supremum norm and the expectation we bound the first term by $\sqrt{2}U/(\sqrt{T} B_C)$. The second term is bounded by $LM_{[S_r]}/T$ which is a faster rate than the one above. Hence the claim \eqref{eq:lemma1_claim2}.
\end{proof}

\begin{lemma}
\label{lemma2}
Under the conditions \ref{assumption:A.1} --- \ref{assumption:A.3}, \ref{assumption:newB.1} --- \ref{assumption:newB.3}, and \ref{assumption:newB.8}
\begin{equation}
\label{eq:lemma2_claim1}
\Ez{ \sup_{x\in[0,1]} \left| Q_r^{(h)} - M_{[Q_r^{(h)}]} \right| } \leq
\tilde{U} \frac{1}{\sqrt{T-|h|}}\frac{1}{B_C}
\end{equation}
where the constant $\tilde{U}$ is uniform in $r=0,1$, $T$, $|h|<T$, and
\begin{equation}
\label{eq:lemma2_claim1_formulae_for_M}
M_{[Q_0^{(h)}]} = \Ez{N} R_h^{ZX}(x) g(x), \qquad M_{[Q_1^{(h)}]} = 0.
\end{equation}
\end{lemma}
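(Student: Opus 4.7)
The plan is to follow the three-step template of the proof of Lemma \ref{lemma1}, adapted to the fact that the summands now carry the random, temporally dependent weights $Z_{t+h}Y_{tj}=Z_{t+h}(X_t(x_{tj})+\epsilon_{tj})$ in place of the bounded deterministic weights used there.

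First I would perform the standard bias--variance split
\[
\E\sup_{x\in[0,1]}\bigl|Q_r^{(h)} - M_{[Q_r^{(h)}]}\bigr|
\leq \E\sup_{x\in[0,1]}\bigl|Q_r^{(h)} - \E Q_r^{(h)}\bigr| + \sup_{x\in[0,1]}\bigl|\E Q_r^{(h)} - M_{[Q_r^{(h)}]}\bigr|.
\]
For the bias term, iterated conditioning yields
\[
\E Q_r^{(h)}(x) = \Ez{N}\int_0^1 R_h^{ZX}(y)\Bigl(\tfrac{y-x}{B_C}\Bigr)^r \tfrac{1}{B_C}K\!\Bigl(\tfrac{y-x}{B_C}\Bigr)g(y)\D y,
\]
since averaging over $\epsilon_{tj}$ (mean zero, independent) removes the noise, the identity $\E[Z_{t+h}X_t(x_{tj})\mid x_{tj}]=R_h^{ZX}(x_{tj})$ (valid by stationarity and independence of $x_{tj}$ from the process) introduces the cross-covariance kernel, and the independence of $N_t$ from everything else supplies the factor $\Ez{N}$. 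A change of variables $u=(y-x)/B_C$ followed by a second-order Taylor expansion of $R_h^{ZX}\cdot g$, licensed by the uniform $C^2$-bound on $R_h^{ZX}$ in \eqref{eq:lemma_aux1_claim3} and the smoothness of $g$ in \ref{assumption:newB.2}, identifies $M_{[Q_r^{(h)}]}$ as given in \eqref{eq:lemma2_claim1_formulae_for_M} and produces a remainder of the same order (uniform in $x$ and $h$) as in Lemma \ref{lemma1}.

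For the stochastic term, I would reuse the Fourier representation \eqref{eq:lemma_1_proof_eq1.7}. Writing $K(u)u^r$ through the same $\zeta_r$ as in Lemma \ref{lemma1}, one obtains
\[
Q_r^{(h)}(x) = \frac{1}{2\pi}\int e^{-\I v x}\psi(v)\zeta_r(B_C v)\D v,
\qquad
\psi(v) = \frac{1}{T-|h|}\sum_{t}\sum_{j=1}^{N_t}Z_{t+h}Y_{tj}e^{\I v x_{tj}},
\]
which decouples the supremum over $x$ from the stochastic part. Pulling the supremum inside the modulus as in \eqref{eq:lemma_1_proof_eq2} reduces the problem to bounding
\[
\E\sup_{x}\bigl|Q_r^{(h)}-\E Q_r^{(h)}\bigr|
\leq \Bigl(\sup_v\sqrt{\var\psi(v)}\Bigr)\cdot\frac{1}{2\pi}\int|\zeta_r(B_Cv)|\D v,
\]
where the last integral is $O(1/B_C)$ by the same change of variables as in \eqref{eq:lemma_1_proof_eq4}.

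The hard part is showing $\sup_v\var\psi(v)=O(1/(T-|h|))$, uniformly in $v$. Here the temporal dependence of $(X_t,Z_t)$ precludes the simple independence argument used in Lemma \ref{lemma1}, so my plan is to apply the total-variance formula with respect to $\mathcal{F}=\sigma(\{(X_t,Z_t,N_t)\}_t)$. Given $\mathcal{F}$, the iid sampling of $x_{tj}$ and $\epsilon_{tj}$ makes the summands conditionally independent across $(t,j)$, so the conditional variance is a sum of independent terms with bounded second moment (controlled by $\E|Z_{t+h}|^2(\|X_t\|^2\|g\|_\infty+\sigma^2)$) and contributes $O(1/(T-|h|))$. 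For the variance of the conditional mean
\[
\E\bigl(\psi(v)\mid\mathcal{F}\bigr)=\frac{1}{T-|h|}\sum_t N_t\,Z_{t+h}\int_0^1 X_t(y)e^{\I v y}g(y)\D y,
\]
extracting $N_t$ via its independence from everything else reduces the question to the variance of a time average of the stationary sequence $W_t=Z_{t+h}\int X_t(y)e^{\I v y}g(y)\D y$; using Lemma \ref{lemma_aux1} together with \ref{assumption:A.2}--\ref{assumption:A.3} to bound $\sum_\tau|\cov(W_0,W_\tau)|$ uniformly in $v$, this term is again $O(1/(T-|h|))$. Combining the bias and variance bounds and invoking \ref{assumption:newB.8} to absorb the bias into the stochastic term delivers \eqref{eq:lemma2_claim1} with a constant $\tilde U$ uniform in $r\in\{0,1\}$, $T$ and $|h|<T$.
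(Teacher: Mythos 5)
Your proposal follows essentially the same route as the paper's proof: the same bias--variance split, the same conditioning-plus-Taylor treatment of the bias (justified by \eqref{eq:lemma_aux1_claim3}), the same Fourier-transform device to decouple the supremum over $x$, and a variance bound for the empirical characteristic-function-type quantity obtained by conditioning and invoking stationarity together with the summability \eqref{eq:lemma_aux1_claim2b} from Lemma \ref{lemma_aux1}. Your explicit law-of-total-variance split (conditional variance plus variance of the conditional mean) is merely a rephrasing of the paper's law-of-total-covariance bound on the autocovariance of $W_t=\sum_{j}G^{ZX}_{h,t}(x_{tj})e^{\I v x_{tj}}$, and the step you single out as hard --- summability in the lag of the autocovariances of the product process --- is dispatched in the paper by exactly the same (equally terse) appeal to Lemma \ref{lemma_aux1}.
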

\begin{proof}
The standard bias-variance decomposition of the left-hand side of \eqref{eq:lemma2_claim1} yields
\begin{equation}
\label{eq:lemma_2_proof_eq1}
\Ez{ \sup_{x\in[0,1]} \left| Q_r^{(h)} - M_{[Q_r^{(h)}]} \right| } \leq
\Ez{ \sup_{x\in[0,1]} \left| Q_r^{(h)} - \E Q_r^{(h)} \right| } +
\sup_{x\in[0,1]} \left| \E Q_r^{(h)} - M_{[Q_r^{(h)}]} \right|
\end{equation}
The bias therm, i.e. the second term on the right-hand side of \eqref{eq:lemma_2_proof_eq1}, is treated by first conditioning on $x_{tj}$ and then expanding in a Taylor series of order 2. The bias term is of order $O(B_C^2)$ uniformly in $|h|<T$.
The Taylor expansion is justified by \eqref{eq:lemma_aux1_claim3} in Lemma \ref{lemma_aux1}.
 The formulae \eqref{eq:lemma2_claim1_formulae_for_M} follow.

Now, using similar steps as in \eqref{eq:lemma_1_proof_eq1.7}, we obtain:
$$
Q_r^{(h)} = \frac{1}{T-|h|} 
\sum_{t=\max(1,1-h)}^{\min(T,T-h)} \sum_{j=1}^{N_t}
G_{h,t}^{ZX}(x_{tj}) \frac{1}{2\pi B_C}
\int e^{\I u \frac{x_{tj}-x}{B_C}} \zeta_r(u)du =
\frac{1}{2\pi} \int \varphi_r(v) e^{-\I vx} \zeta_r(v)dv
$$
where
$$ \varphi_r(v) = \frac{1}{T-|h|}
\sum_{t=\max(1,1-h)}^{\min(T,T-h)} \sum_{j=1}^{N_t}
G_{h,t}^{ZX}(x_{tj})
e^{\I v x_{tj}}.
$$
Replicating the steps in \eqref{eq:lemma_1_proof_eq2} we arrive at the inequality
\begin{equation}
\label{eq:lemma_2_proof_eq2}
\Ez{\sup_{x\in[0,1]} \left| Q_r^{(h)} - \E{Q_r^{(h)}}  \right| }
\leq
\left( \sup_{v\in[0,1]} \E\left| \varphi_r(v) - \E\varphi_r(v)  \right| \right)
\frac{1}{2\pi} \int \left| \zeta_r(Bv) \right| dv.
\end{equation}
The integral on the right-hand side of \eqref{eq:lemma_2_proof_eq2} is treated in \eqref{eq:lemma_1_proof_eq4}. Using \eqref{eq:lemma_1_proof_eq2.5}, it remains to bound the variance of $\varphi_r(v)$. First, remark that for an arbitrary stationary time-series $\{W_t\}$ with a summable autocovariance function $\{R^W_h\}_{h\in\mathbb{Z}}$, one has the bound:
$$
\var\left( \frac{1}{T} \sum_{t=1}^T W_t \right) = \frac{1}{T} \sum_{h=-T+1}^{T-1} R_h^W \left(1-\frac{|h|}{T}\right) \leq \frac{1}{T} \sum_{h=-\infty}^\infty \left| R_h^W \right| .
$$
Define $W_t = \sum_{j=1}^{N_t} G_{h,t}^{ZX}(x_{tj})e^{\I v x_{tj}}$. This sequence of complex-valued random variables is clearly a stationary (scalar) time series. By conditioning on $N_t$ and $x_{tj}$, and applying the law of total covariance, we can bound the autocovariance function of $\{W_t\}$ by $\left| R_h^{W} \right| \leq \Ez{N} \sup_{x\in[0,1]} \left| R_h^{ZX}(x) \right|$ which is summable in $h$ due to the claim \eqref{eq:lemma_aux1_claim2b} of Lemma \ref{lemma_aux1}. Hence
\begin{equation}
\label{eq:lemma_2_proof_eq3}
\var\left( \varphi_r(v) \right) \leq \frac{K}{T-|h|}
\end{equation}
for a constant $K$ uniform in $h$ and $T$.
Combining \eqref{eq:lemma_2_proof_eq1}, \eqref{eq:lemma_2_proof_eq2}, and \eqref{eq:lemma_2_proof_eq3} yields the claim \eqref{eq:lemma2_claim1}.
\end{proof}

\begin{lemma}
\label{lemma3}
For $r=0,1$:
\begin{enumerate}
\item under the conditions \ref{assumption:A.1} --- \ref{assumption:A.3}, \ref{assumption:newB.1} --- \ref{assumption:newB.3}, \ref{assumption:newB.8}, \ref{assumption:newB.9}
\begin{equation}
\label{eq:lemma3_claim1}
Q_r^\omega = M_{[Q_r^\omega]} + o_p(1),
\end{equation}
\item if \ref{assumption:newB.5} is further assumed, then
\begin{equation}
\label{eq:lemma3_claim2}
Q_r^\omega = M_{[Q_r^\omega]} + O_p\left( L \frac{1}{\sqrt{T}} \frac{1}{B_C} \right),
\end{equation}
\end{enumerate}
where all convergences are uniform in $\omega\in[-\pi,\pi]$ and $x\in[0,1]$, and where
\begin{equation}
\label{eq:lemma3_formulae}
M_{[Q_0^\omega]} = 2\pi \Ez{N} g(x) f_\omega^{ZX}(x), \qquad
M_{[Q_1^\omega]} = 0.
\end{equation}
\end{lemma}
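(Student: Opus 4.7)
My approach is a standard bias–variance decomposition adapted to the Bartlett-weighted sum defining $Q_r^\omega$, leveraging Lemma \ref{lemma2} for the lag-wise control and Lemma \ref{lemma_aux1} (specifically the summability of $\sup_y |R^{ZX}_h(y)|$) for the deterministic truncation error. The key observation, to be verified first, is the identity
\[
M_{[Q_r^\omega]} \;=\; \sum_{h\in\mathbb{Z}} e^{-\I h\omega} \, M_{[Q_r^{(h)}]},
\]
which follows for $r=0$ from the definition of $f^{ZX}_\omega$ and the formulae in \eqref{eq:lemma2_claim1_formulae_for_M}, and holds trivially for $r=1$. This lets me decompose
\[
Q_r^\omega - M_{[Q_r^\omega]} \;=\; \underbrace{\sum_{h=-L}^L \tfrac{T-|h|}{T}\, W_h\, e^{-\I h\omega}\bigl(Q_r^{(h)} - M_{[Q_r^{(h)}]}\bigr)}_{=:\,\mathcal{V}_r^\omega(x)}
\;+\; \underbrace{\sum_{h=-L}^L \!\bigl(\tfrac{T-|h|}{T}W_h - 1\bigr) e^{-\I h\omega} M_{[Q_r^{(h)}]} \;-\; \sum_{|h|>L} e^{-\I h\omega} M_{[Q_r^{(h)}]}}_{=:\,\mathcal{B}_r^\omega(x)}.
\]

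\textbf{Stochastic term $\mathcal{V}_r^\omega$.} Because $|e^{-\I h\omega}|=1$ and $0 \le (T-|h|)W_h/T \le 1$, the $\omega$-supremum is absorbed trivially and the triangle inequality gives $\sup_{\omega,x}|\mathcal{V}_r^\omega(x)| \le \sum_{h=-L}^L \sup_x |Q_r^{(h)}-M_{[Q_r^{(h)}]}|$. Since $L=o(T)$ under \ref{assumption:newB.9}, I may assume $|h|\le L < T/2$, so $T-|h|\ge T/2$; Lemma \ref{lemma2} then yields
\[
\Ez{\sup_{\omega,x}|\mathcal{V}_r^\omega(x)|} \;\le\; \sum_{h=-L}^L \frac{\tilde U}{\sqrt{T-|h|}\,B_C} \;=\; O\!\left(\frac{L}{\sqrt{T}\,B_C}\right),
\]
which is $o(1)$ under \ref{assumption:newB.9}. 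Markov's inequality converts this into $\mathcal{V}_r^\omega = O_p(L/(\sqrt{T}B_C))$ uniformly in $(\omega,x)$, giving both the qualitative and quantitative parts of the stochastic contribution.

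\textbf{Bias term $\mathcal{B}_r^\omega$.} Since $|M_{[Q_0^{(h)}]}| \le C\, \sup_x|R^{ZX}_h(x)|$ and $\sum_h \sup_x|R^{ZX}_h(x)| < \infty$ by \eqref{eq:lemma_aux1_claim2b} of Lemma \ref{lemma_aux1}, the tail sum $\sum_{|h|>L}\sup_x|M_{[Q_r^{(h)}]}|$ vanishes as $L\to\infty$. For the truncated sum, the elementary inequality $|\tfrac{T-|h|}{T}W_h - 1|\le |h|/T + |h|/L + |h|^2/(TL)$ combined with dominated convergence (the pointwise limit is $0$ for each $h$, and the summand is dominated by $\sup_x|R^{ZX}_h(x)|$ uniformly) gives $\sup_{\omega,x}|\mathcal{B}_r^\omega(x)|\to 0$, establishing claim \eqref{eq:lemma3_claim1}. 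Under the additional weak-dependence condition \ref{assumption:newB.5}, I will show by reindexing the filter representation $R^{ZX}_h(y)=\sum_k \int b_k(x)R^X_{h-k}(x,y)\D x$ from Lemma \ref{lemma_aux1} that $\sum_h |h|\sup_x|R^{ZX}_h(x)|<\infty$, upgrading the bias estimate to $O(1/L + L/T)$, which is dominated by the stochastic rate $O(L/(\sqrt{T}B_C))$ under \ref{assumption:newB.9} (noting $B_C\to 0$), hence \eqref{eq:lemma3_claim2} follows.

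\textbf{Main obstacle.} The cleanest technical point is verifying $\sum_h|h|\sup_x|R^{ZX}_h(x)|<\infty$ from \ref{assumption:A.3} and \ref{assumption:newB.5}: the reindexing $h = m+k$ produces a cross term $\sum_k|k|\|b_k\|\cdot\sum_m \sup|R^X_m|$, which requires a weighted-summability argument using the summability of $\{\|\mathcal{B}_k\|\}$ and the fast decay of $\sup|R^X_m|$ implied by \ref{assumption:newB.5}. Aside from this bookkeeping, the rest of the proof is a routine combination of Lemma \ref{lemma2} with dominated convergence and Markov's inequality; the uniformity in $\omega$ is handled costlessly by the unit-modulus complex exponentials.
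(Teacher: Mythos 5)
Your proposal follows essentially the same route as the paper's proof: the identical decomposition of $Q_r^\omega - M_{[Q_r^\omega]}$ into a Bartlett-weighted stochastic sum controlled lag-by-lag via Lemma \ref{lemma2} (yielding $O_p(L T^{-1/2} B_C^{-1})$) plus deterministic weighting/truncation bias terms handled through the summability of $\{M_{[Q_r^{(h)}]}\}$ in the supremum norm (the paper invokes Kronecker's lemma where you use dominated convergence, an immaterial difference). The one point you flag as the ``main obstacle'' --- deriving $\sum_h |h|\sup_x|R^{ZX}_h(x)|<\infty$ from \ref{assumption:A.3} and \ref{assumption:newB.5} --- is simply asserted without proof in the paper, so your extra care there (and your correct observation that the reindexing produces a cross term needing $\sum_k |k|\,\|b_k\|$-type control) goes beyond, rather than diverges from, the published argument.
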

\begin{proof}
We start with the observation that the sum $\sum_{h\in\mathbb{Z}} M_{[Q_r^{(h)}]}$ converges absolutely in the supremum norm and $\sum_{h\in\mathbb{Z}} M_{[Q_r^{(h)}]} e^{-\I h\omega} = M_{[Q_r^\omega]}$.
Moreover,
\begin{multline}
\label{eq:lemma3_proof_eq1}
\left| Q_r^\omega - M_{[Q_r^\omega]} \right| =
\left| \sum_{h=-L}^L \left( 1 - \frac{|h|}{T} \right) W_h e^{-\I h\omega}Q_r^\omega
- \sum_{h\in\mathbb{Z}} M_{[Q_r^\omega]} e^{-\I h\omega} \right|
\leq\\\leq
\sum_{h=-L}^L \left( 1-\frac{|h|}{T} \right)W_h
\left| Q_r^\omega - M_{[Q_r^{(h)}]} \right|
+
\sum_{h=-L}^L
\left(\frac{T-|h|}{T} W_h - 1\right) W_h \left| M_{[Q_r^{(h)}]} \right|
+
\sum_{|h|>L} \left| M_{[Q_r^{(h)}]} \right|
\leq\\\leq
\sum_{h=-L}^L \left( 1-\frac{|h|}{T} \right)\left( 1-\frac{|h|}{L} \right)
\left| Q_r^\omega - M_{[Q_r^{(h)}]} \right|
+
\frac{1}{L^2} \sum_{h=-L}^L |h|^2 \left| M_{[Q_r^{(h)}]} \right|
+
\frac{2}{L} \sum_{h=-L}^L |h| \left| M_{[Q_r^{(h)}]} \right|
+
\sum_{|h|>L} \left| M_{[Q_r^{(h)}]} \right|
\end{multline}
The first term on the right hand side is bounded in the supremum by
\begin{multline*}
\Ez{ \sup_{\omega\in[-\pi,\pi]} \sup_{x\in[0,1]}
\sum_{h=-L}^L \left( 1-\frac{|h|}{T} \right)\left( 1-\frac{|h|}{L} \right)
\left| Q_r^\omega - M_{[Q_r^{(h)}]} \right|
} \leq\\\leq
\sum_{h=-L}^L \left( 1-\frac{|h|}{T} \right)\left( 1-\frac{|h|}{L} \right)
\Ez{ \sup_{x\in[0,1]}
\left| Q_r^\omega - M_{[Q_r^{(h)}]} \right|}
\leq
2L\tilde{U}\frac{1}{\sqrt{T-|h|}}\frac{1}{B_C}
= O\left( L \frac{1}{\sqrt{T-|h|}}\frac{1}{B_C} \right).
\end{multline*}
The last three terms on the right-hand side of \eqref{eq:lemma3_proof_eq1} converge uniformly to zero due to the summability of $\sum_{h\in\mathbb{Z}} M_{[Q_r^{(h)}]}$ in the supremum norm and by Kroncker's lemma. Hence the claim \eqref{eq:lemma3_claim1}.

If \ref{assumption:newB.5} is further assumed then
$\sum_{h\in\mathbb{Z}}|h| \left| M_{[Q_r^{(h)}]} \right| < \infty$
and therefore the last three terms on the right-hand side of \eqref{eq:lemma3_proof_eq1} are of order $O(1/L)$, hence \eqref{eq:lemma3_claim2}.
\end{proof}

\begin{proof}[Proof of Proposition \ref{prop:asymptotics_of_f^ZX}]
\eqref{eq:proposition1_proof_explicite_formula}
The proof follows directly from formula \eqref{eq:proposition1_proof_explicite_formula} and Lemmas \ref{lemma1}, \ref{lemma2}, \ref{lemma3}:
\begin{align*}
Q_0^\omega S_2 - Q_1^\omega S_1 &= 2\pi\left( \E N \right)^2 g(x)^2 f_\omega^{ZX}(x) \sigma_K^2 + o_p(1),\\
S_0 S_2 - S_1^2 &= \left( \E N\right)^2 g(x)^2 f_\omega^{ZX}(x) \sigma_K^2 + O_p\left(\frac{1}{\sqrt{T}}\right)
\end{align*}
uniformly in $\omega\in[-\pi,\pi]$ and $x\in[0,1]$. Hence
\begin{equation}
\hat{f}_\omega^{ZX}(x) = f_\omega^{ZX}(x) + o_p(1)
\end{equation}
uniformly in $\omega\in[-\pi,\pi]$ and $x\in[0,1]$. If the condition \ref{assumption:newB.5} is further assumed, we replace $o_p(1)$ by $O_p(L/(\sqrt{T}B_C))$.
\end{proof}

%%%%%%%%%%%%%%%%%%%%%%%%%%%%%%%%%%%%%%%%%%%%%%%%%%%%%%%%%%%%%%%%%%%%%%%%%%%%%%%%%%%%%%%%%%%%%%%%%%%%%%
%%%%%%%%%%%%%%%%%%%%%%%%%%%%%%%%%%%%%%%%%%%%%%%%%%%%%%%%%%%%%%%%%%%%%%%%%%%%%%%%%%%%%%%%%%%%%%%%%%%%%%
%%%%%%%%%%%%%%%%%%%%%%%%%%%%%%%%%%%%%%%%%%%%%%%%%%%%%%%%%%%%%%%%%%%%%%%%%%%%%%%%%%%%%%%%%%%%%%%%%%%%%%

\subsection{Proof of Theorem \ref{thm:asymptotics_of_B_Tikh}}

Thanks to the fact that
$$
\sup_{k\in\mathbb{Z}} \left\| \hat{\mathcal{B}}_k - \mathcal{B}_k \right\| \leq
\frac{1}{2\pi} \int_{-\pi}^\pi \left\| \hat{\mathscr{B}}_\omega - \mathscr{B}_\omega \right\|\D\omega
$$
the proof reduces to establishing the convergence rate of the frequency response operator $\mathscr{B}_\omega$.

For the Tikhonov regularisation parameter $\rho = \rho(T)>0$ define
$$ \tilde{\mathscr{B}}_\omega = \mathscr{F}^{ZX}_\omega \left( \mathscr{F}^{X}_\omega + \rho \mathcal{I} \right)^{-1}, \qquad\omega\in[-\pi,\pi]. $$
Further, split the desired difference into three terms:
\begin{multline*}
\left\| \hat{\mathscr{B}}_\omega - \mathscr{B}_\omega \right\| \leq
\left\| \hat{\mathscr{B}}_\omega - \tilde{\mathscr{B}}_\omega \right\|
+
\left\| \tilde{\mathscr{B}}_\omega - \mathscr{B}_\omega \right\|
\leq\\\leq
\underbrace{
\left\| \left(\mathscr{F}^{ZX}_\omega-\hat{\mathscr{F}}^{ZX}_\omega \right) \left(\mathscr{F}^{X}_\omega+\rho\mathcal{I} \right)^{-1}  \right\|}_{\mathscr{S}_1}
+
\underbrace{
\left\| \hat{\mathscr{F}}^{ZX}_\omega
\left[ \left(\mathscr{F}^{X}_\omega+\rho\mathcal{I} \right)^{-1} - \left(\hat{\mathscr{F}}^{X}_\omega+\rho\mathcal{I} \right)^{-1} \right]
\right\|}_{\mathscr{S}_2}
+
\underbrace{\left\| \tilde{\mathscr{B}}_\omega - \mathscr{B}_\omega \right\|}_{\mathscr{S}_3}
.
\end{multline*}

We bound each of the terms $\mathscr{S}_1$, $\mathscr{S}_2$, and $\mathscr{S}_3$ and show the convergence of the bound to zero uniformly in $\omega\in[-\pi,\pi]$.
We start with bounding $\mathscr{S}_1$.
\begin{equation}
\label{eq:thm2_proof_mathscr_S_1}
\mathscr{S}_1
\leq
\left\|\mathscr{F}^{ZX}_\omega-\hat{\mathscr{F}}^{ZX}_\omega \right\|
\left\|\left(\mathscr{F}^{X}_\omega+\rho\mathcal{I} \right)^{-1}  \right\|
\leq 
\left\|\mathscr{F}^{ZX}_\omega-\hat{\mathscr{F}}^{ZX}_\omega \right\|
\frac{1}{\rho}
\end{equation}
Now, bounding $\mathscr{S}_2$:
\begin{equation}
\label{eq:thm2_proof_mathscr_S_2}
\mathscr{S}_2 =
\left\| \hat{\mathscr{F}}^{ZX}_\omega
\left[
\left(\mathscr{F}^{X}_\omega+\rho\mathcal{I} \right)^{-1}
\left(\hat{\mathscr{F}}^{X}_\omega - \mathscr{F}^{X}_\omega \right)
\left(\hat{\mathscr{F}}^{X}_\omega+\rho\mathcal{I} \right)^{-1} \right]
\right\|
\leq
\left\| \hat{\mathscr{F}}^{ZX}_\omega \right\|
\frac{1}{\rho}
\left\| \hat{\mathscr{F}}^{X}_\omega - \mathscr{F}^{X}_\omega \right\|
\frac{1}{\rho}
\end{equation}
The right-hand sides of \eqref{eq:thm2_proof_mathscr_S_1} tend to zero uniformly in $\omega\in[-\pi,\pi]$ as $T\to\infty$ thanks to assumption \ref{assumption:D} and Proposition \ref{prop:asymptotics_of_f^ZX}.
The right-hand side of \eqref{eq:thm2_proof_mathscr_S_2} tend to zero uniformly in $\omega\in[-\pi,\pi]$ as $T\to\infty$ thanks to assumption \ref{assumption:D} and \citet[Theorem~2]{rubin2020sparsely}.

It remains to handle the deterministic term $\mathscr{S}_3$.
Since the spectral density operator $\mathscr{F}^X_\omega$ is self-adjoint and trace-class, it admits the series decomposition
$$ \mathscr{F}^X_\omega = \sum_{j=1}^\infty \lambda_j^\omega \varphi_j^\omega\otimes\varphi_j^\omega =
\sum_{j=1}^\infty \lambda_j^\omega
\left\langle \varphi_j^\omega , \cdot\right\rangle \varphi_j^\omega 
$$
where $\{\lambda_j^\omega\}_{j\in\mathbb{N}}$ and $\{\varphi_j^\omega\}_{j\in\mathbb{N}}$ are the harmonic (a.k.a. dynamic) eigenvalues and the harmonic (dynamic) eigenfunctions at frequency $\omega\in[-\pi,\pi]$ \citep{panaretos2013cramer,hormann2015dynamic}.
From the relation \eqref{eq:spectral_normal_equation} we obtain the following series expansions
\begin{align*}
\mathscr{F}^{ZX}_\omega  &=
\sum_{j=1}^\infty \lambda_j^\omega \left\langle \varphi_j^\omega , \cdot\right\rangle \mathscr{B}_\omega\varphi_j^\omega, \\
\mathscr{F}^{ZX}_\omega \left( \mathscr{F}^X_\omega \right)^{-1} &=
\sum_{j=1}^\infty
\left\langle \varphi_j^\omega , \cdot\right\rangle \mathscr{B}_\omega\varphi_j^\omega, \\
\mathscr{F}^{ZX}_\omega \left( \mathscr{F}^X_\omega + \rho\mathcal{I} \right)^{-1} &=
\sum_{j=1}^\infty
\frac{\lambda_j^\omega}{\lambda_j^\omega+\rho}
\left\langle \varphi_j^\omega , \cdot\right\rangle \mathscr{B}_\omega\varphi_j^\omega,
\end{align*}
Combining the above expansions yields
\begin{multline}
\label{eq:thm2_proof_mathscr_S_3}
\mathscr{S}_3^2 = \left\| \tilde{\mathscr{B}}_\omega - \mathscr{B}_\omega \right\|^2
\leq
\sum_{j=1}^\infty \left\|
\left( \frac{\lambda_j^\omega}{\lambda_j^\omega+\rho} -1 \right)
\left\langle \varphi_j^\omega,\cdot \right\rangle
\mathscr{B}_\omega\varphi_j^\omega
\right\|^2
\leq\\\leq
\sum_{j=1}^\infty \left( \frac{\lambda_j^\omega}{\lambda_j^\omega+\rho} -1 \right)^2
\left\| \mathscr{B}_\omega \varphi_j^\omega \right\|^2
=
\sum_{j=1}^\infty
\left(\frac{\rho}{\lambda_j^\omega+\rho}\right)^2
\left\| \mathscr{B}_\omega \varphi_j^\omega \right\|^2.
\end{multline}
Since $\sum_{j=1}^\infty \left\| \mathscr{B}_\omega \varphi_j^\omega \right\|^2 = \sum_{j=1}^\infty \|\mathscr{B}_\omega\|^2 <\infty$ and $\rho / (\lambda_j^\omega + \rho) \to 0$ as $\rho \searrow 0$ for each $j\in\mathbb{N}$,
the right-hand side of \eqref{eq:thm2_proof_mathscr_S_3} tends to zero, completing the proof.

%%%%%%%%%%%%%%%%%%%%%%%%%%%%%%%%%%%%%%%%%%%%%%%%%%%%%%%%%%%%%%%%%%%%%%%%%%%%%%%%%%%%%%%%%%%%%%%%%%%%%%
%%%%%%%%%%%%%%%%%%%%%%%%%%%%%%%%%%%%%%%%%%%%%%%%%%%%%%%%%%%%%%%%%%%%%%%%%%%%%%%%%%%%%%%%%%%%%%%%%%%%%%
\subsection{Proof of Theorem \ref{thm:asymptotics_of_B_trunc}}

The proof of this theorem is an application of the result by \citet[Theorem~1]{hormann2015estimation}. Their theorem requires that the spectral density operators $\{\mathscr{F}^X_\omega\}_{\omega\in[-\pi,\pi]}$ are estimated with a certain rate,  say $(\psi^X_T)_{T=1}^\infty$, and the cross-density operators $\{\mathscr{F}^{ZX}_\omega\}_{\omega\in[-\pi,\pi]}$ with another rate, say $(\psi_T^{ZX})_{T=1}^\infty$.
For $T\in\mathbb{N}$ we put $\psi^X_T = L T^{-1/2} B_R^{-2}$ and $\psi_T^{ZX} = L T^{-1/2} B_C^{-1}$ by \citet[Theorem~2]{rubin2020sparsely} and Proposition \ref{prop:asymptotics_of_f^ZX} respectively. Note that the convergence rates for the spectral density kernels $\{f_\omega^X\}_{\omega\in[-\pi,\pi]}$ and the cross-spectral density kernels $\{f_\omega^{ZX}\}_{\omega\in[-\pi,\pi]}$ are in the supremum norm which is stronger than the operator norm of the assumptions of \citet[Theorem~1]{hormann2015estimation}.

We can now replicate all the steps of the proof of \citet[Theorem~1]{hormann2015estimation} and see that they only require the above stated rates and the assumptions \ref{assumption:C}, \ref{assumption:E.1}, and \ref{assumption:E.2}.

It is worth noting that the results by \citet[Theorem~1]{hormann2015estimation} are derived under the assumption of $L^p$-$m$-approximability as opposed to cumulant mixing conditions considered in this paper. Nevertheless, they use the assumption of the $L^p$-$m$-approximability only to prove the rate of convergence of the spectral density and cross-spectral density estimators in \citep[Lemma~1]{hormann2015estimation}. Once these rates are established, the proof of \citep[Theorem~1]{hormann2015estimation} does not use the $L^p$-$m$-approximability. Therefore our proof of Theorem~\ref{thm:asymptotics_of_B_trunc} is indeed its simple adaptation which takes as inputs the convergence rates $\psi^X_T = L T^{-1/2} B_R^{-2}$ and $\psi_T^{ZX} = L T^{-1/2} B_C^{-1}$ proved under the cumulant mixing conditions.

%\section*{Acknowledgements}
%We wish to thank the editors and reviewers for their thoughtful and constructive
%comments.

\section*{Data availability statement}
The MATLAB code and the results that support the findings of this study are openly available on GitHub at
\href{http://doi.org/10.5281/zenodo.3190740}{DOI 10.5281/zenodo.3190740}, reference \cite{rubin2019github_lagged_regression}.

%%%%%%%%%%%%%%%%%%%%%%%%%%%%%%%%%%%%%%%%%%%%%%%%%%%%%%%%%%%%%%%%%%%%%%%%%%%%%%%%%%%%%%%%%%%%%%%%%%%%%%
%%%%%%%%%%%%%%%%%%%%%%%%%%%%%%%%%%%%%%%%%%%%%%%%%%%%%%%%%%%%%%%%%%%%%%%%%%%%%%%%%%%%%%%%%%%%%%%%%%%%%%
%%%%%%%%%%%%%%%%%%%%%%%%%%%%%%%%%%%%%%%%%%%%%%%%%%%%%%%%%%%%%%%%%%%%%%%%%%%%%%%%%%%%%%%%%%%%%%%%%%%%%%
\bibliographystyle{imsart-nameyear}
\bibliography{biblio}

\end{document}